\title{Outcome-Informed Weighting for Robust ATE Estimation}
\author[1]{Linying Yang\thanks{Corresponding email: linying.yang@stats.ox.ac.uk}}
\author[1]{Robin J.~Evans}
\affil[1]{Department of Statistics, University of Oxford}
\date{\today}
\providecommand{\keywords}[1]{\textit{Key words:} #1}
\theoremstyle{plain}
\newtheorem{theorem}{Theorem}
\newtheorem{proposition}{Proposition}
\newtheorem{example}{Example}
\newtheorem{assumption}{Assumption}
\newtheorem{definition}{Definition}
\newtheorem{remark}{Remark}
\theoremstyle{remark}
\DeclareMathOperator{\Var}{Var}
\DeclareMathOperator{\Cov}{Cov}
\renewcommand{\P}{\mathbbm{P}}
\begin{document}


\maketitle

\newcommand\floor[1]{\lfloor#1\rfloor}
\newcommand\ceil[1]{\lceil#1\rceil}
\newcommand\ind{\protect\mathpalette{\protect\independenT}{\perp}}
\def\independenT#1#2{\mathrel{\rlap{$#1#2$}\mkern2mu{#1#2}}}

\newcommand{\bigb}[2]{\left(\frac{#1}{#2}\right)}
\newcommand{\E}{\mathbb{E}}

\newcommand{\Ysbar}{\bar{Y}_{S}}
\newcommand{\xra}[1]{\overset{#1}{\;\rightsquigarrow\;}}

\newcommand{\linying}[1]{{\textcolor{teal}{{(Linying:} #1)}}}
\newcommand{\robin}[1]{{\textcolor{blue!50!black}{{[Robin:} #1]}}}

\begin{abstract}
Reliable causal effect estimation from observational data requires  adjustment for confounding and sufficient overlap in covariate distributions between treatment groups. However, in high-dimensional settings, lack of overlap often inflates the variance and weakens the robustness of inverse propensity score weighting (IPW) based estimators. Although many approaches that rely on covariate adjustment have been proposed to mitigate these issues, we instead shift the focus to the outcome space. In this paper, we introduce the Augmented Marginal outcome density Ratio (AMR) estimator, an outcome-informed weighting method that naturally filters out irrelevant information, alleviates practical positivity violations and outperforms standard augmented IPW and covariate adjustment-based methods in terms of both efficiency and robustness. Additionally, by eliminating the need for strong a priori assumptions, our post-hoc calibration framework is also effective in settings with high-dimensional covariates. We present experimental results on synthetic data, the NHANES dataset and text applications, demonstrating the robustness of AMR and its superior performance under weak overlap and high-dimensional covariates.

\keywords causal inference, average treatment effect estimation, efficient influence function, double robustness, weak overlap.
\end{abstract}

\section{Introduction}

Doubly robust estimators, such as Augmented Inverse Propensity Score Weighting (AIPW; \citealp{robins1994estimation}), offer the advantage of providing ``two chances'' to perform estimation correctly and still obtain a consistent estimator. Specifically, this means that the estimator of causal parameter of interest---in the context of this paper, the average treatment effect (ATE)---remains consistent if at least one of the propensity score and outcome regression models is correctly specified. However, due to inverse probability weighting by the propensity score, these estimators can suffer from practical positivity violation \citep{petersen2012diagnosing}, where some covariates predict the treatment so well that our estimated weights become extremely large; this inflates the efficiency bound and estimation variance \citep{robins2007comment}.

Not all information needs to be included in the adjustment via a propensity score. For instance, instrumental variables---that is, a variable that only affects the treatment---do not aid in estimating treatment effects, but these may substantially increase estimation variance. A rule of thumb that we can take from recent work by \citet{henckel22graphical} and \citet{rotnitzky2020efficient} is that we should only include variables that predict the \emph{outcome} well. Essentially, we are concerned with evaluating the differences in outcomes resulting from treatment interventions, and controlling the variation in those observed outcomes.

This leads to the concept of the marginal density ratio \citep{taufiq2024marginal}. Instead of manually evaluating propensity scores or selecting features in the pre-treatment covariate space, \emph{we shift our focus to the outcome space}, allowing the observed outcomes to \textit{guide the weighting} and determine which information should be included. In this paper, we take the Marginal outcome density Ratio estimator (MR) of \citet{taufiq2024marginal}, and introduce the Augmented Marginal outcome density Ratio estimator (AMR), which attains double robustness and asymptotic normality under certain regularity conditions; we demonstrate the advantages of these estimators in filtering necessary information, obtaining treatment effects more effectively both in small samples and asymptotically than their direct counterparts, IPW and AIPW.

Although numerous methods have been proposed to address complex covariate structures, they typically rely on imposing \emph{a priori} assumptions. While such assumptions can improve estimation efficiency, they often incur the costs of model misspecification and error propagation. In this paper, we show that rather than depending on a priori covariate selection or representation learning, MR and AMR achieve reduced estimation variance through \emph{post-hoc} information filtering in their design. To our knowledge, no prior work has contrasted methods based on a priori assumptions with those using post-hoc calibration.

The paper is organized as follows. We begin by formulating the problem and introducing the relevant notations in \Cref{sec:background}. In this section, we also review existing estimation methods---including re-weighting and balancing---which fundamentally involve selecting adjustment sets or learning representations to address the challenges of high-dimensional problems and lack of overlap, motivations that are shared with our proposed approaches. Formal definitions of MR and AMR are introduced in \Cref{sec:method}. Next, we describe the rationale for incorporating outcome information into the weighting process, which underlies our estimators. We illustrate the benefits of leveraging outcome information regarding distilling essential information in \Cref{sec:outcome_driven_weighting}. In \Cref{subsec:connect_efficient_set}, we discuss how the proposed method relates to current work on learning efficient adjustment sets, highlighting the flexibility and robustness of our outcome-informed, post-hoc univariate adjustment. We also provide a detailed explanation on how MR and AMR smooth the weights thus reducing the variance in the same section.

We then turn our attention to the asymptotic properties of our proposed estimators, especially detailing AMR's double robustness and asymptotic behavior in \Cref{sec:asymptotic} as an enhancement over MR. Since achieving asymptotic normality requires that the univariate regression used for weight estimation converges at specific rates---and given that our approach introduces an additional layer of weight estimation---we also provide guidance on when regression-based techniques are particularly effective in attaining the desired properties in \Cref{sec:weights_estimation}.

In \Cref{subsec:synthetic_experiment}, we conduct synthetic experiments that demonstrate the superior performance of AMR as a doubly robust, post-hoc adjustment method in settings with either high-dimensional covariate or lack of overlap. In \Cref{subsec:experiment_real}, we illustrate our methods on the NHANES dataset. Recognizing the strength of MR and AMR in high-dimensional contexts, we further compare our estimators against alternative approaches for causal effect estimation on the News and Amazon review data, where the text is used as a collection of covariates, in \Cref{subsec:experiment_text} and \Cref{sec:amazon}. 

By capitalizing on outcome information, our approach opens up multiple avenues for applications and extensions. We conclude in \Cref{sec:discussion} by outlining several potential extensions and suggesting directions for future research.

\section{Background}\label{sec:background}
\subsection{Problem set-up and notation}

We use $\P$ to denote the true distribution, and $\P_n$ for the empirical measure, $\E$ represents an expectation, and $\E f= \int f \mathrm{d}\P$.
Sample averages are written as $\P_n[f] = \frac{1}{n}\sum_{i=1}^n f(Z_i)$. We use $\rightsquigarrow$ to denote convergence in distribution, and $\overset{P}{\longrightarrow}$ to denote convergence in probability. We write $\left\lVert f\right\rVert_2 = \left[\int f(z)^2 \mathrm{d}\P(z)\right]^{\frac{1}{2}}$ for the $L_2(\P)$ norm of a (possibly) random function $f$. We adapt the usual symbols $\operatorname{O}_\P(\cdot)$ and $\operatorname{o}_\P(\cdot)$; by writing $X_n = \operatorname{O}_\P(a_n)$, we mean $X_n/a_n$ is bounded in $\P$-probability, and $X_n = \operatorname{o}_\P(a_n)$ means $X_n/a_n \overset{P}{\longrightarrow}0$ under $\P$.

Assume we have observations $Z=(X,A,Y)$, where $X$ is a vector of $p$ pre-treatment covariates with support $\mathcal{X}$, $A \in \{0,1\}$ is a binary treatment indicator, and $Y$ is the observed outcome. We assume the joint distribution of $(X,A,Y)$, denoted by $\P$, belongs to a collection of probability measures $\mathcal{P}$. We use the potential outcomes framework (\citealt{neyman1923applications} and \citealt{rubin1974estimating}) and write the outcome under a treatment $a \in \{0,1\}$ as $Y^a$. We denote the conditional mean of these by $\mu^a(X) := \E\left[Y^a\middle| X\right]$, for $a\in \{0,1\}$ and the propensity score by $\pi(X):=P(A=1\mid X)$. We sometimes omit the arguments and simply write $\mu^a$, $\pi$ rather than $\mu^a(X)$, $\pi(X)$ when these are clear from the context. Similarly, for their estimates $\hat{\mu}^a(X)$, $\hat{\pi}(X)$ we occasionally write $\hat{\mu}^a$, $\hat{\pi}$ for simplicity. The target parameter of interest in this paper is the average treatment effect (ATE), denoted $\theta = \E\left[Y^1-Y^0\right]$; we write an associated  estimator as $\hat{\theta}$.

Throughout the paper, we take the following standard assumptions as holding:

\begin{assumption}[Identification]\label{ass:common}
\;
\begin{itemize}
    \item \textbf{Positivity:} $0<\pi(x)<1,\quad \forall x \in \mathcal{X}$;
    \item \textbf{Unconfoundedness:} $ \{Y^0,Y^1\} \ind A \mid X$;
    \item \textbf{Consistency:} $A=a$ implies $Y=Y^a$, $a\in\{0,1\}$.
\end{itemize}
\end{assumption}

 Under \Cref{ass:common}, the observed outcome can be expressed as $Y=AY^1+(1-A)Y^0$. The mean of each potential outcome $\E Y^a$ is identified as $ \E \mu^a(X) =  \E_X\E\left[Y\middle|X, A=a\right]$, thus $\theta$ is identifiable as $\theta =\E_X\E\left[Y\middle|X, A=1\right] - \E_X\E\left[Y \middle|X, A=0\right]$.

Next, define $Y^{*} = Y - \mu^{*}(X)$, where $\mu^{*}(X) = \pi(X) \mu^0(X) + \left\{1-\pi(X)\right\}\mu^1(X)$. Similarly, we define the estimated counterparts as $\hat{Y}^{*} = Y - \hat{\mu}^{*}(X)$, where $\hat{\mu}^{*}(X) = \hat{\pi}(X)\hat{\mu}^0(X) + \left\{1-\hat{\pi}(X)\right\}\hat{\mu}^1(X)$. For simplicity, we sometimes use the notation $h(A,X) = \frac{A-\pi(X)}{\pi(X)\left\{1-\pi(X)\right\}}$ and $\hat{h}(A,X) = \frac{A-\hat{\pi}(X)}{\hat{\pi}(X)\left\{1-\hat{\pi}(X)\right\}}$ to represent the so-called `clever covariates' commonly used in the Targeted Maximum Likelihood Estimation (TMLE; \citealp{van2011targeted}) literature. The IPW estimator and AIPW estimator can thus be written as $\P_n \big[\hat{h}(A,X)Y\big]$ and $\P_n\big[\hat{h}(A,X)\hat{Y}^{*}\big]$ respectively.

\subsection{Efficient estimators and efficiency bound of ATE}
With $\theta$ being the target functional and denote the true nuisance parameters as $\xi = (\mu^1,\mu^0,\pi)$, we have its efficient influence function (EIF):
\begin{align}
    \varphi(Z;\theta,\xi) = \frac{A}{\pi(X)}\Bigl\{Y-\mu^1\left(X\right)\Bigr\}   - \frac{1-A}{1-\pi(X)}\Bigl\{Y-\mu^0\left(X\right)\Bigr\} +  \Bigl\{\mu^1(X)-\mu^0(X)\Bigr\} -\theta\;
\label{eqn:eif}
\end{align}
which can also be written as $\varphi(Z;\theta,\xi) = h(A,X)Y^{*}-\theta$. 

AIPW and TMLE provide efficient estimators, developed by solving the estimating equation $\P_n\varphi(Z;\hat{\xi},\hat{\theta})=0$, where $\hat{\xi}=(\hat{\mu}^0,\hat{\mu}^1,\hat{\pi})$. Under certain regularity conditions, these estimators are asymptotically normal 
\begin{align*}
    \sqrt{n}\left(\hat{\theta} - \theta\right) \;\rightsquigarrow\; \mathcal{N}\left(0, \Var\left\{\varphi(Z;\theta,\xi)\right\}\right),
\end{align*}
where $\Var\left\{\varphi\left(Z;\theta,\xi\right)\right\}$ is the nonparametric efficiency bound, so no other regular estimator can have variance smaller than $\Var\left\{\varphi(Z;\theta,\xi)\right\}$ in the asymptotic sense. We refer to \citet{van2000asymptotic, tsiatis2006semiparametric} and 
\citet{kennedy2024semiparametric} for detailed reviews.

\subsection{Related work}\label{subsec:related_work}
The no unmeasured confounding and positivity assumptions are crucial in the literature on average treatment effect estimation. Research has shown that including more confounders predictive of both treatment and control can reduce bias from unmeasured confounders. However, this also increases the risk of positivity violation and, consequently, variance inflation due to the inverse probability weighting format using propensity scores.  We can think of this as a bias-variance trade-off, since including more variables will generally reduce the bias,\footnote{Though not always, as the \emph{M-bias} example shows \citep{greenland1999causal}.} but always improves the predictive power of the propensity score and therefore increases the variance.  In finite samples, the practical positivity violations \citep{petersen2012diagnosing} are especially pronounced when the covariate space is high-dimensional \citep{d2021overlap}.

To address high-dimensional covariates and the resulting practical positivity issues, the literature has explored various strategies for covariate selection. One approach focuses on identifying which covariates to adjust for when their roles are known. For instance, \citet{rosenbaum2002overt} advocate conditioning on all pre-treatment covariates---a recommendation later reinforced by \citet{rubin2009author}. In contrast, \citet{vander2011new} argue for adjusting only for covariates that influence either the treatment or the outcome, \citet{brookhart2006variable} advise adjusting solely for those that affect the outcome, and \citet{pearl2012class} caution that including certain variables, such as instrumental variables, can amplify confounding bias.

An alternative bypasses explicit variable selection based on their roles by directly mitigating extreme propensity scores. The Covariate Balancing Propensity Score (CBPS; \citealp{imai2014covariate}) and its high-dimensional extension, hd-CBPS \citep{ning2020robust}, adjust logistic regression for propensity score estimation via covariate balancing. A significant body of work on Collaborative Targeted Maximum Likelihood Estimation (CTMLE; \citealp{van2010collaborative, diaz2018doubly, ju2019collaborative, benkeser2020nonparametric}) is built on the idea that jointly learning the propensity score and outcome regression models can optimize the bias-variance trade-off. Built on the outcome regression estimation updating procedure in original TMLE, the CTMLE selects the propensity score model from a pool of candidates. Each propensity model on the variable adjustment set is treated as a candidate model for selection and evaluated to see if outcome regression bias is reduced, which is time-consuming. There are also many decision points in the procedure, making the overall estimation strategy rather complicated. It was argued in \citet{van2010collaborative} that if the outcome regression estimation succeeds in explaining most of the true outcome regression, only little inverse propensity weighting is needed. Based on this idea,  \citet{diaz2018doubly} presents a more complex---and consequently less transparent---method, which learns the adapted propensity score (e-score) by first regressing the outcome bias on the estimated propensity score  $r(X) = \E\left[Y-\hat{\mu}^a(X)\,\middle|\,A=a,\hat{\pi}(X)\right]$, then regressing the propensity score on this quantity to obtain $e(X) = \E\left[\hat{\pi}(X)\,\middle|\,r(X)\right]$, and then finally projecting all information onto the covariate space $X$. In another variant, \citet{benkeser2020nonparametric} suggest conditioning on the estimated outcome mean $\hat{\mu}^a(X)$ when its estimation is consistent.

With the advent of neural networks and deep representation learning, recent methods have integrated collaborative learning into representation learning for ATE estimation. \citet{shi2019adapting} proposed DragonNet, which emphasizes incorporating treatment-predictive information even at the expense of outcome prediction performance. This contrasts with approaches aiming to mitigate extreme lack of overlap by sacrificing some predictive accuracy for the propensity score.

A recent contribution by \citet{christgau2024efficient} extends \citet{rotnitzky2020efficient} by representing adjustment variables through a learned representation $V=\eta(X)$. For the estimand $\theta(V) = \E_\P\left[\E\left[Y\middle|V, A=1\right] - \E\left[Y\middle|V, A=0\right]\right]$, the corresponding efficient influence function is
\begin{equation}
    \label{eqn:adjusted_eff_inf}
        \varphi(V) = \frac{A}{\pi(V)}\Bigl\{Y-\mu^1\left(V\right)\Bigr\}   - \frac{1-A}{1-\pi(V)}\Bigl\{Y-\mu^0\left(V\right)\Bigr\} +  \Bigl\{\mu^1(V)-\mu^0(V)\Bigr\} -\theta(V),
\end{equation}
where the nuisance parameters are $\mu^a(V) = \E\left[Y^a\middle |V\right]$, $\pi(V) = \E\left[A \middle|V\right]$. The variance $\Var\left[\varphi(V)\right]$ defines the semiparametric efficiency bound for regular asymptotically linear estimators of $\theta(V)$. Under the condition that $\E_V\E\left[Y\middle|V, A=a\right] = \E_X\E\left[Y\middle|X, A=a\right]$ for all $a\in\{0,1\}$ (defined as $\mathcal{P}$-valid in their paper), it follows that $\theta(X) = \theta(V)$. In such cases, the choice between conditioning on $X$ or on the representation $V$ depends on which yields a smaller efficiency bound.

Suppose the outcome regression factors through an intermediate representation $\eta(\beta,x)$ such that $\mu^a(x) = h(a,\eta(\beta,x))$. \citet{christgau2024efficient} propose the Debiased Outcome-adapted Propensity Estimator (DOPE) using $V_\beta=\eta(\beta,X)$ and demonstrate that under suitable conditions (see Corollary 3.6 in their paper), $\theta(V_{\beta})=\theta(X)$ and more importantly,
 \begin{equation}
 \label{eqn:eff_bd_comparison}
     \Var(\varphi(X)) \geq \Var(\varphi(V_\beta)).
 \end{equation}

We use $\hat{\theta}(X)$ and $\hat{\theta}_\beta$ to denote estimators conditioning on $X$ and $V_\beta$ when $\beta$ is known. The inequality in (\ref{eqn:eff_bd_comparison}) implies that $\hat{\theta}_\beta$ is at least as efficient as $\hat{\theta}(X)$. In practice, however, $\beta$ is unknown and must be estimated, introducing additional uncertainty.

 With a further assumption that the representation takes a single-index form, where the outcome regression factors through the linear predictor as
\begin{equation}\label{eqn:single-index}
    Y=h(\alpha A + \beta^\top X) +\epsilon_Y,
\end{equation}
they build a neural network architecture that learns outcome regression with first layer designed to first learn the outcome regression with a linear first layer enforcing the single-index model. The resulting weights provide an estimate $\hat{\beta}$, which is then used to regress both the outcome and treatment on $\hat{\beta}^\top X$. Although reminiscent of DragonNet's architecture, DOPE focuses on adapting the propensity score and outcome regression to the intermediate representation. 

Notably, \citet{benkeser2020nonparametric} consider a special case of DOPE where the propensity score is estimated conditional on the outcome mean $\hat{\mu}^a(X)$ rather than on the intermediate representation $V_{\hat{\beta}}=\hat{\beta}^\top X$. Similarly, \citet{gui2022causal} propose a transformer-based architecture (the TI estimator) for ATE estimation from text, which shares conceptual similarities with these methods. Moreover, \citeauthor{diaz2018doubly}'s \emph{e-score} builds on this idea by choosing the covariate representation $V=\E\left[Y-\hat{\mu}^a(X)\,\middle|\,\hat{\pi}(X)\right]$ to condition on. 

Many of these methods depend on strong a priori knowledge or assumptions, which are not always available. For example, applying the backdoor criterion for covariate selection \citep{pearl1995causal} requires strong understanding of the true causal dependence structure, while covariate balancing methods typically assume linear relationships \citep{imai2014covariate, athey2018approximate, ning2020robust}. Likewise, representation learning approaches \citep{benkeser2020nonparametric, gui2022causal, christgau2024efficient} depend heavily on a pre-processing step that must accurately capture the underlying representations. In high-dimensional settings, ensuring that these representations are learned effectively is particularly challenging, and any errors introduced during this step can propagate in ways that are difficult to control. Consequently, these methods may have much worse performance when their underlying assumptions are violated. Although \citet{christgau2024efficient} develop semiparametric efficiency theories, their approach is limited to specific representation structures and does not account for the uncertainty introduced by the estimation of the representation learning step, $\hat{\beta}$---not to mention the potential for model misspecification. As demonstrated in \Cref{subsec:synthetic_experiment}, such assumptions can easily lead to substantially biased estimates.

Our proposed estimators aim to stabilize inverse propensity score weighting. While similar in spirit to (smoothed) propensity score trimming \citep{rosenbaum2002overt, crump2009dealing, yang2018asymptotic} and overlap weights \citep{li2018balancing}, our method addresses a distinct issue: we do not alter the target estimand. As detailed in \Cref{sec:asymptotic}, our estimators converge asymptotically to the ATE, unlike trimming methods or the average treatment effect on the overlap (ATO) estimand.

Finally, although high-dimensional covariates are often associated with positivity violations, the two do not always coincide. In some cases, high-dimensionality may not hinder propensity score estimation significantly, whereas lack of overlap can occur even with a few covariates. By leveraging outcome information, our method remains applicable in both scenarios.

\section{Methodology}
\label{sec:method}

\subsection{Marginal ratio and augmented marginal ratio estimators}

Recall that the IPW and AIPW estimators are defined as $\E\left[ h(A,X) Y \right]$ and $\E\left[ h(A,X) Y^{*} \right]$ in the population version. We denote the population weights, $w(Y) = \E\left[h(A,X)\middle | Y\right]=\E\left[\frac{A-\pi(X)}{\pi(X)\left\{1-\pi(X)\right\}} \middle | Y\right]$ and $w^{*}(Y^{*}) = \E\left[h(A,X)\middle | Y^*\right]=\E\left[\frac{A-\pi(X)}{\pi(X)\left\{1-\pi(X)\right\}} \middle | Y^{*}\right]$. These weights are the key components of MR and AMR, respectively. The corresponding estimated weights functions are denoted $\hat{w}(\cdot)$ and $\hat{w}^{*}(\cdot)$, when we use estimated nuisance parameters $\hat{\pi}(\cdot)$ and $\hat{\mu}^a(\cdot)$. With these notations, we define:

\begin{definition}[MR and AMR estimators]
Our proposed empirical MR and AMR estimators are
$$
\hat{\theta}_{MR}=\P_n\left[\hat{w}(Y)Y\right] = \P_n \left\{\hat{\E}\left[\frac{A-\hat{\pi}(X)}{\hat{\pi}(X)\left\{1-\hat{\pi}(X)\right\}} \middle | Y\right]Y\right\},
$$
and
$$
\hat{\theta}_{AMR}=\P_n\left[\hat{w}^{*}(\hat{Y}^{*})\hat{Y}^{*}\right]=\P_n \left\{\hat{\E}\left[\frac{A-\hat{\pi}(X)}{\hat{\pi}(X)\left\{1-\hat{\pi}(X)\right\}} \middle | \hat{Y}^{*}\right]\hat{Y}^{*}\right\}.
$$
\end{definition}
For later discussion, we also introduce the oracle estimators 
$$
w^0 (\cdot)= \E\left[\hat{h}(A,X)\middle|Y\right] = \E\left[\frac{A-\hat{\pi}(X)}{\hat{\pi}(X)\left\{1-\hat{\pi}(X)\right\}} \middle | Y \right]
$$ 
and 
$$
w^{*0}(\cdot) = \E\left[\hat{h}(A,X)\middle|\hat{Y}^{*}\right]=\E\left[\frac{A-\hat{\pi}(X)}{\hat{\pi}(X)\left\{1-\hat{\pi}(X)\right\}} \middle | \hat{Y}^{*} \right]
$$ as intermediate quantities.  \Cref{tab:notations} presents a summary of these definitions. The Oracle column contains $\hat{\theta}^0_{IPW}$ and  $\hat{\theta}^0_{AIPW}$ based on empirical distributions $\P_n$ while assuming the true values of nuisance parameters are known, as well as  $\hat{\theta}^0_{MR}$ and  $\hat{\theta}^0_{AMR}$ that uses the estimated nuisance parameters $\hat{\pi}, \hat{\mu}^a$ but assume the corresponding weights functions are known. The Estimator column instead shows the usual case in which the nuisance parameters and weights functions are unknown and estimated.  

\begin{table}
\begin{center}
\begin{tabular}{|r|r|r|r|}
    \hline
     & Population &  Oracle & Estimator  \\
    \hline
     \textbf{IPW} & $\theta_{IPW}=\E\left[h(A,X)Y\right]$ & $\hat{\theta}^0_{IPW} = \P_n\left[h(A,X)Y\right]$ &  $\hat{\theta}_{IPW}=\P_n\left[\hat{h}(A,X)Y\right]$\\
     \textbf{AIPW} & $\theta_{AIPW}=\E\left[h(A,X)Y^{*}\right]$ & $\hat{\theta}^0_{AIPW} = \P_n\left[h(A,X)Y^{*}\right]$ & $\hat{\theta}_{AIPW}=\P_n\left[\hat{h}(A,X)\hat{Y}^{*}\right]$\\
     \textbf{MR} & $\theta_{MR} =\E\left[w(Y)Y\right]$ & $\hat{\theta}^0_{MR} =\P_n\left[w^0(Y)Y\right]$ &  $\hat{\theta}_{MR} =\P_n\left[\hat{w}(Y)Y\right]$\\
     \textbf{AMR} & $\theta_{AMR} =\E\left[w^{*}(Y^{*})Y^{*}\right]$ & $\hat{\theta}^0_{AMR} =\P_n\left[w^{*0}(\hat{Y}^{*})\hat{Y}^{*}\right]$ & $\hat{\theta}_{AMR} =\P_n\left[\hat{w}^{*}(\hat{Y}^{*})\hat{Y}^{*}\right]$  \\
     \hline

\hline
\end{tabular}
\end{center}
\caption{Summary of estimators.}
\label{tab:notations}
\end{table} 

The construction of MR originates from the form of IPW, and AMR is inspired by AIPW. Note that ${Y}^{*} = Y-\big\{\pi(X)\mu^0(X) + \left\{1-\pi(X)\right\}\mu^1(X)\big\}$ is a combination of the observed outcome $Y$ and the estimation bias $Y-\mu^a$ from AIPW, and is different from the `residual of Y', that is, $Y-\E\left[Y \middle| X\right] = Y-\big\{\pi(X)\mu^1(X)+\left\{1-\pi(X)\right\}\mu^0(X)\big\}$. The $\mu^{*}(X)$ here is introduced to debias the MR estimator, following the same principle as in the AIPW estimator, where a correction term debiases the standard IPW estimator. A similar debiasing idea is discussed in \citet{wang2024debiased}.

The weight function $w^{*0}(\cdot)$ is defined as the conditional expectation over the estimated $\hat{Y}^{*}=Y-\hat{\mu}^0(X)\hat{\pi}(X) - \hat{\mu}^1(X)\{1-\hat{\pi}(X)\}$, which follows a different distribution than the true $Y^{*}=Y-\mu^0(X)\pi(X)-\mu^1(X)\{1-\pi(X)\}$ that defines $w^{*}(\cdot)$. In other words, $w^{*0}$ and $w^{*}$ differ because they are computed with respect to different outcomes---$\hat{Y}^{*}$ versus $Y^{*}$. When the estimated nuisance parameters $\hat{\mu}^a(\cdot)$ and $\hat{\pi}(\cdot)$ exactly match their true values $\mu^a(\cdot)$ and $\pi(\cdot)$, the two functions $w^*(\cdot)$ and $w^{*0}(\cdot)$ are identical.

In contrast, $w^0(\cdot)$ is the conditional expectation computed on $Y$, aligning it with the definition of $w(\cdot)$. However, these functions are not identical: $w^0$ involves the expectation of $\hat{h}(A,X)$, whereas $w(\cdot)$ involves $h(A,X)$. A simple example illustrating the differences among $w^{*0}(\cdot)$, $w^{*}(\cdot)$, $w^0(\cdot)$, and $w(\cdot)$ is provided in \Cref{ex:explict_weight}.

In applications using real data, the weights have to be estimated. In this paper, we undertake this by solving\footnote{Other loss functions, such as the Huber loss can also be used.}:
\begin{align}
    \hat{w}(\cdot) &= \arg\min_{f\in \mathcal{F}}\mathbb{P}_n\left[\frac{A-\hat{\pi}(X)}{\hat{\pi}(X)\left\{1-\hat{\pi}(X)\right\}} - f(Y)\right]^2, \label{eqn:loss_weights_mr}\\
    \hat{w}^{*}(\cdot) &= \arg\min_{f\in \mathcal{F}}\mathbb{P}_n\left[\frac{A-\hat{\pi}(X)}{\hat{\pi}(X)\left\{1-\hat{\pi}(X)\right\}} - f(\hat{Y}^{*})\right]^2 \label{eqn:loss_weights_amr}
\end{align}
within a function class $\mathcal{F}$. 

Given these notations, we provide our method for AMR treatment effect estimation in \Cref{alg:amr}.  It follows almost the same procedure to build the MR estimator, except for that there is no need to calculate $\hat{Y}_{-k}^{*}$, and we estimate $\hat{w}_{-k}(\cdot)$ by regressing $\hat{h}_{-k}(A,X)$ on $Y$.

\begin{algorithm}
\caption{Estimating ATE with Augmented Marginal Outcome Density Ratio}
\label{alg:amr}
\begin{algorithmic}

\Require Observations $\{X_i, Y_i, A_i\}_{i=1}^n$.
\Ensure AMR estimation for ATE, $\hat{\theta}_{AMR}$
\State Split the data into $K$ folds, denoted $\{D_k\}_{k=1}^K$.
\For{$k = 1$ to $K$}
    \State Train nuisance parameter estimation models on the training set $D_{-k}$ (all data except fold $D_k$) to obtain: $\hat{\mu}^a_{-k}(\cdot)$ and $\hat{\pi}_{-k}(\cdot)$ that takes covariates $x \in \mathcal{X}$ as input;
    \State Using the trained models on $D_{-k}$, compute for all observations in $D_{-k}$: $\hat{Y}^{*}_{-k} \coloneqq\left\{\hat{Y}^{*}_{i} = Y_i - \hat{\pi}_{-k}(X_i) \hat{\mu}_{-k}^0(X_i) - \{1 - \hat{\pi}_{-k}(X_i)\} \hat{\mu}_{-k}^1(X_i), \;i\in D_{-k}\right\}$; 
    \State Still on the training set $D_{-k}$, regress $\hat{h}_{-k}(A,X) \coloneqq \left\{\frac{A_i-\hat{\pi}_{-k}(X_i)}{\hat{\pi}_{-k}(X_i)(1-\hat{\pi}_{-k}(X_i))},\; i\in D_{-k} \right\}$ on $\hat{Y}_{-k}^{*}$ to estimate the weight function $\hat{w}^{*}_{-k}(\cdot)$ using (\ref{eqn:loss_weights_amr});
\EndFor
\State Compute the average across all $n$ observations:
$$
\hat{\theta}_{AMR} = \frac{1}{n} \sum_{i=1}^n \sum_{k=1}^K \mathbbm{1}\left\{i\in D_k\right\}\hat{w}^{*}_{-k}(\hat{Y}^*_{i,k}) \cdot \hat{Y}^{*}_{i,k},
$$
where $\hat{Y}^{*}_{i,k} = Y_i - \hat{\pi}_{-k}(X_i) \hat{\mu}_{-k}^0(X_i) - \{1 - \hat{\pi}_{-k}(X_i)\} \hat{\mu}_{-k}^1(X_i).$
\end{algorithmic}
\end{algorithm}

\begin{remark}[Cross-fitting]
In  \Cref{alg:amr} we fit $\hat{w}^*_{-k}(\cdot)$ on the same fold along with other nuisance parameter functions, $\hat{\pi}_{-k}(\cdot)$ and $\hat{\mu}_{-k}^a(\cdot), a\in\{0,1\}$. One can also split the fold $D_{-k}$ into $D_{-k}^0$ and $D_{-k}^1$, fit the nuisance parameters $\hat{\pi}_{-k}(\cdot)$ and $\hat{\mu}_{-k}^a(\cdot), a\in\{0,1\}$ on $D_{-k}^0$, and $\hat{w}^*_{-k}(\cdot)$ on $D_{-k}^1$. In the finite sample context, we recommend following \Cref{alg:amr} that uses all the data in $D_{-k}$ to estimate the weights function, since this makes the best use of potentially small amounts of data.
\end{remark}

\begin{remark}[Cross-validation for hyperparameter tuning]
We tune hyperparameters via cross-validation when estimating the weight functions $\hat{w}$ and $\hat{w}^{*}$. This tuning is crucial for performance gains, as it balances the bias-variance trade-off in finite samples. For example, if a regularized kernel method (such as kernel ridge regression) is used for weight estimation, the estimator incurs a bias term of order $O(\lambda + h^2)$, 
where $\lambda$ is the regularization parameter and $h$ denotes the kernel bandwidth. Under standard smoothness assumptions in nonparametric regression, the optimal bandwidth is typically of order $h \asymp n^{-1/5}$, so that $h^2 \asymp n^{-2/5}$. It is common to choose the regularization parameter $\lambda$ so that it decreases at a rate ensuring optimal convergence; see, e.g.~\citet{caponnetto2007optimal} and \citet{smale2007learning} for precise results under various assumptions. With cross-validation selecting $\lambda$ and $h$ appropriately, the induced bias diminishes as $n$ increases, ensuring that the estimator's bias vanishes in the limit.
\end{remark}

When all the nuisance parameters and the weights function are known, AMR and MR demonstrate benefits in reduced variance when compared to their counterparts AIPW and IPW, as we now show.

\begin{proposition}[Variance comparison with IPW and AIPW]
\label{prop:variance_comparison_oracle}
When the weights $w$, $w^{*}$ and the nuisance parameters $\pi, \mu^a$,  are known exactly, we have
\begin{align*}
\Var\hat{\theta}^0_{IPW}-\Var\hat{\theta}^0_{MR}  &= \frac{1}{n}\E\{\Var\left[h(A,X)\middle| Y\right]Y^2\} \geq 0\\
\Var\hat{\theta}^0_{AIPW}-\Var\hat{\theta}^0_{AMR}  &= \frac{1}{n}\E\{\Var\left[h(A,X)\middle| Y^{*}\right]{Y^{*}}^2\} \geq 0.
\end{align*}
\end{proposition}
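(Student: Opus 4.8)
The plan is to exploit the fact that, in each pair, both estimators are empirical averages of i.i.d.\ summands, so each variance equals $1/n$ times the variance of a single summand, and then to recognise the MR (resp.\ AMR) summand as the conditional expectation of the IPW (resp.\ AIPW) summand given $Y$ (resp.\ $Y^{*}$). The inequality then falls out of the law of total variance, and nonnegativity is automatic.

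First I would pin down the oracle setup. When $\pi,\mu^a$ are known exactly we have $\hat\pi=\pi$, $\hat\mu^a=\mu^a$, hence $\hat h(A,X)=h(A,X)$ and $\hat Y^{*}=Y^{*}$; consequently $w^0(\cdot)=\E[h(A,X)\mid Y]=w(\cdot)$ and $w^{*0}(\cdot)=\E[h(A,X)\mid Y^{*}]=w^{*}(\cdot)$. Thus $\hat\theta^0_{IPW}=\P_n[h(A,X)Y]$, $\hat\theta^0_{MR}=\P_n[w(Y)Y]$, $\hat\theta^0_{AIPW}=\P_n[h(A,X)Y^{*}]$ and $\hat\theta^0_{AMR}=\P_n[w^{*}(Y^{*})Y^{*}]$, and since the $Z_i$ are i.i.d.\ and the functions $h$, $w$, $w^{*}$ are fixed, $\Var\hat\theta^0_{IPW}=\tfrac1n\Var[h(A,X)Y]$, $\Var\hat\theta^0_{MR}=\tfrac1n\Var[w(Y)Y]$, with the analogous identities for the augmented pair.

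The key step is the observation that, because $Y$ is $\sigma(Y)$-measurable, $w(Y)\,Y=\E[h(A,X)\mid Y]\,Y=\E[h(A,X)Y\mid Y]$. Applying the law of total variance to $h(A,X)Y$ with conditioning $\sigma$-algebra $\sigma(Y)$, and using $\Var(h(A,X)Y\mid Y)=Y^2\,\Var(h(A,X)\mid Y)$ since $Y$ is constant given $\sigma(Y)$,
\begin{align*}
\Var[h(A,X)Y]=\E\!\big[\Var(h(A,X)Y\mid Y)\big]+\Var\!\big(\E[h(A,X)Y\mid Y]\big)=\E\!\big[Y^2\,\Var(h(A,X)\mid Y)\big]+\Var[w(Y)Y].
\end{align*}
Rearranging and dividing by $n$ yields the first display, and the right-hand side is $\ge 0$ because $\Var(h(A,X)\mid Y)\ge 0$ pointwise. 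The second display follows by the identical argument with $Y$ replaced by $Y^{*}$ and $\sigma(Y)$ by $\sigma(Y^{*})$ throughout, using $w^{*}(Y^{*})Y^{*}=\E[h(A,X)Y^{*}\mid Y^{*}]$.

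I do not anticipate any substantive obstacle. The only points needing care are (i) making explicit that the ``oracle'' qualifier forces $w^0=w$ and $w^{*0}=w^{*}$, so that the two sides of each display are genuinely the oracle MR/AMR and IPW/AIPW variances; and (ii) the elementary measurability step that pulls $Y$ (resp.\ $Y^{*}$) outside the conditional expectation and conditional variance. Everything else is the textbook variance decomposition.
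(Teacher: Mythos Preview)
Your proposal is correct and follows essentially the same approach as the paper: both arguments reduce the variance difference to a conditional-variance identity with respect to $\sigma(Y)$ (resp.\ $\sigma(Y^{*})$). The paper carries this out by first noting the oracle estimators share the same mean so that the comparison reduces to second moments, and then expands $\E[h(A,X)^2Y^2]-\E[w(Y)^2Y^2]$ directly into $\E\{\Var[h(A,X)\mid Y]Y^2\}$, which is exactly your law-of-total-variance step written out.
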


In other words, if there is any variation in $h(A,X)$ and $Y$, MR and AMR are strictly more efficient than IPW and AIPW, respectively.  We provide the details of the proof in \Cref{sec:oracle_var_proof} which is similar to \citet{taufiq2024marginal}. The second equation in \Cref{prop:variance_comparison_oracle} shows that, if ${Y}^{*}$ is small on average, then the difference between $\Var \hat{\theta}_{AIPW}$ and $\Var \hat{\theta}^0_{AMR}$ is also small.  If the variance of $h(A,X)$ conditional on ${Y}^{*}$ is big, we get more benefit by using $\hat{\theta}^0_{AMR}$ to achieve lower variance.

\begin{proposition}[Variance comparison between MR and AMR]
\label{prop:variance_comparison_proposed}
Under the assumptions of \Cref{prop:variance_comparison_oracle}, we have 
\small
$$\Var\hat{\theta}^0_{MR} - \Var\hat{\theta}^0_{AMR} \geq \frac{1}{n}\E\left\{ \Bigl[h(A,X)^2 - \max\Bigl(\Var\left[h(A,X)\middle |{Y}^{*}\right], \Var\left[h(A,X)\middle| Y\right]\Bigr) \Bigr](Y^2 - {Y^{*}}^2)\right\}.$$
\end{proposition}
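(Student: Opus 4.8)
The plan is to reduce the claim to a short variance computation and then invoke \Cref{prop:variance_comparison_oracle}. Since the observations are i.i.d., $\Var\hat{\theta}^0_{MR} = \frac{1}{n}\Var[w(Y)Y]$ and $\Var\hat{\theta}^0_{AMR} = \frac{1}{n}\Var[w^{*}(Y^{*})Y^{*}]$, and under \Cref{ass:common} both oracle estimators are unbiased for $\theta$: by the tower property $\E[w(Y)Y] = \E[h(A,X)Y] = \theta$, while $\E[w^{*}(Y^{*})Y^{*}] = \E[h(A,X)Y^{*}] = \E[h(A,X)Y] = \theta$, the last step because $\E[h(A,X)\mid X] = 0$ annihilates the $\mu^{*}(X)$ piece of $Y^{*}$. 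Hence $\Var\hat{\theta}^0_{MR} - \Var\hat{\theta}^0_{AMR} = \frac{1}{n}(\E[w(Y)^2 Y^2] - \E[w^{*}(Y^{*})^2 {Y^{*}}^2])$. Using the conditional-variance identity $\E[h(A,X)\mid Y]^2 = \E[h(A,X)^2 \mid Y] - \Var[h(A,X)\mid Y]$ together with its $Y^{*}$-analogue, and pulling $Y^2$ (resp. ${Y^{*}}^2$) inside the conditioning, I obtain the exact expression
\[
\Var\hat{\theta}^0_{MR} - \Var\hat{\theta}^0_{AMR} = \frac{1}{n}\Bigl(\E[h(A,X)^2 (Y^2 - {Y^{*}}^2)] - \E[\Var[h(A,X)\mid Y]\,Y^2] + \E[\Var[h(A,X)\mid Y^{*}]\,{Y^{*}}^2]\Bigr),
\]
which is also what comes out of subtracting the two displays of \Cref{prop:variance_comparison_oracle} and using $\Var\hat{\theta}^0_{IPW} - \Var\hat{\theta}^0_{AIPW} = \frac{1}{n}\E[h(A,X)^2(Y^2 - {Y^{*}}^2)]$ (immediate from unbiasedness of the IPW and AIPW oracles).

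What remains is to lower-bound the two correction terms, which enter with opposite signs, in terms of the single quantity $M := \max(\Var[h(A,X)\mid Y^{*}], \Var[h(A,X)\mid Y])$. The elementary facts available are that $0 \le \Var[h(A,X)\mid Y] \le M$ and $0 \le \Var[h(A,X)\mid Y^{*}] \le M$ hold pointwise, and $Y^2, {Y^{*}}^2 \ge 0$. Since the target form is $-\E[M(Y^2 - {Y^{*}}^2)] = -\E[M Y^2] + \E[M {Y^{*}}^2]$, the natural move is to replace $\Var[h(A,X)\mid Y]$ by $M$ in the $Y^2$ term and $\Var[h(A,X)\mid Y^{*}]$ by $M$ in the ${Y^{*}}^2$ term, and then recombine.

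I expect this last step to be the main obstacle. The difficulty is that the two substitutions act in opposite directions: bounding $\Var[h(A,X)\mid Y] \le M$ enlarges a term we are subtracting, which is favourable, but bounding $\Var[h(A,X)\mid Y^{*}] \le M$ would enlarge a term we are adding, and discarding it or bounding it crudely forfeits exactly the $\E[M{Y^{*}}^2]$ contribution one needs. To recover the stated constant I would partition the sample space according to whether $\Var[h(A,X)\mid Y] \ge \Var[h(A,X)\mid Y^{*}]$ or not, so that on each region $M$ agrees with one of the two conditional variances, and, where necessary, further split on the sign of $Y^2 - {Y^{*}}^2$, bounding each region on its own before reassembling. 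Checking that the pieces where $|Y^{*}|$ dominates $|Y|$ do not erase the gain is the delicate bookkeeping; the remaining manipulations are routine and rest only on \Cref{prop:variance_comparison_oracle} and the tower property.
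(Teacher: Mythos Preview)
Your route is exactly the paper's. The paper also telescopes through $\Var\hat\theta^0_{IPW}$ and $\Var\hat\theta^0_{AIPW}$, using the two displays of \Cref{prop:variance_comparison_oracle} together with $\Var\hat\theta^0_{IPW}-\Var\hat\theta^0_{AIPW}=\tfrac1n\E[h(A,X)^2(Y^2-{Y^*}^2)]$, and lands on the same exact identity you derive:
\[
\Var\hat{\theta}^0_{MR} - \Var\hat{\theta}^0_{AMR}
= \frac{1}{n}\E\Bigl\{h(A,X)^2 (Y^2 - {Y^{*}}^2) - \Var[h(A,X)\mid Y]\,Y^2 + \Var[h(A,X)\mid Y^{*}]\,{Y^{*}}^2\Bigr\}.
\]
From this point the paper's proof is a single line: it simply writes the stated lower bound as the next display, with no case analysis, no partitioning, and no further justification. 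So everything you planned beyond the identity already exceeds what the paper supplies.

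Your instinct that this final step is the non-obvious one is well founded. Writing $V_1=\Var[h(A,X)\mid Y]$, $V_2=\Var[h(A,X)\mid Y^*]$ and $M=\max(V_1,V_2)$, the asserted inequality is equivalent to $\E[(M-V_1)Y^2]\ge \E[(M-V_2){Y^*}^2]$. On the event $\{V_1\ge V_2\}$ the left integrand vanishes while the right integrand equals $(V_1-V_2){Y^*}^2\ge 0$, so the comparison fails pointwise there; a pure partition-by-region argument of the kind you sketch therefore cannot close the gap by itself, exactly for the reason you anticipate. The paper offers no argument here either, so on this step you and the paper are in the same position: your identity and decomposition are correct and match the paper, and your reservations about the last inequality are legitimate rather than a defect of your plan.
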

The proof can be found in \Cref{sec:var_proof}. The term on the right is not necessarily greater than 0. However, when the treatment effect is homogeneous, i.e.~$\mu^1(x) = \mu^0(x), \forall x\in \mathcal{X}$, then we have $Y^* = Y-\E Y$. In this case, $\E[Y^2 - {Y^{*}}^2\mid X] = 0$, thus we get $\Var\hat{\theta}^0_{MR} \geq \Var\hat{\theta}^0_{AMR}$.

So far we have demonstrated the advantages of the proposed $\hat{\theta}_{MR}$, $\hat{\theta}_{AMR}$ estimators when true values of nuisance parameters are known. However, $\mu^a(x)$ and $\pi(x)$ are usually unknown and need to be estimated  in practice. The strengths of $\hat{\theta}_{AMR}$, which are inherited from $\hat{\theta}_{AIPW}$, further stand out in such settings---we show that it maintains the double robustness and asymptotic normality properties in \Cref{sec:asymptotic}. Before we present the asymptotic properties, we first present the intuition and strengths of our proposed estimators.

\subsection{Outcome-informative weighting}
\label{sec:outcome_driven_weighting}

The most straightforward interpretation of our weighting approach is that it assigns the same weight to all observations sharing identical values of $Y$ or $Y^*$.  
This is particularly helpful not just because it smooths the weights as we will discuss later in \Cref{subsec:connect_efficient_set}, but also when $Y$ or $Y^*$ is informative for average treatment effect estimation. We explain this idea with \Cref{ex:explict_weight}. The derivation is detailed in \Cref{sec:true_weights_derivation}.

\begin{example}[Explicit form of MR and AMR]
\label{ex:explict_weight}
    Assume $A|X \sim \textit{Bernoulli}(\pi(X))$ and $Y^a\mid X\sim \mathcal{N}(\mu^a(X), \sigma^2)$ for  $a\in\{0,1\}$. Write $\mathcal{N}(z;\;a,\;b)$ for the density of $z$ under a Gaussian distribution with mean $a$ and variance $b$. We obtain:
\begin{align*}
    w(y) &= \frac{\E_X\Bigl[\mathcal{N}\big(y;\; 
 \mu^1(X),\;\sigma^2\big) - \mathcal{N}\big(y;\; \mu^0(X),\;\sigma^2\big)\Bigr]}{ \E_X\Bigl[\pi(X)\mathcal{N}\big(y;\;\mu^1(X),\;\sigma^2\big) + \left\{1-\pi(X)\right\}\mathcal{N}\big(y;\;\mu^0(X),\;\sigma^2\big)\Bigr]}\\
    w^{*}(y^{*}) &=  \frac{\E_X\Bigl[\mathcal{N}\big(y^{*};\;\pi(X)\tau(X),\;\sigma^2\big) - \mathcal{N}\big(y^{*};\; -\left\{1-\pi(X)\right\}\tau(X),\;\sigma^2\big)\Bigr]}{ \E_X\Bigl[\pi(X)\mathcal{N}\big(y^{*};\;\pi(X)\tau(X),\;\sigma^2\big)+ \left\{1-\pi(X)\right\}\mathcal{N}\big(y^{*};\;-\left\{1-\pi(X)\right\}\tau(X),\;\sigma^2\big)\Bigr]}
\end{align*}
where $\tau(x)\coloneqq \mu^1(x) - \mu^0(x)$.

Meanwhile, we can also derive the weights with plug-in nuisance estimators:
\begin{align*}
    w^0(y) &= \frac{\E_X\Bigl[\frac{\pi(X)}{\hat{\pi}(X)}\mathcal{N}\big(y;\; \mu^1(X),\;\sigma^2\big) - \frac{1-\pi(X)}{1-\hat{\pi}(X)}\mathcal{N}\big(y;\;\mu^0(X),\sigma^2\big)\Bigr]}{ \E_X\Bigl[\pi(X)\mathcal{N}\big(y;\;\mu^1(X),\;\sigma^2\big) + \left\{1-\pi(X)\right\}\mathcal{N}\big(y;\;\mu^0(X),\;\sigma^2\big)\Bigr]}\\
    w^{*0}(y^{*}) &= \frac{\E_X\Bigl[\frac{\pi(X)}{\hat{\pi}(X)}\mathcal{N}\big(y^{*};\;\mu^1(X) - \hat{\mu}^{*}(X),\;\sigma^2\big) - \frac{1-\pi(X)}{1-\hat{\pi}(X)}\mathcal{N}\big(y^{*};\;\mu^0(X) - \hat{\mu}^{*}(X),\;\sigma^2\big)\Bigr]}{ \E_X\Bigl[\pi(X)\mathcal{N}\big(y^{*};\;\mu^1(X) - \hat{\mu}^{*}(X),\;\sigma^2\big)+ \left\{1-\pi(X)\right\}\mathcal{N}\big(y^{*};\;\mu^0(X) - \hat{\mu}^{*}(X),\;\sigma^2\big)\Bigr]}.
\end{align*}
Clearly, they are different from $w(y)$ and $w^{*}(y^*)$, respectively.
\end{example}

In \Cref{ex:explict_weight}, when there is no treatment effect, i.e.~$\forall x\in \mathcal{X}$, we have $\mu^1(x)=\mu^0(x)$, or $\tau(x)=0$, our proposed weights correctly become $0$. This reflects how the approach effectively averages out irrelevant variation in $h(A,X)$, preventing extreme values from dominating. Consequently, the weights are distributed in a manner that is closer to being proportionate to the true relationship with $Y$ or $Y^{*}$, mitigating the influence of outliers.  This highlights the critical role that the outcome information plays in the weights design. 

In \Cref{fig:example_weights} we provide an example where the covariates and the outcome generating process follow the setup in \Cref{ex:explict_weight}, and we set $\tau(x)=0, \forall x \in \mathcal{X}$. We estimate the weight function $\hat{w}^{*}(y^*)$ using linear regression, since we know the true weights are  $w^{*}(y^*)=0$ for $y\in \mathbb{R}$. Following the imbalance measure described in \cite{imai2014covariate}, the imbalance metric for a covariate $X_j$, $j\in \{1,\ldots, p\}$ is defined as 
  $$
  \text{Imbalance}(X_j) = \Bigl\{\P_n\left[b(X)X_j)\right]^\top\P_n\left[(X_j^\top X_j)\right]^{-1}\P_n\left[b(X)X_j)\right]\Bigr\}^{-1/2},
  $$ 
where $b(x_i)$ defines the weight applied on observation $i$; for example, IPW uses weights $b(x_i)=\frac{a_i-\pi(x_i)}{\pi(x_i)\left\{1-\pi(x_i)\right\}} $. We present the imbalance metric without standardizing by the weights to emphasize its scale. Since the covariates do not influence the treatment effect, and the treatment effect is set to 0, we expect that all individuals receive an equal weight of $0$ and that the covariates remain balanced between the treatment arms.  \Cref{fig:example_weights} shows that the AMR method yields these ideal weights even with a relatively small sample size ($n=100$), whereas the IPW approach only gradually reaches the correct weight scale as the sample size increases and, in some cases, even exacerbates the imbalance of certain covariates.
\begin{figure}[ht]
    \centering
    \begin{subfigure}[b]{0.45\textwidth}
        \centering
\includegraphics[width=1\linewidth]{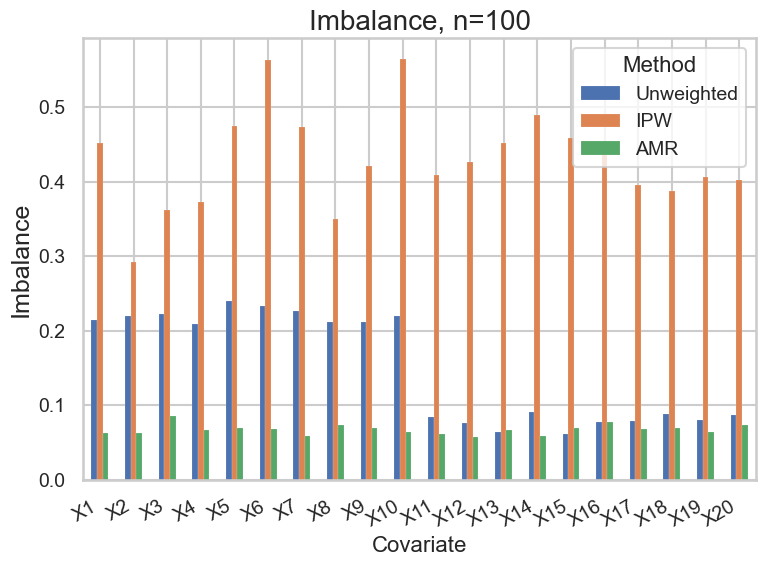}

    \end{subfigure}
    \hfill
    \begin{subfigure}[b]{0.45\textwidth}
        \centering
\includegraphics[width=1\linewidth]{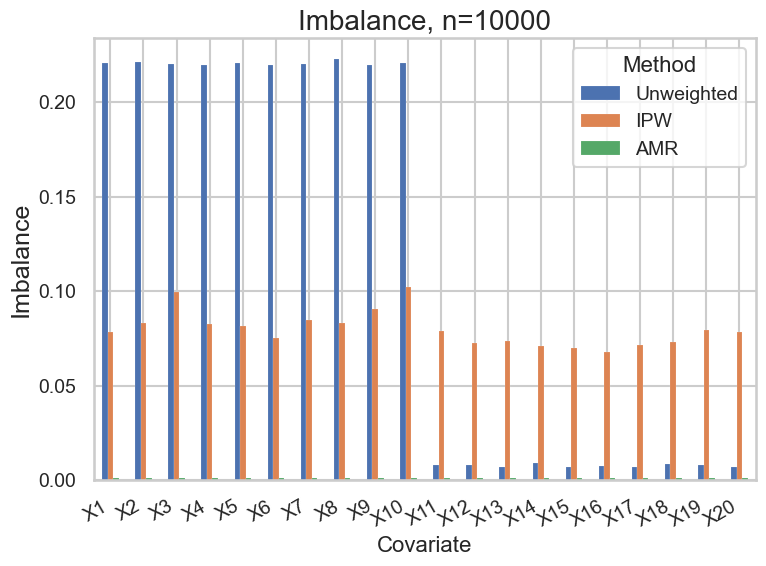}
    \end{subfigure}
    \caption{Imbalance of covariates under different weights design when $\tau(x) = 0, \forall x\in \mathcal{X}$. $(X_1,\ldots, X_5)$ are instrumental variables; $(X_6,\ldots, X_{10})$ represent confounders; $(X_{11},\ldots, X_{15})$ stand for prognostic variables and $(X_{16},\ldots, X_{20})$ are spurious covariates. }
    \label{fig:example_weights}
\end{figure}

\subsection{Connection with estimation methods based on covariate adjustment}
\label{subsec:connect_efficient_set}

MR and AMR use weights that are outcome informative, but here we demonstrate their connection with covariate adjustment related methods. We give two examples in \Cref{fig:example}, where $I$ represents the instrumental variables---those are correlated with the treatment $A$ and affecting the outcome $Y$ only through treatment; $O$ contains the prognostic variables---those that predict the outcome without being related to the treatment; $C$ represents the confounding variables that are correlated with both the treatment and the outcome; $S$ consists of spurious variables---those representing random noise with no direct correlation with either the outcome or treatment. In what follows, we focus on the MR weights.

\begin{figure}[ht]
    \centering
    \begin{subfigure}[b]{0.45\textwidth}
        \centering
        \begin{tikzpicture}[node distance=2cm, >=stealth', auto]

            \node (XI) {$I$};
            \node (Xo) [right of=XI] {$O$};
            \node (Xs) [right of=Xo, xshift=-0.5cm, yshift=-1cm] {$S$};

            \node (A) [below of=XI, yshift=-1cm] {$A$};
            \node (Y) [below of=Xo, yshift=-1cm] {$Y$};

            \draw[->] (XI) -- (A);
            \draw[->] (A) -- (Y);
            \draw[->] (Xo) -- (Y);
        \end{tikzpicture}
        \caption{Without confounder}
        \label{fig:wo_confounder}
    \end{subfigure}
    \hfill
    \begin{subfigure}[b]{0.45\textwidth}
        \centering
        \begin{tikzpicture}[node distance=2cm, >=stealth', auto]

            \node (XI) {$I$};
            \node (XC) [right of=XI] {$C$};
            \node (Xo) [right of=XC] {$O$};
            
            \node (Xs) [right of=Xo, xshift=-0.5cm, yshift=-1cm] {$S$};

            \node (A) [below of=XI, xshift=1cm, yshift=-1cm] {$A$};
            \node (Y) [below of=Xo, xshift=-1cm, yshift=-1cm] {$Y$};
            
            \draw[->] (XI) -- (A);
            \draw[->] (XC) -- (A);
            \draw[->] (XC) -- (Y);
            \draw[->] (Xo) -- (Y);
            \draw[->] (A) -- (Y);
        \end{tikzpicture}
        \caption{With confounder}
        \label{fig:w_confounder}
    \end{subfigure}
    \caption{Example causal DAGs.}
    \label{fig:example}
\end{figure}
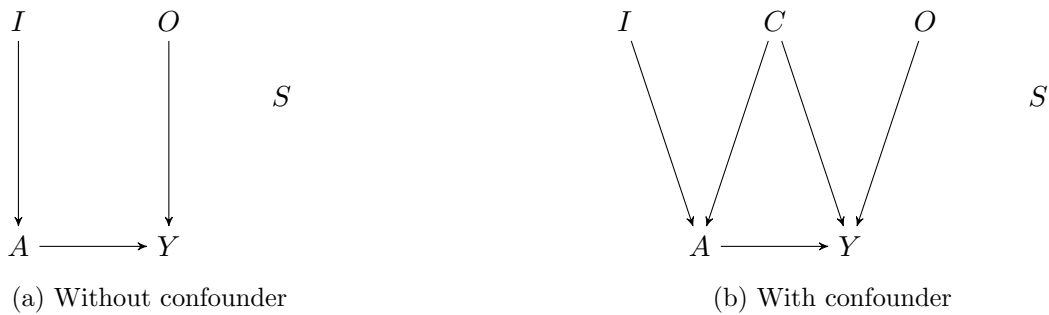

\begin{proposition}[MR Weights without Confounders]
\label{prop:mr-weights-no-confounders}
Suppose $X=\{I,O,S\}$, i.e., there is no confounder, as shown in \Cref{fig:wo_confounder}. Since $O$ is prognostic variable, we know $\pi(X) = \pi(I)$. We have $Y \ind I \mid A$. Then the MR weight satisfies
\begin{align*}
    \E\left[\frac{A-\pi(X)}{\pi(X)\left\{1-\pi(X)\right\}} \middle| Y\right] &= \sum_{a\in\{0,1\}}\P(A=a\mid Y)\times\E\left[\frac{A-\pi(I)}{\pi(I)\{1-\pi(I)\}}\middle| Y,A=a\right]\\
    &=\P(A=1\mid Y)\E\pi(I)^{-1} - \P(A=0\mid Y)\E\{1-\pi(I)\}^{-1} .
\end{align*}
\end{proposition}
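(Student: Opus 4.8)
The plan is to obtain the identity directly from the law of iterated expectations after conditioning additionally on $A$, using only two structural facts that can be read off \Cref{fig:wo_confounder}. First, since the only parent of $A$ is $I$, the propensity score depends on $I$ alone: $\pi(X)=\P(A=1\mid I,O,S)=\pi(I)$. Second, the only path joining $I$ and $Y$, namely $I\to A\to Y$, is blocked by conditioning on $A$ (and $S$ is disconnected from $Y$), so $I\ind Y\mid A$ --- the independence already noted in the statement, which I would justify by a $d$-separation argument rather than assert.

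First I would condition on $A$ inside the conditional expectation given $Y$: by the tower property applied to the integrand $g(A,X)=\frac{A-\pi(X)}{\pi(X)\{1-\pi(X)\}}$,
\[
\E[g(A,X)\mid Y] \;=\; \sum_{a\in\{0,1\}}\P(A=a\mid Y)\,\E\bigl[g(A,X)\mid Y,A=a\bigr],
\]
and substituting $\pi(X)=\pi(I)$ throughout yields the first displayed equality. Next I would evaluate $g$ on each treatment arm: on $\{A=1\}$ one has $g=\pi(I)^{-1}$, and on $\{A=0\}$ one has $g=-\{1-\pi(I)\}^{-1}$, so the inner conditional expectations reduce to $\E[\pi(I)^{-1}\mid Y,A=1]$ and $\E[\{1-\pi(I)\}^{-1}\mid Y,A=0]$.

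Finally, I would observe that $\pi(I)^{-1}$ and $\{1-\pi(I)\}^{-1}$ are functions of $I$ only, so by $I\ind Y\mid A$ the conditioning on $Y$ may be dropped: $\E[\pi(I)^{-1}\mid Y,A=1]=\E[\pi(I)^{-1}\mid A=1]$, and likewise for the other term, which gives the second displayed equality (with the outer expectations understood to be taken within the respective treatment arm). If one wishes to simplify further, a one-line calculation $\E[\pi(I)^{-1}\mid A=1]\,\P(A=1)=\E[\pi(I)^{-1}\,\E[\mathbbm{1}\{A=1\}\mid I]]=\E[\pi(I)^{-1}\pi(I)]=1$ shows $\E[\pi(I)^{-1}\mid A=1]=\P(A=1)^{-1}$, and analogously $\E[\{1-\pi(I)\}^{-1}\mid A=0]=\P(A=0)^{-1}$.

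There is no genuine analytic obstacle here; the work is essentially bookkeeping. The only points that need care are the two graphical claims --- especially $I\ind Y\mid A$, which relies on the chain structure at $A$ and on $S$ being disconnected --- and the legitimacy of replacing $\pi(X)$ by $\pi(I)$ while it sits \emph{inside} a conditional expectation over $(Y,A)$ in which $I$ is still random: $\pi(I)$ must not be mistaken for a constant and pulled outside. One should also confirm that conditioning (or not) on the trivially independent spurious block $S$ leaves both structural facts intact, which it does.
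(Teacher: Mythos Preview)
Your argument is correct and follows the same route as the paper's inline derivation: condition on $A$ via the tower property, evaluate the integrand on each treatment arm, then invoke $I\ind Y\mid A$ to remove $Y$ from the inner expectations. You are in fact more careful than the paper, which writes the final terms as $\E\pi(I)^{-1}$ and $\E\{1-\pi(I)\}^{-1}$ without making explicit that the conditioning on the treatment arm remains; your parenthetical remark and your further reduction to $\P(A=1)^{-1}$ and $\P(A=0)^{-1}$ clarify exactly this point.
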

 \Cref{prop:mr-weights-no-confounders} shows that the MR weights when there is no confounder do not depend on the individual propensity score as desired. When confounding exists, the situation becomes more complex:
\begin{proposition}[MR Weights with Confounders]
\label{prop:mr-weights-confounders}
Let $X = \{I, C, O, S\}$ as in \Cref{fig:w_confounder}, where $I \ind Y \mid A,C$. We have $\pi(X) = \pi(I,C)$. Then conditioning on $Y$ removes the influence of $I$ from the propensity score adjustment because:
\begin{align*}
    \E\left[\frac{A-\pi(X)}{\pi(X)\left\{1-\pi(X)\right\}} \middle| Y\right]
    &=\sum_{a\in\{0,1\}}\int \P(A=a\mid Y)p_{C\mid AY}(c\mid Y,A=a)\\
    &\qquad\qquad\times \E\left[\frac{A-\pi(I,C)}{\pi(I,C)\{1-\pi(I,C)\}}\middle| Y,A=a, C=c\right] \mathrm{d}c \\
    &=\sum_{a\in\{0,1\}}\int \P(A=a\mid Y)p_{C\mid AY}(c\mid Y,A=a)\E\left[\frac{a-\pi(I,c)}{\pi(I,c)\{1-\pi(I,c)\}}\right]\mathrm{d}c\\
    &= \P(A=1\mid Y)\int p_{C\mid AY}(c\mid Y,A=1)\E\pi(I,c)^{-1}\mathrm{d}c\\
    &\qquad\qquad-\P(A=0\mid Y)\int p_{C\mid AY}(c\mid Y,A=0)\E\{1-\pi(I,c)\}^{-1}\mathrm{d}c.
\end{align*} 
\end{proposition}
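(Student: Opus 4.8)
The plan is to collapse the conditional expectation $\E[h(A,X)\mid Y]$, with $h(A,X)=\frac{A-\pi(X)}{\pi(X)\{1-\pi(X)\}}$, into the stated closed form by iterating the tower property first over $A$ and then over $C$, and then using the conditional independence structure of \Cref{fig:w_confounder} to discard the conditioning on $Y$. Two structural facts drive the argument, both already recorded in the statement: the parents of $A$ in \Cref{fig:w_confounder} are exactly $I$ and $C$, so $\pi(X)=\P(A=1\mid X)=\P(A=1\mid I,C)=\pi(I,C)$ and hence $h(A,X)$ is a function of $(A,I,C)$ alone (not of $O$ or $S$); and $\{A,C\}$ $d$-separates $I$ from $Y$, i.e.\ $I \ind Y \mid A,C$.

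First I would write, by the law of total expectation applied successively to $A$ and to $C$,
\[
\E\!\left[h(A,X)\,\middle|\, Y\right]=\sum_{a\in\{0,1\}}\int \P(A=a\mid Y)\,p_{C\mid AY}(c\mid Y,A=a)\,\E\!\left[h(A,X)\,\middle|\,Y,A=a,C=c\right]\mathrm{d}c,
\]
which is precisely the first displayed line. Inside the innermost expectation the variables $A$ and $C$ are frozen at $a$ and $c$, so $h(A,X)=\frac{a-\pi(I,c)}{\pi(I,c)\{1-\pi(I,c)\}}$ is a deterministic function of $I$ with $a,c$ as fixed parameters. The crucial step is then to strip $Y$ from the conditioning set: since $I \ind Y \mid A,C$, the conditional law of $I$ given $(Y,A=a,C=c)$ coincides with its law given $(A=a,C=c)$, so the innermost expectation does not depend on $Y$ and equals the quantity written $\E\big[\frac{a-\pi(I,c)}{\pi(I,c)\{1-\pi(I,c)\}}\big]$ in the statement. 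Substituting $a=1$, where the integrand reduces to $\pi(I,c)^{-1}$, and $a=0$, where it reduces to $-\{1-\pi(I,c)\}^{-1}$, and separating the two terms of the sum yields the final expression.

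I expect the only genuine subtlety to lie in the bookkeeping of that crucial step. One must check carefully, via $d$-separation in \Cref{fig:w_confounder}, that $\{A,C\}$ blocks every $I$--$Y$ path — the direct path $I\to A\to Y$ is cut at $A$, and the path $I\to A\leftarrow C\to Y$ is cut at the non-collider $C$ — so that conditioning on $Y$ truly adds nothing about $I$ once $A$ and $C$ are held fixed. One must also keep track of the fact that the remaining expectation over $I$ in the last two lines is taken under the conditional law of $I$ given $\{A=a,C=c\}$ and not under its marginal, with the marginal independence $I \ind C$ simplifying only the $C$-component of that conditioning; everything else is routine manipulation of conditional densities.
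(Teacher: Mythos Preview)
Your proposal is correct and follows exactly the paper's own derivation, which is embedded in the statement itself as the chain of equalities together with the one-line justification that follows the proposition. Your extra remark that the innermost expectation over $I$ is taken under the law of $I$ given $(A=a,C=c)$, not under its marginal (since $A$ is a collider for $I$ and $C$), is a careful reading of the paper's somewhat compressed notation $\E[\cdot]$ and is worth keeping.
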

The last equality in \Cref{prop:mr-weights-confounders} relies on the assumption $I \ind Y \mid A,C$. Under this conditional independence, the expectation term $\E\left[\pi(I,c)^{-1}\{1-\pi(I,c)\}^{-1}\left\{a-\pi(I,c)\right\}\right]\,$ is independent of $Y$, and thus the variation in the instrumental variable $I$ is averaged out. Consequently, the instrumental variable $I$ does not affect the MR weights once we condition on $Y$, as its influence remains constant across different values of $Y$.

By using conditional expectation, both the AMR and MR weights effectively filter useful information from all variables, achieving a more efficient aggregation than stepwise variable selection. \Cref{fig:projection} provides a geometric view of this information filtering step. The conditional expectation,  $\E\left[h\left(A,X\right) \middle| Y \right]$,  is the orthogonal projection of $h(A,X)$ onto the closed subspace of all random variables that are measurable with respect to the $\sigma$-algebra generated by $Y$ (denoted $\sigma(Y)$). The weight $w$ in the MR estimator presents the component containing all the information about  $h(A,X)$ that can be inferred from $Y$, discarding the orthogonal (uninformative) component. In particular, the term $h(A,X) - \E\left[h(A,X)\middle|Y\right]$ is orthogonal to $\sigma(Y)$, so it does not contribute to the estimation $\E\left[h(A,X)Y\right]$. Formally, 
$\E\Big[\big\{h(A,X) - \E\left[h(A,X)\middle|Y\right]\big\}Y\Big] = 0$.   Excluding this orthogonal term effectively removes the noise that is uninformative for $Y$, reducing the variance without introducing bias. This highlights the difference between our work and the e-score---we present a short discussion and two alternatives in \Cref{sec:e-score} for interested readers.

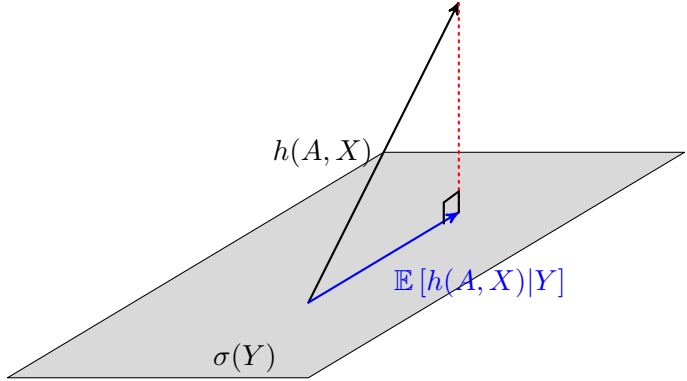
\begin{figure}[ht]
\centering
\begin{tikzpicture}[scale=2, line cap=round, line join=round, >=stealth']

    \coordinate (O) at (0,0);             
    \coordinate (vo) at (2, 0.5); 
    \coordinate (A) at (2.,0);         
    \coordinate (B) at (2.5,1.5);         

    \fill[gray!30] (O) -- (A) -- ($(A)+(B)$) -- ($(O)+(B)$) -- cycle;
    \draw (O) -- (A) -- ($(A)+(B)$) -- ($(O)+(B)$) -- cycle;

    \node[below right] at ($(O)!0.5!(A)!0.2!(B)$) {$\sigma(Y)$};

    \coordinate (H) at (3.,2.5); 
    \draw[->, thick] (vo) -- (H)
        node[midway, left] {\(h(A,X)\)};

    \coordinate (P) at (3,1.1); 
    \draw[->, thick, blue] (vo) -- (P)
        node[midway, below right, text=blue] {$\E\left[h(A,X)\middle| Y\right] $};

    \draw[dotted, thick, red] (P) -- (H);

    \coordinate (marker_start) at ($(P)!0.1!(H)$); 
    \coordinate (marker_horiz) at ($(marker_start)!0.1!(vo)$); 
    \coordinate (marker_vert) at ($(marker_start)!1!(P)$);  

    \draw[thick] (marker_start) -- (marker_horiz);
    \draw[thick] (marker_start) -- (marker_vert);
    \draw[thick] (marker_horiz) -- ($ (marker_horiz) + (marker_vert) - (marker_start) $);

\end{tikzpicture}
\caption{Projection of $h(A,X)$ onto $\sigma(Y)$.}
\label{fig:projection}
\end{figure}

Despite the similarities shared with methods directly targeted at covariate adjustment, we would like to emphasize that the proposed MR and AMR estimators build on outcome adjustment, which is different from previous re-weighting or balancing methods that focus on the covariate space.  Methods like the e-score and DOPE, although they point to the importance of adjusting propensity score estimation via estimated outcome mean (or its bias), still project all information to the covariate space (e.g.~e-score involves regressing the outcome mean estimation bias on covariates $X$), losing information from $Y$. Introducing another layer of such model assumption for projection on $X$ also requires more decision steps, resulting in higher risk of model misspecification. Our proposed estimators extract information without imposing structural assumptions. Viewing the conditional expectation as a \emph{post-hoc} adjustment allows us to flexibly model the nuisance parameters and circumvent the propagation of uncertainty and bias introduced by \emph{a priori} assumptions. With that being said, $w^{*}(Y^{*})$ is not necessarily an ``efficient adjustment set'' as discussed in \citet{christgau2024efficient}, because we shift the focus to the outcome space $Y$ or $Y^*$. We will see the difference more clearly in $\hat{\theta}_{AMR}$'s asymptotic variance provided in \Cref{sec:asymptotic}.

\section{Double robustness and asymptotic normality}\label{sec:asymptotic}
In this section we discuss the double robustness and the asymptotic normality of $\hat{\theta}_{AMR}$. 
\subsection{Asymptotic properties}\label{subsec:asymptotic}

\begin{theorem}[Double robustness of $\hat{\theta}_{AMR}$]
\label{thm:DR-AMR}
Suppose the following are satisfied:
\begin{itemize}

\item \textbf{Boundedness:} 

$\pi(x)^{-1}$ and
$\mu^a(x)$, $a\in\{0,1\}$ along with their corresponded estimators are bounded; $Y$, $Y^{*}$, and $\hat{Y}^{*}$ are square-integrable;

\item \textbf{Consistency of $\hat{w}^{*}(\cdot)$}: $ \lVert\hat{w}^{*}-w^{*0}\rVert_2 = \operatorname{o}_\P(1)$.
\end{itemize}

Then, if either $\lVert \hat{\pi} - \pi \rVert_2 \;\overset{P}{\longrightarrow}\;0$, or \ $\lVert \hat{\mu}^a-\mu^a\rVert_2 \;\overset{P}{\longrightarrow}\;0,\; \text{for both} \; a \in \{0,1\}$, we have 
$$
\hat{\theta}_{AMR}\;\overset{P}{\longrightarrow}\; \theta.
$$
\end{theorem}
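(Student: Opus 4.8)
The plan is to decompose the estimation error of $\hat\theta_{AMR}$ into three pieces and control each separately, exploiting the cross-fitting in \Cref{alg:amr} so that, on a held-out fold $D_k$, the estimated functions $\hat\pi_{-k}$, $\hat\mu^a_{-k}$ and the fitted weight $\hat w^{*}_{-k}$ may be treated as fixed. Writing all expectations and $L_2(\P)$ norms conditionally on the training fold $D_{-k}$ (and suppressing the fold index), I would write
\begin{align*}
\hat\theta_{AMR} - \theta
&= \underbrace{\P_n\!\left[\bigl(\hat w^{*}(\hat Y^{*}) - w^{*0}(\hat Y^{*})\bigr)\hat Y^{*}\right]}_{(\mathrm{I})}
+ \underbrace{(\P_n - \P)\!\left[w^{*0}(\hat Y^{*})\hat Y^{*}\right]}_{(\mathrm{II})}
+ \underbrace{\E\!\left[w^{*0}(\hat Y^{*})\hat Y^{*}\right] - \theta}_{(\mathrm{III})},
\end{align*}
and it then suffices to show each term is $\operatorname{o}_\P(1)$, after which aggregating over the $K$ fixed folds preserves the rates.

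For $(\mathrm{I})$, Cauchy--Schwarz gives $\bigl|\E[(\hat w^{*} - w^{*0})(\hat Y^{*})\,\hat Y^{*}\mid D_{-k}]\bigr| \le \lVert \hat w^{*} - w^{*0}\rVert_2 \, \lVert \hat Y^{*}\rVert_2 = \operatorname{o}_\P(1)\cdot\operatorname{O}_\P(1)$ by the consistency hypothesis on $\hat w^{*}$ and the square-integrability of $\hat Y^{*}$ (which, together with boundedness of $\hat\mu^a$, $\hat\pi$, is uniformly $\operatorname{O}(1)$); passing from this conditional mean to $(\mathrm{I})$ itself follows from a conditional Markov inequality on the held-out data. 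For $(\mathrm{II})$, note $w^{*0} = \E[\hat h(A,X)\mid \hat Y^{*}]$ is bounded because $\hat h(A,X)$ is bounded under the boundedness assumption (which forces $\hat\pi$ away from $0$ and $1$); combined with square-integrability of $\hat Y^{*}$ this yields $\Var\bigl(w^{*0}(\hat Y^{*})\hat Y^{*}\mid D_{-k}\bigr)=\operatorname{O}_\P(1)$, so a conditional Chebyshev inequality gives $(\mathrm{II}) = \operatorname{O}_\P(n^{-1/2}) = \operatorname{o}_\P(1)$.

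The substantive term is $(\mathrm{III})$. The key step is the tower property: since $w^{*0}(\hat Y^{*}) = \E[\hat h(A,X)\mid \hat Y^{*}]$ and $\hat Y^{*}$ is $\sigma(\hat Y^{*})$-measurable,
\begin{align*}
\E\!\left[w^{*0}(\hat Y^{*})\hat Y^{*}\right] = \E\!\left[\E[\hat h(A,X)\mid \hat Y^{*}]\,\hat Y^{*}\right] = \E\!\left[\hat h(A,X)\,\hat Y^{*}\right],
\end{align*}
so the oracle-weight AMR functional coincides in population with the plug-in AIPW functional. Then, using the algebraic identity $\hat h(A,X)\hat Y^{*} = \frac{A}{\hat\pi}(Y-\hat\mu^1) - \frac{1-A}{1-\hat\pi}(Y-\hat\mu^0) + (\hat\mu^1-\hat\mu^0)$ and taking conditional expectations given $X$ (invoking unconfoundedness and consistency, so that $\E[AY\mid X] = \pi\mu^1$ and $\E[(1-A)Y\mid X] = (1-\pi)\mu^0$), one arrives at the familiar doubly robust remainder
\begin{align*}
\E\!\left[w^{*0}(\hat Y^{*})\hat Y^{*}\right] - \theta = \E\!\left[\frac{(\pi - \hat\pi)(\mu^1 - \hat\mu^1)}{\hat\pi}\right] + \E\!\left[\frac{(\pi - \hat\pi)(\mu^0 - \hat\mu^0)}{1-\hat\pi}\right].
\end{align*}
By Cauchy--Schwarz and boundedness of $\hat\pi^{-1}$, $(1-\hat\pi)^{-1}$ and of $\mu^a-\hat\mu^a$, each summand is bounded above both by a constant times $\lVert \hat\pi - \pi\rVert_2$ and by a constant times $\lVert \hat\mu^a - \mu^a\rVert_2$; hence $(\mathrm{III}) = \operatorname{o}_\P(1)$ whenever $\lVert\hat\pi - \pi\rVert_2 \overset{P}{\longrightarrow}0$ or $\lVert\hat\mu^a - \mu^a\rVert_2\overset{P}{\longrightarrow}0$ for both $a$, which is exactly the stated condition. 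Combining the three bounds gives $\hat\theta_{AMR}\overset{P}{\longrightarrow}\theta$.

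I expect the main obstacle to be the bookkeeping in $(\mathrm{III})$: making the tower-property step legitimate requires that the \emph{oracle} weight $w^{*0}$ (the population conditional expectation of $\hat h$ given $\hat Y^{*}$) appear rather than the fitted $\hat w^{*}$, which is precisely why term $(\mathrm{I})$ must be peeled off first and why the consistency hypothesis is phrased against $w^{*0}$, not against the population weight $w^{*}$. A secondary point needing care is that all conditional-expectation and conditional-variance arguments rely on the cross-fitting of \Cref{alg:amr} to decouple the estimated functions from the held-out observations; without it one would instead have to impose Donsker-type empirical-process conditions on the classes containing $\hat w^{*}$ and the nuisance estimators.
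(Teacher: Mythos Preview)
Your proof is correct and follows essentially the same decomposition as the paper's: the paper writes $\hat\theta_{AMR}-\theta$ as (A) + (B) + (C), where (A) is your $(\mathrm{I})$ and (B)+(C) equals your $(\mathrm{II})+(\mathrm{III})$, just regrouped around the \emph{empirical} AIPW estimator $\hat\theta_{AIPW}=\P_n[\hat h(A,X)\hat Y^{*}]$ rather than the population plug-in $\E[\hat h(A,X)\hat Y^{*}]$. The only substantive difference is that the paper then invokes the known double robustness of $\hat\theta_{AIPW}$ as a black box for its term (C), whereas you compute the population-level doubly robust remainder explicitly; both routes hinge on the same tower-property identity $\E[w^{*0}(\hat Y^{*})\hat Y^{*}]=\E[\hat h(A,X)\hat Y^{*}]$, and your explicit remainder calculation is arguably the cleaner way to expose why either nuisance being consistent suffices.
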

The proof is provided in \Cref{proof:DR-AMR}. 

Since $w^{*0}$ is a one-dimensional regression function,  achieving $ \lVert\hat{w}^{*}-w^{*0}\rVert_2 = \operatorname{o}_\P(1)$ is relatively straightforward using methods like parametric or kernel smoothing. This represents a notable advantage over preprocessing strategies like representation learning or variable selection techniques (e.g., \citealt{benkeser2020nonparametric, gui2022causal}) that require $\hat{\mu}^a$ to be consistent,  as well as approaches like \citealt{christgau2024efficient}, which hinges on estimating $\beta$ in (\ref{eqn:single-index}). The latter methods all involve higher-dimensional estimation, making them comparatively more challenging than our one-dimensional setting.

The proofs are constructed based on the relationships between AIPW and AMR estimators, as illustrated in \Cref{fig:relations}. The crucial link is that $\hat{\theta}^0_{AMR} = \P_n\left[w^{*0}(\hat{Y}^*)\hat{Y}^*\right]$ serves as the empirical mean estimator for $\E\hat{\theta}^0_{AMR}$ , which is identical to $\E\hat{\theta}_{AIPW}$. This connection links $\hat{\theta}_{AIPW}$ to $\theta$ in terms of consistency.

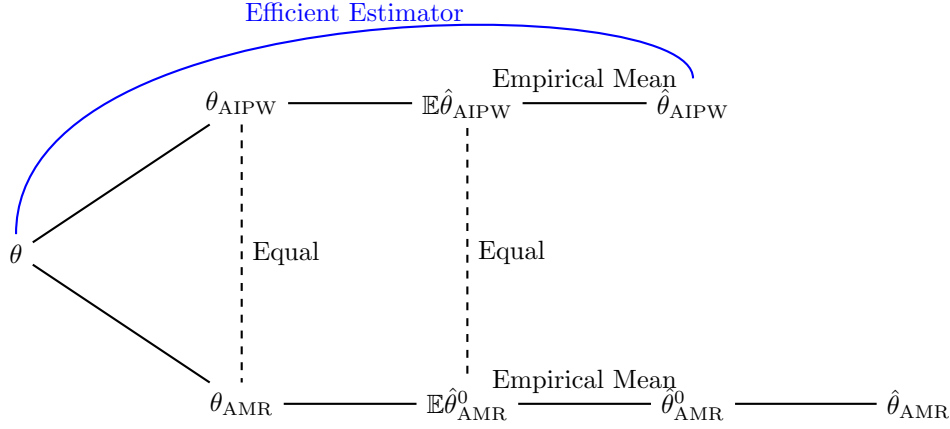
\begin{figure}[ht]
    \centering
    \begin{tikzpicture}[
        every node/.style={font=\small, align=center},
        line/.style={thick, -},
        label/.style={font=\small, midway, above}
    ]
        \node (theta) at (0,0) {$\theta$};
        
        \node (theta_aipw) at (3,2) {$\theta_{\text{AIPW}}$};
        \node (E_aipw)     at (6,2) {$\mathbb{E}\hat{\theta}_{\text{AIPW}}$};
        \node (hat_aipw)   at (9,2) {$\hat{\theta}_{\text{AIPW}}$};
        
        \node (theta_amr) at (3,-2) {$\theta_{\text{AMR}}$};
        \node (E_amr)     at (6,-2) {$\mathbb{E}\hat{\theta}^0_{\text{AMR}}$};
        \node (hat_amr_oracle)   at (9,-2) {$\hat{\theta}^0_{\text{AMR}}$};
        \node (hat_amr)   at (12,-2) {$\hat{\theta}_{\text{AMR}}$};
        
        \draw[line] (theta) -- (theta_aipw);
        \draw[line] (theta_aipw) -- (E_aipw);
        \draw[line] (E_aipw) -- node[label]{Empirical Mean} (hat_aipw);
        
        \draw[line] (theta) -- (theta_amr);
        \draw[line] (theta_amr) -- (E_amr);
        \draw[line] (E_amr) -- node[label]{Empirical Mean} (hat_amr_oracle);
        \draw[line] (hat_amr_oracle) -- (hat_amr);
        
        \draw[line, dashed] (theta_aipw) -- node[midway, right]{Equal} (theta_amr);
        \draw[line, dashed] (E_aipw) -- node[midway, right]{Equal} (E_amr);
        
        \draw[line, thick, blue] (hat_aipw.north) 
            .. controls (9,3.5) and (0,3.5) .. (theta.north)
            node[midway, above, font=\small]{Efficient Estimator};
    \end{tikzpicture}
\caption{Relationships of  AIPW and AMR estimators.}
    \label{fig:relations}
\end{figure}
We now focus on establishing asymptotic normality. For simplicity, we denote (\ref{eqn:eif}) as $\varphi_{AIPW}(Z;\theta,\xi)$. With this notation, the estimating function for $\theta_{AMR}$ can be expressed as
\begin{equation*}
\label{eif_amr}
\varphi_{AMR}\big(Z;\theta, \xi\big) = w^{*}\left(Y^{*}\right)Y^*-\theta.
\end{equation*}
By definition, we have $\E\varphi_{AMR}\big(Z;\theta, \xi\big)=0$. 

Next, consider the estimating function for $\hat{\theta}^0_{AMR}$:
\begin{equation*}
\label{eqn:eif_amr_oracle} 
 \varphi_{AMR}\left(Z;\theta,\hat{\xi}\right) = w^{*0}(\hat{Y}^{*})\hat{Y}^{*}-\theta.
\end{equation*}
The estimator $\hat{\theta}^0_{AMR}$ is actually defined as the solution to $\P_n\varphi_{AMR}\big(Z;\hat{\theta}^0_{AMR},  \hat{\xi}\big) = 0$. With these expressions in hand, we now turn our attention to establishing the asymptotic normality of $\hat{\theta}^0_{AMR}$. 

\begin{theorem}[Asymptotic normality of $\hat{\theta}^0_{AMR}$]\label{thm:asymp_normal}
Assume that \Cref{ass:common} and the conditions in \Cref{thm:DR-AMR} hold, with the other conditions required for $\hat{\theta}_{AIPW}$ to achieve asymptotic normality are also satisfied:
\begin{enumerate}
    \item $\lVert \hat{\pi}-\pi \rVert_2 \lVert \hat{\mu}^a -\mu^a \rVert_2 \;=\; \operatorname{o}_\P(n^{-1/2}),$
    \item $\lVert\hat{\pi} -\pi\rVert_2 \overset{P}{\longrightarrow}0 \text{ and } \lVert\hat{\mu}^a -\mu\rVert_2 \overset{P}{\longrightarrow}0$
\end{enumerate}

for $a\in \{0,1\}$. Under these assumptions, the estimator  $\hat{\theta}^0_{AMR}$ satisfies:
$$
\sqrt{n}\left(\hat{\theta}^0_{AMR} - \theta\right)  \;\rightsquigarrow\; \mathcal{N}\left(0,\sigma^2_{AMR,0}\right)
$$
where $\sigma^2_{AMR,0} = \Var \left\{\varphi_{AMR}\left(Z;\theta,\xi\right)\right\}$.
\end{theorem}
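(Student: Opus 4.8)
The plan is to reduce the asymptotics of $\hat{\theta}^0_{AMR}$ to those of $\hat{\theta}_{AIPW}$ exactly along the chain of identities indicated in \Cref{fig:relations}. The key algebraic fact is that, since $\hat{Y}^{*}$ is $\sigma(\hat{Y}^{*})$-measurable, the tower property gives $w^{*0}(\hat{Y}^{*})\hat{Y}^{*}=\E[\hat{h}(A,X)\hat{Y}^{*}\mid\hat{Y}^{*}]$; hence, writing $\E_{\hat\xi}$ for expectation over a fresh copy of $Z$ with the cross-fitted nuisances $\hat\xi$ held fixed, $\E_{\hat\xi}[w^{*0}(\hat{Y}^{*})\hat{Y}^{*}]=\E_{\hat\xi}[\hat{h}(A,X)\hat{Y}^{*}]$, which I call $\theta_{AIPW}(\hat\xi)$. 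Using cross-fitting so that the summands are conditionally i.i.d.\ given $\hat\xi$, and recalling $\E[\varphi_{AMR}(Z;\theta,\xi)]=0$, I would write
\begin{align*}
\sqrt{n}\bigl(\hat{\theta}^0_{AMR}-\theta\bigr) &= \underbrace{\sqrt{n}\,\P_n\bigl[\varphi_{AMR}(Z;\theta,\xi)\bigr]}_{T_1} + \underbrace{\sqrt{n}\,(\P_n-\E_{\hat\xi})\bigl[w^{*0}(\hat{Y}^{*})\hat{Y}^{*}-w^{*}(Y^{*})Y^{*}\bigr]}_{T_2} \\
&\qquad + \underbrace{\sqrt{n}\,\bigl(\theta_{AIPW}(\hat\xi)-\theta\bigr)}_{T_3}.
\end{align*}

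The term $T_1$ produces the asymptotic law: $\varphi_{AMR}(Z;\theta,\xi)=w^{*}(Y^{*})Y^{*}-\theta$ has mean zero and is square-integrable, because $w^{*}(Y^{*})=\E[h(A,X)\mid Y^{*}]$ is bounded (the clever covariate $h$ is bounded under the boundedness hypotheses) while $Y^{*}$ is square-integrable, so the classical CLT yields $T_1\rightsquigarrow\mathcal{N}\bigl(0,\Var\{\varphi_{AMR}(Z;\theta,\xi)\}\bigr)=\mathcal{N}(0,\sigma^2_{AMR,0})$. The term $T_3$ is the classical second-order remainder of AIPW with estimated nuisances: conditioning on $X$ and using $\E[A\mid X]=\pi(X)$ gives the standard identity $\E_{\hat\xi}[\hat{h}(A,X)\hat{Y}^{*}\mid X]-\{\mu^1(X)-\mu^0(X)\}=\{\pi(X)-\hat\pi(X)\}\bigl[\{\mu^1(X)-\hat\mu^1(X)\}/\hat\pi(X)+\{\mu^0(X)-\hat\mu^0(X)\}/\{1-\hat\pi(X)\}\bigr]$, so by Cauchy--Schwarz, boundedness of $\hat\pi^{-1}$ and $(1-\hat\pi)^{-1}$, and rate assumption~1, $|\theta_{AIPW}(\hat\xi)-\theta|\lesssim\lVert\hat\pi-\pi\rVert_2(\lVert\hat\mu^1-\mu^1\rVert_2+\lVert\hat\mu^0-\mu^0\rVert_2)=\operatorname{o}_\P(n^{-1/2})$, whence $T_3=\operatorname{o}_\P(1)$.

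The real work is in $T_2$, and I expect it to be the main obstacle. By cross-fitting $\E[T_2\mid\hat\xi]=0$ and $\Var[T_2\mid\hat\xi]\le\E_{\hat\xi}\bigl[\{w^{*0}(\hat{Y}^{*})\hat{Y}^{*}-w^{*}(Y^{*})Y^{*}\}^2\bigr]$, so it suffices to prove $\lVert w^{*0}(\hat{Y}^{*})\hat{Y}^{*}-w^{*}(Y^{*})Y^{*}\rVert_2\overset{P}{\longrightarrow}0$. I would split this into (a) $w^{*0}(\hat{Y}^{*})\{\hat{Y}^{*}-Y^{*}\}$, controlled since $w^{*0}$ is bounded (a conditional expectation of the bounded $\hat{h}$) and $\lVert\hat{Y}^{*}-Y^{*}\rVert_2=\lVert\hat\mu^{*}-\mu^{*}\rVert_2\overset{P}{\longrightarrow}0$ by consistency of the nuisances; (b) $\E[\hat{h}-h\mid\hat{Y}^{*}]\,Y^{*}$, controlled by contractivity of conditional expectation, $\lVert\hat{h}-h\rVert_2\lesssim\lVert\hat\pi-\pi\rVert_2\overset{P}{\longrightarrow}0$, and a moment bound on $Y^{*}$ given $X$; and (c) $\bigl\{\E[h\mid\hat{Y}^{*}]-\E[h\mid Y^{*}]\bigr\}Y^{*}$, the difference of conditional expectations of the \emph{same} integrand with respect to the perturbed conditioning variable $\hat{Y}^{*}=Y^{*}+\{\mu^{*}-\hat\mu^{*}\}(X)$ versus $Y^{*}$. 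Term (c) is the crux: conditional expectation is not continuous under perturbation of the conditioning $\sigma$-algebra without extra regularity, so one must invoke regularity of the conditional law of $(A,X)$ given $Y^{*}$---e.g.\ the smooth Gaussian structure of \Cref{ex:explict_weight}, or more generally a bounded and suitably smooth conditional density---to argue that an $L_2$-small perturbation of the conditioning variable induces an $L_2$-small change in the conditional expectation; this is precisely where the extra regularity imported from the asymptotic-normality theory of $\hat{\theta}_{AIPW}$ is used. Granting $T_2=\operatorname{o}_\P(1)$, Slutsky's theorem combines the three pieces to give $\sqrt{n}(\hat{\theta}^0_{AMR}-\theta)\rightsquigarrow\mathcal{N}(0,\sigma^2_{AMR,0})$.
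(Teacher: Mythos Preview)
Your decomposition $T_1+T_2+T_3$ is exactly the paper's first proof, which invokes Theorem~5.31 of \citet{van2000asymptotic} to obtain $\hat{\theta}^0_{AMR}-\theta = \E\varphi_{AMR}(Z;\theta,\hat\xi) + (\P_n-\E)\varphi_{AMR}(Z;\theta,\xi) + \operatorname{o}_\P\bigl(n^{-1/2}+\lvert\E\varphi_{AMR}(Z;\theta,\hat\xi)\rvert\bigr)$ in one line, and then uses the tower identity $\E\varphi_{AMR}(Z;\theta,\hat\xi)=\E\varphi_{AIPW}(Z;\theta,\hat\xi)$ to reduce the drift to the AIPW second-order remainder---precisely your $T_3$. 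The paper also offers a second route via $\E\hat\theta^0_{AMR}=\E\hat\theta_{AIPW}$ and separate CLTs, but concedes that identifying the limiting variance as $\Var\{\varphi_{AMR}(Z;\theta,\xi)\}$ is ``not obvious using this method.''

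Where you go further than the paper is in unpacking $T_2$ and isolating term~(c) as the genuine obstruction: stability of $\E[h\mid\cdot]$ under perturbation of the conditioning variable from $Y^*$ to $\hat{Y}^*$. This is a real issue, but your proposed resolution is wrong. The conditions you describe as ``imported from the asymptotic-normality theory of $\hat\theta_{AIPW}$'' are exactly items~1 and~2 of the theorem---the product rate and $L_2$-consistency of $\hat\pi,\hat\mu^a$---and they say nothing about smoothness of $y^*\mapsto\E[h\mid Y^*=y^*]$ or about continuity of conditional expectation under perturbation of the conditioning $\sigma$-algebra. The paper sidesteps~(c) by citing Theorem~5.31, whose hypotheses implicitly include $\lVert\varphi_{AMR}(Z;\theta,\hat\xi)-\varphi_{AMR}(Z;\theta,\xi)\rVert_2\to 0$---exactly the content of your~(c)---without verifying it from the stated assumptions. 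So you have not created a new gap; you have surfaced one the paper leaves implicit. To close it you would need an additional primitive condition on the law of $Y^*$ (e.g.\ a bounded, Lipschitz conditional density as in \Cref{ex:explict_weight}), not the AIPW rate conditions.
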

By comparing $\varphi_{AIPW}\left(Z;\theta,\xi\right)$ and $\varphi_{AMR}\left(Z;\theta,\xi\right)$, we arrive at the following result:
\begin{proposition}[Efficiency bound comparison]
    \label{prop:eb_comparison}
    
    $$\Var\left\{\varphi_{AIPW}\left(Z;\theta,\xi\right)\right\} - \Var\left\{\varphi_{AMR}(Z;\theta,\xi)\right\} = \Var\left\{h(A,X)Y^*-w^*(Y^*)Y^*\right\}\geq 0.$$
\end{proposition}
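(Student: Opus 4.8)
The plan is to observe that $\varphi_{AMR}(Z;\theta,\xi)$ is precisely the $L^2(\P)$-projection of $\varphi_{AIPW}(Z;\theta,\xi)$ onto the $\sigma$-algebra $\sigma(Y^*)$, and then to read off the claimed identity from the orthogonality (Pythagorean) property of that projection. Throughout, all conditional expectations and variances are well-defined because, under the boundedness hypotheses carried over from \Cref{thm:DR-AMR}, $\pi^{-1}$ (hence $h$) is bounded and $Y^*$ is square-integrable, so $h(A,X)Y^*$ is square-integrable.

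First I would note that, since $Y^*$ is $\sigma(Y^*)$-measurable, we may pull it inside the conditional expectation defining $w^*$:
\[
w^*(Y^*)\,Y^* \;=\; \E\big[h(A,X)\bigm| Y^*\big]\,Y^* \;=\; \E\big[h(A,X)Y^*\bigm| Y^*\big].
\]
Subtracting the constant $\theta$ from both sides gives $\varphi_{AMR}(Z;\theta,\xi) = \E\big[\varphi_{AIPW}(Z;\theta,\xi)\bigm| Y^*\big]$, using the representation $\varphi_{AIPW}(Z;\theta,\xi) = h(A,X)Y^*-\theta$ from \eqref{eqn:eif}.

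Next I would set $R := \varphi_{AIPW}(Z;\theta,\xi) - \varphi_{AMR}(Z;\theta,\xi) = h(A,X)Y^* - w^*(Y^*)Y^*$, which is exactly the random variable appearing on the right-hand side of the proposition. As the projection residual it satisfies $\E[R\mid Y^*]=0$; in particular $\E R = 0$, and since $\varphi_{AMR}(Z;\theta,\xi)$ is a function of $Y^*$,
\[
\Cov\big(\varphi_{AMR}(Z;\theta,\xi),\,R\big) \;=\; \E\big[\varphi_{AMR}(Z;\theta,\xi)\,\E[R\mid Y^*]\big] \;=\; 0.
\]
Writing $\varphi_{AIPW} = \varphi_{AMR} + R$ and taking variances then yields $\Var\{\varphi_{AIPW}(Z;\theta,\xi)\} = \Var\{\varphi_{AMR}(Z;\theta,\xi)\} + \Var(R)$, which is the stated equality, and $\Var(R)\ge 0$ gives the inequality.

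I do not anticipate a genuine obstacle here: once the projection structure is recognized the rest is bookkeeping, and the only point needing care is the integrability justification above so that the conditional expectations, the covariance, and the variance decomposition are all legitimate. As an alternative phrasing of the same computation, one could invoke the law of total variance directly, writing the difference as $\E\big[\Var\big(h(A,X)Y^*\bigm| Y^*\big)\big]$ and then noting this equals $\E\big[\big(h(A,X)Y^*-w^*(Y^*)Y^*\big)^2\big] = \Var(R)$ because $R$ has mean zero.
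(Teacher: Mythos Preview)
Your proposal is correct and follows essentially the same approach as the paper: both arguments rest on the orthogonality $\E\big[\{h(A,X)-w^*(Y^*)\}f(Y^*)\big]=0$, i.e.\ that $w^*(Y^*)Y^*$ is the $L^2$-projection of $h(A,X)Y^*$ onto $\sigma(Y^*)$, and then read off the variance identity from the resulting Pythagorean decomposition. The paper expands the variances more explicitly before invoking the zero-covariance step, whereas you phrase it directly as $\varphi_{AIPW}=\varphi_{AMR}+R$ with $R\perp\varphi_{AMR}$, but the content is the same.
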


The proof of \Cref{prop:eb_comparison} can be found in the Appendix. It implies that $\hat{\theta}^0_{AMR}$ is more asymptotically efficient than $\hat{\theta}_{AIPW}$. At first glance, this result might be surprising---why would our estimator surpass the AIPW estimator? Yet, as noted in \Cref{subsec:related_work}, employing an efficient adjustment set or representation can indeed boost estimation efficiency. A key goal of our approach is to extract and retain only the most informative features/representation, and this information filtering in AMR, which parallels the intuition explained in \Cref{subsec:connect_efficient_set}, leads to a reduced efficiency bound.

A consistent estimator for the variance $\sigma_{AMR,0}^2$ is
$$\hat{\sigma}^2_{AMR,0} = \P_n\left(w^{*0}(\hat{Y}^*)\hat{Y}^*-\hat{\theta}^0_{AMR}\right)^2
 $$ provided that $w^{*0}$ is available.

In practice, however, $w^{*0}$ is typically unknown and must be estimated. Consequently, to extend the asymptotic normality result to the estimator $\hat{\theta}_{AMR}$ (which uses an estimated version of $w^{*0}$), a more stringent assumption is required. 

\begin{theorem}[Asymptotic normality of $\hat{\theta}_{AMR}$]\label{thm:asymp_normal_strict}Under all the conditions specified in \Cref{thm:asymp_normal} and an extra requirement,
\begin{equation}
    \lVert \hat{w}^{*} - w^{*0}   \rVert_2 = \operatorname{o}_\P(n^{-1/2}),
\label{eqn:strict_convergence}
\end{equation}
the estimator $\hat{\theta}_{AMR}$ satisfies
$$
\sqrt{n}\left(\hat{\theta}_{AMR} - \theta\right)  \;\rightsquigarrow\; \mathcal{N}\left(0,\sigma_{AMR,0}^2\right).
$$
\end{theorem}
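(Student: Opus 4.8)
The plan is to show that $\hat\theta_{AMR}$ and $\hat\theta^0_{AMR}$ differ by $\operatorname{o}_\P(n^{-1/2})$, so that the conclusion follows immediately from Theorem~\ref{thm:asymp_normal} and Slutsky's theorem. Writing out the difference,
\begin{equation*}
\hat\theta_{AMR} - \hat\theta^0_{AMR} = \P_n\!\left[\hat w^*(\hat Y^*)\hat Y^* - w^{*0}(\hat Y^*)\hat Y^*\right] = \P_n\!\left[\bigl\{\hat w^*(\hat Y^*) - w^{*0}(\hat Y^*)\bigr\}\hat Y^*\right].
\end{equation*}
So the goal reduces to showing $\P_n[\Delta(\hat Y^*)\hat Y^*] = \operatorname{o}_\P(n^{-1/2})$ where $\Delta := \hat w^* - w^{*0}$. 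I would decompose this empirical average as $(\P_n - \P)[\Delta(\hat Y^*)\hat Y^*] + \P[\Delta(\hat Y^*)\hat Y^*]$ and bound each piece.

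For the centered (empirical process) term, the standard device is cross-fitting: since $\hat w^*_{-k}$ and $\hat Y^*_{i,k}$ on fold $D_k$ are built from the independent sample $D_{-k}$, conditionally on $D_{-k}$ the summands over $i\in D_k$ are i.i.d., so a conditional Chebyshev/variance bound gives $(\P_n-\P)[\Delta(\hat Y^*)\hat Y^*] = \operatorname{O}_\P\!\bigl(n^{-1/2}\lVert \Delta(\hat Y^*)\hat Y^*\rVert_2\bigr)$. Using boundedness of $\hat Y^*$ in $L_2$ (from the Boundedness hypothesis, together with boundedness of $\pi^{-1}$, $\hat\pi^{-1}$ and the $\mu$'s, $\hat\mu$'s) and a change-of-variables/Cauchy--Schwarz argument to pass from $\lVert\Delta(\hat Y^*)\rVert_{2}$ to $\lVert\hat w^*-w^{*0}\rVert_2$, this is $\operatorname{o}_\P(n^{-1/2})$ by the new assumption~(\ref{eqn:strict_convergence}). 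For the bias term $\P[\Delta(\hat Y^*)\hat Y^*]$, Cauchy--Schwarz gives $\lvert\P[\Delta(\hat Y^*)\hat Y^*]\rvert \le \lVert\Delta(\hat Y^*)\rVert_2\,\lVert\hat Y^*\rVert_2$, and again $\lVert\hat Y^*\rVert_2 = \operatorname{O}_\P(1)$ while $\lVert\Delta(\hat Y^*)\rVert_2 = \lVert \hat w^*-w^{*0}\rVert_2 = \operatorname{o}_\P(n^{-1/2})$, so this term is negligible as well. Combining the two bounds yields $\hat\theta_{AMR} - \hat\theta^0_{AMR} = \operatorname{o}_\P(n^{-1/2})$, and then $\sqrt n(\hat\theta_{AMR}-\theta) = \sqrt n(\hat\theta^0_{AMR}-\theta) + \operatorname{o}_\P(1) \rightsquigarrow \mathcal N(0,\sigma^2_{AMR,0})$.

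The main obstacle I anticipate is making rigorous the passage from the $L_2(\P)$ norm of the \emph{function} $\hat w^*-w^{*0}$ (which is how assumption~(\ref{eqn:strict_convergence}) is phrased, as a norm over the argument distribution) to the $L_2$ norm of the \emph{composed random variable} $\Delta(\hat Y^*) = (\hat w^*-w^{*0})(\hat Y^*)$. These coincide cleanly only if $\lVert\cdot\rVert_2$ in~(\ref{eqn:strict_convergence}) is already taken with respect to the law of $\hat Y^*$ (conditionally on the training fold); one must be careful that $\hat Y^*$ and $w^{*0}$ both depend on the estimated nuisances $\hat\xi$, so the reference measure is random, and the argument should be carried out conditionally on $D_{-k}$ with the cross-fitting structure doing the decoupling. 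A secondary, more minor point is verifying the $L_2$ control of $\hat Y^*$ uniformly enough that the $\operatorname{O}_\P(1)$ claims hold; this is where the Boundedness hypothesis on $\pi^{-1}, \hat\pi^{-1}, \mu^a, \hat\mu^a$ and square-integrability of $Y$ are used.
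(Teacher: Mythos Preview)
Your proposal is correct and follows the same overall route as the paper: write $\hat\theta_{AMR}-\hat\theta^0_{AMR}=\P_n\bigl[\{\hat w^*-w^{*0}\}(\hat Y^*)\,\hat Y^*\bigr]$, show this is $\operatorname{o}_\P(n^{-1/2})$ via Cauchy--Schwarz together with~(\ref{eqn:strict_convergence}) and the $L_2$ control of $\hat Y^*$, and then invoke Theorem~\ref{thm:asymp_normal} with Slutsky. The paper's own argument is in fact shorter than yours: it does \emph{not} split into $(\P_n-\P)+\P$ or appeal to cross-fitting, but simply bounds the empirical average directly by $\lVert \hat w^*(\hat Y^*)-w^{*0}(\hat Y^*)\rVert_2\,\lVert\hat Y^*\rVert_2$ and declares this $\operatorname{o}_\P(n^{-1/2})$; your more careful empirical-process decomposition is not needed for the paper's level of rigor, though it does make explicit the measure-theoretic concern you flag about which law $\lVert\cdot\rVert_2$ refers to---a point the paper leaves implicit.
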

When (\ref{eqn:strict_convergence}) is met, we can construct the $(1-\alpha)$ Wald-type confidence interval as 
$$
\left[\hat{\theta}_{AMR} - z_{1-\alpha/2}\frac{\hat{\sigma}_{AMR}}{\sqrt{n}}, \hat{\theta}_{AMR} + z_{1-\alpha/2}\frac{\hat{\sigma}_{AMR}}{\sqrt{n}}\right], 
    \label{eqn:confidence_interval}
$$ where $\hat{\sigma}_{AMR}^2 = \P_n\left(\hat{w}^{*}(\hat{Y}^{*})\hat{Y}^{*} -\hat{\theta}_{AMR}\right)^2$.

However, the convergence condition on (\ref{eqn:strict_convergence}) is very strict and holds only in special cases---for instance, when $w^{*0}(\cdot)$ lies in the reproducing kernel Hilbert space (RKHS) and is infinitely smooth. 

Although the strict condition is rarely satisfied in practice, we still would like to give a reasonably useful confidence interval. We can instead construct a conservative confidence interval using the AIPW estimator. The following proposition provides such an interval:

\begin{proposition}[Conservative confidence interval for $\hat{\theta}_{AMR}$]\label{prop:conservative_ci} Assuming the conditions in \Cref{thm:asymp_normal} hold and, in addition, for every $\epsilon>0$, there exists an integer $n_0$ that $\forall n \geq n_0$,
\begin{equation}
    \P\left(\lvert\hat{\theta}_{AMR}-\hat{\theta}^0_{AMR}\rvert > z_{1-\alpha/2}\cdot\Delta\cdot n^{-1/2}\right)<\epsilon, 
\label{eqn:error_consistency}
\end{equation}
where $$\Delta\coloneqq \Var\left\{\varphi_{AIPW}(Z;\theta,\xi))\right\}- \Var\left\{\varphi_{AMR}(Z;\theta,\xi)\right\}.$$

Then, one may use the confidence interval
\begin{equation}
\label{eqn:ci_conservative}
\left[\hat{\theta}_{AMR} - z_{1-\alpha/2}\frac{\hat{\sigma}'_{AMR}}{\sqrt{n}}, \hat{\theta}_{AMR} + z_{1-\alpha/2}\frac{\hat{\sigma}'_{AMR}}{\sqrt{n}}\right],
\end{equation} 
with $\left(\hat{\sigma}'_{AMR}\right)^2=\P_n\left(\hat{h}(A,X)\hat{Y}^* -\hat{\theta}_{AMR}\right)^2$. This interval is asymptotically conservative, meaning its nominal coverage is at least $1-\alpha$.
\end{proposition}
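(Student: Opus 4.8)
The plan is to decompose $\sqrt{n}(\hat{\theta}_{AMR}-\theta) = U_n + R_n$, where $U_n := \sqrt{n}(\hat{\theta}^0_{AMR}-\theta)$ and $R_n := \sqrt{n}(\hat{\theta}_{AMR}-\hat{\theta}^0_{AMR})$, and then to combine three facts: (i) by \Cref{thm:asymp_normal}, $U_n \rightsquigarrow \mathcal{N}(0,\sigma^2_{AMR,0})$; (ii) by hypothesis (\ref{eqn:error_consistency}), $\P(|R_n| > z_{1-\alpha/2}\Delta)\to 0$, so there is an event $E_n$ with $\P(E_n)\to 1$ on which $|R_n|\le z_{1-\alpha/2}\Delta$, and in particular $\hat{\theta}_{AMR}-\hat{\theta}^0_{AMR}=\operatorname{O}_\P(n^{-1/2})$; and (iii) $(\hat{\sigma}'_{AMR})^2 \overset{P}{\to} \sigma^2_{AIPW} := \Var\{\varphi_{AIPW}(Z;\theta,\xi)\}$, which by \Cref{prop:eb_comparison} equals $\sigma^2_{AMR,0}+\Delta$. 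I will write $\sigma_{AIPW},\sigma_{AMR,0}\ge 0$ for the corresponding standard deviations. Given (i)--(iii), a Slutsky argument produces the limiting coverage and a short comparison shows it is at least $1-\alpha$.

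First I would establish (iii). Under the conditions of \Cref{thm:asymp_normal} both nuisance estimators are $L_2(\P)$-consistent, hence $\hat{\mu}^*\to\mu^*$ and $\hat{Y}^*\to Y^*$ in $L_2(\P)$, while $\hat{h}(A,X)\to h(A,X)$ in $L_2(\P)$ and is uniformly bounded by the boundedness hypothesis of \Cref{thm:DR-AMR}; writing $\hat{h}\hat{Y}^* - hY^* = \hat{h}(\hat{Y}^*-Y^*) + (\hat{h}-h)Y^*$ and applying Cauchy--Schwarz gives $\hat{h}(A,X)\hat{Y}^*\to h(A,X)Y^*$ in $L_1(\P)$, with $(\hat{h}\hat{Y}^*)^2$ dominated by an integrable envelope (of order $Y^2$ plus a constant). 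A cross-fitting law of large numbers---applicable because the nuisances and the weight model are fit on the complementary folds---then yields $\P_n[(\hat{h}\hat{Y}^*)^2]\to\E[(hY^*)^2]$, and together with $\hat{\theta}_{AMR}\overset{P}{\to}\theta$ (\Cref{thm:DR-AMR}) and $\E[hY^*]=\theta$ this gives $(\hat{\sigma}'_{AMR})^2 \overset{P}{\to} \E[(hY^*-\theta)^2] = \Var\{\varphi_{AIPW}(Z;\theta,\xi)\}$.

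Next I would assemble the coverage bound. On $E_n$ the triangle inequality gives $\sqrt{n}|\hat{\theta}_{AMR}-\theta| \le |U_n| + z_{1-\alpha/2}\Delta$, so the coverage probability $\P\big(\sqrt{n}|\hat{\theta}_{AMR}-\theta|\le z_{1-\alpha/2}\hat{\sigma}'_{AMR}\big)$ is at least $\P(E_n) + \P\big(|U_n|\le z_{1-\alpha/2}(\hat{\sigma}'_{AMR}-\Delta)\big) - 1$. Because $\hat{\sigma}'_{AMR}-\Delta \overset{P}{\to}\sigma_{AIPW}-\Delta$ and $U_n\rightsquigarrow\mathcal{N}(0,\sigma^2_{AMR,0})$, Slutsky's theorem gives $\liminf_n \P(\theta\in\text{CI}) \ge 2\Phi\!\big(z_{1-\alpha/2}(\sigma_{AIPW}-\Delta)/\sigma_{AMR,0}\big)-1$, and the right-hand side is at least $1-\alpha$ precisely when $\sigma_{AIPW}-\Delta\ge\sigma_{AMR,0}$, i.e.\ when the remainder budget allowed by (\ref{eqn:error_consistency}) does not exceed the gap between the conservative half-width $z_{1-\alpha/2}\sigma_{AIPW}$ and the (infeasible) efficient half-width $z_{1-\alpha/2}\sigma_{AMR,0}$. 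An equivalent route splits the non-coverage event directly through $|\hat{\theta}_{AMR}-\theta|\le|\hat{\theta}_{AMR}-\hat{\theta}^0_{AMR}|+|\hat{\theta}^0_{AMR}-\theta|$, bounding the first contribution by (\ref{eqn:error_consistency}) and the second by Slutsky applied to $\sqrt{n}(\hat{\theta}^0_{AMR}-\theta)/\hat{\sigma}'_{AMR}$.

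The hard part will be this last accounting step rather than any single estimate: one must check that the $\operatorname{O}_\P(n^{-1/2})$ perturbation $\hat{\theta}_{AMR}-\hat{\theta}^0_{AMR}$ is dominated by the slack gained from inflating the variance from $\sigma^2_{AMR,0}$ to $\sigma^2_{AIPW}$, so that the wider interval over-covers even though $\hat{\theta}_{AMR}$ need not be asymptotically $\mathcal{N}(\theta,\sigma^2_{AMR,0}/n)$ in the absence of the strict rate (\ref{eqn:strict_convergence}). This is exactly what hypothesis (\ref{eqn:error_consistency}) is calibrated to deliver, and it is why the conclusion is conservativeness rather than exact coverage. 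The remaining work---the uniform-integrability and cross-fitting argument behind (iii), and verifying that $z_{1-\alpha/2}(\sigma_{AIPW}-\Delta)/\sigma_{AMR,0}$ is a continuity point of the limiting Gaussian law (which holds whenever $\sigma_{AMR,0}>0$)---is routine given the boundedness and square-integrability assumptions inherited from \Cref{thm:DR-AMR}.
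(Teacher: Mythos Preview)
Your approach is essentially the paper's: decompose $\hat{\theta}_{AMR}-\theta$ into the oracle piece $\hat{\theta}^0_{AMR}-\theta$ (handled by \Cref{thm:asymp_normal}) plus the perturbation $\hat{\theta}_{AMR}-\hat{\theta}^0_{AMR}$ (controlled by hypothesis~(\ref{eqn:error_consistency})), show $(\hat{\sigma}'_{AMR})^2\overset{P}{\to}\Var\{\varphi_{AIPW}\}$, and argue that the slack from inflating $\hat{\sigma}_{AMR}$ to $\hat{\sigma}'_{AMR}$ absorbs the perturbation. The paper does exactly this, writing the endpoints of (\ref{eqn:ci_conservative}) as the oracle endpoints plus $(\hat{\theta}_{AMR}-\hat{\theta}^0_{AMR})$ plus $\pm z_{1-\alpha/2}(\hat{\sigma}'_{AMR}-\hat{\sigma}_{AMR})/\sqrt{n}$ and requiring the second term to be dominated by the third.

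Where you are actually more careful than the paper is in the final accounting step. You correctly observe that the argument closes only if $\sigma_{AIPW}-\Delta\ge\sigma_{AMR,0}$, i.e.\ only if the remainder budget $z_{1-\alpha/2}\Delta$ does not exceed the half-width gap $z_{1-\alpha/2}(\sigma_{AIPW}-\sigma_{AMR,0})$. The paper glosses over this by asserting that $|\hat{\sigma}'_{AMR}-\hat{\sigma}_{AMR}|\to\Delta$, but $\Delta$ is defined as the \emph{variance} difference $\sigma^2_{AIPW}-\sigma^2_{AMR,0}$, whereas $|\hat{\sigma}'_{AMR}-\hat{\sigma}_{AMR}|$ converges to the \emph{standard-deviation} difference $\sigma_{AIPW}-\sigma_{AMR,0}$. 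Since $\sigma^2_{AIPW}-\sigma^2_{AMR,0}=(\sigma_{AIPW}-\sigma_{AMR,0})(\sigma_{AIPW}+\sigma_{AMR,0})$, your required inequality $\sigma_{AIPW}-\Delta\ge\sigma_{AMR,0}$ is not automatic from the stated hypotheses; it holds iff $\sigma_{AIPW}+\sigma_{AMR,0}\le 1$ or $\Delta=0$. This is a genuine slip in how the proposition is stated (the constant in (\ref{eqn:error_consistency}) should be the standard-deviation gap, not the variance gap), and your write-up surfaces it rather than hiding it. If you replace $\Delta$ in your bound by $\sigma_{AIPW}-\sigma_{AMR,0}$ throughout, the argument closes cleanly and matches the paper's intent.
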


As the sample size increases, the width of the conservative confidence interval (\ref{eqn:ci_conservative}) converges to the same width as that of the AIPW estimator, making it a reasonable choice in practice.  Although they have the same length, the confidence interval is centered around AMR estimator rather than that of AIPW. As AIPW can be very biased in finite samples, its confidence interval will often undercover the true value; this is mitigated by using AMR instead.  Additionally, as $h(A,X)$ varies less, the efficiency bound difference $\Delta=0$ gets smaller and closer to zero. Under these conditions, the advantage of variance reduction offered by AMR decreases. In the particular case where $\Delta=0$, which implies that $ \Var\left\{h(A,X)Y^*-w^*(Y^*)Y^*\right\}=0$, both AIPW and AMR achieve the same efficiency. Since the efficiency bounds are identical, it is not worthwhile risking the potential estimation error introduced by estimating $w^{*}$. Therefore, AIPW is the preferable choice in this case.

In summary, while the condition (\ref{eqn:strict_convergence}) enables the construction of an efficient, asymptotically normal estimator $\hat{\theta}_{AMR}$ with a standard confidence interval, it is a strong assumption that is rarely met. The conservative confidence interval, which relies on the weaker error bound (\ref{eqn:error_consistency}) provides a practical alternative, particularly in situations with significant lack of overlap. Our empirical findings (see \Cref{subsec:synthetic_experiment}) show that the conservative confidence intervals constructed as in (\ref{eqn:ci_conservative}) indeed tend to yield larger coverage of the true parameter.

The assumptions for asymptotic normality and the efficiency bound for $\hat{\theta}^0_{AMR}$ are different from those in (\ref{eqn:adjusted_eff_inf}). Covariate adjustment methods \citep[such as][]{christgau2024efficient} condition on various covariates, while AMR and MR adjust weights in the outcome space. Although choosing the right covariates might lower the efficiency bound, it often depends on strong assumptions and a robust representation, which can be hard to achieve. In contrast, the efficiency gain in AMR over AIPW only requires that $\Var\left[h(A,X)\middle|Y^*\right]>0$ almost surely. As a post-hoc calibration method, AMR avoids the complex assumptions needed for covariate adjustment and is less sensitive to errors in covariate selection, especially when the covariates are high-dimensional. We demonstrate this in later experiments.

\begin{remark}[Confidence intervals of $\hat{\theta}_{AIPW}$ in the context of insufficient overlap]
In our experiments, when the overlap is weak, we observe that the distribution of $\hat{\theta}_{\text{AIPW}}$ has heavy tails, as shown in \Cref{fig:synthetic_density_compare}. This aligns with the observation that although the confidence intervals of $\hat{\theta}_{\text{AIPW}}$ can achieve the nominal coverage, its MSE can be much larger than those of $\hat{\theta}_{\text{AMR}}$ and $\hat{\theta}_{\text{MR}}$. Thus, under insufficient overlap in finite samples, $\hat{\theta}_{\text{AIPW}}$ can significantly deviate from the true value of $\theta$, underscoring its limitations in such scenarios.
\end{remark}

\subsection{Weights estimation}\label{sec:weights_estimation}
In practice, functions $w^{*0}$ and $w^0$ are typically unknown in practice and need to be estimated, as in (\ref{eqn:loss_weights_mr}) and (\ref{eqn:loss_weights_amr}). In finite samples, the weights estimation bears the responsibility of balancing the bias-variance trade-off; asymptotically, as we show in \Cref{subsec:asymptotic},  the  weights estimation needs to be consistent to achieve double robustness. 

We have explored various model classes for weight estimation, including kernel smoothing (Nadaraya-Watson) regression, multilayer perceptrons (MLP), and debiased nonparametric kernel-based local polynomial regression \citep{calonico2018effect}. Although \citet{taufiq2024marginal} report favorable performance using MLP, this approach is less compelling in our setting because only one covariate---$Y$ or $\hat{Y}^{*}$---is available for weight estimation. As illustrated in \Cref{fig:example_weights} and \Cref{ex:explict_weight} with $\tau(x) = 0, \forall x \in \mathcal{X}$ regression appears to be an appropriate method for estimating the weights.

Empirically, we found that Kernel Ridge Regression (KRR) with an RBF kernel generally yields the smallest mean squared error and the highest stability. This observation suggests that the underlying conditional expectation $w^{*0}$,  possesses the ideal smoothness properties. If $w^{*0}$ were highly irregular, it would contradict our motivation for using it to smooth the weights. Moreover, if the true weight function  $w^0$ or $w^{*0}$ lies in the RKHS associated with the Gaussian kernel, it exhibits an exponential decay in its eigenvalues and can be considered infinitely smooth. Under these conditions, KRR with an RBF kernel can achieve the learning rate of $\operatorname{o}_\P(n^{-1/2})$ \citep{caponnetto2007optimal,smale2007learning}. 
 
Throughout the main text, we use KRR for weight estimation due to its robust empirical performance. A discussion of the performance of alternative methods is provided in \Cref{sec:weights_estimation_models_overview}. Notably, since the regression in this step is univariate, it is easier to get a fast convergence rate than in the higher-dimensional settings.

\section{Experiments}
We include experiments on synthetic data and real-world applications. In each experiment, we run $K$ simulations, and report the results of each estimator $\hat{\theta}$ compared to the true value of ATE $\theta$ regarding bias ($\frac{1}{K}\sum_{k=1}^K\big(\hat{\theta}-\theta\big)$), mean absolute error (MAE, $\frac{1}{K}\sum_{k=1}^K |\hat{\theta}-\theta|$) and root mean squared error (RMSE, $\sqrt{\frac{1}{K}\sum_{k=1}^K\big(\hat{\theta}-\theta\big)^2}$). The code is available at \href{https://github.com/linyingyang/AMR}{https://github.com/linyingyang/AMR}.

\subsection{Synthetic experiment results}
\label{subsec:synthetic_experiment}

In this section, we report empirical results on synthetic data and compare the performance of AMR and MR versus TMLE (Targeted Maximum Likelihood Estimation; \citealp{van2011targeted}), CTMLE (Collaborative Targeted Maximum Likelihood Estimation; \citealp{van2010collaborative}), CBPS (Covariate Balancing Propensity Score; \citealp{imai2014covariate}), AIPW, IPW and DOPE (with the single-index form (\ref{eqn:single-index})).

We employ the \texttt{causl} package built on the frugal parameterization framework \citep{evans2024parameterizing} to generate synthetic data that precisely follows the intended causal distribution. In our simulation, each covariate in $X$ is independently drawn from the normal distribution $\mathcal{N}(0,1)$. We fix the dimensions of the confounders ($C$), prognostic variables ($O$), and spurious variables ($S$) at 5 each, and later assess estimator performance by varying the dimensionality $p_I$ of the instrumental variables ($I$). Their roles can be found in the causal DAG shown in \Cref{fig:w_confounder}. The data generation process is as follows:
\begin{itemize}
    \item $A \mid  X\sim \operatorname{Bernoulli}\left(\pi\left(X\right)\right)$, where $\pi(X) = \operatorname{expit}\left(\sum_{j=1}^{p_{I}}I_j+0.5\sum_{j=1}^{5}C_j\right)$;
    \item $Y \mid A,X\sim\mathcal{N}\left(5A+ \mu^0\left(X\right),1\right)$;
    \item $\mu^0(X) = 10 \operatorname{sin}\left(\pi\sum_{j=1}^{5}O_j\right) + 20\left(\sum_{j=1}^{5}O_j\right)^2 + \left(\sum_{j=1}^{5}C_j\operatorname{cos}\left(\pi\sum_{j=1}^{5}O_j\right)\right)^2$.
\end{itemize}

We design the potential outcome, $\mu^0(X)$, to follow a complex nonlinear form. As shown later, methods that rely on a priori assumptions like DOPE can be substantially biased when these assumptions are violated. Moreover, our design implies that increasing $p_I$ makes the propensity score, $\pi(X)$, increasingly extreme.

The propensity score $\pi(X)$ is estimated using logistic regression. For a fair comparison with DOPE, the potential outcome model is estimated via a 3-layer feedforward neural network. Furthermore, the initial estimates for the propensity score and potential outcome, if needed, are set identical in all estimators. We perform estimation over $K=200$ simulations, varying the sample size $n\in\{400,600,1000\}$ and the instrumental variable dimension $p_I\in\{1,5,10,15,20\}$.

\Cref{fig:synthetic_bias} presents the bias, MAE, and RMSE of these estimators across various settings, with detailed values provided in \Crefrange{tab:synthetic_bias}{tab:synthetic_rmse}. Notably, AMR demonstrates superior performance in terms of MAE and RMSE across all settings, while AIPW is a lot more variable. When $n$ is small, AMR exhibits larger bias due to the regularization in weights estimation; however, the bias becomes comparable as $n$ increases. \Cref{fig:synthetic_bias_new} provides comparison with IPW and AIPW dropped, as they perform poorly compared to the other methods. DOPE is also excluded from \Cref{fig:synthetic_bias} and \Cref{fig:synthetic_bias_new} as it is markedly inferior in this setting, confirming that the misspecification---stemming from $\mu^0(X)$ not adhering to the single-index form in (\ref{eqn:single-index})---induces significant estimation bias.

\begin{figure}
    \centering
    \includegraphics[width=1\linewidth]{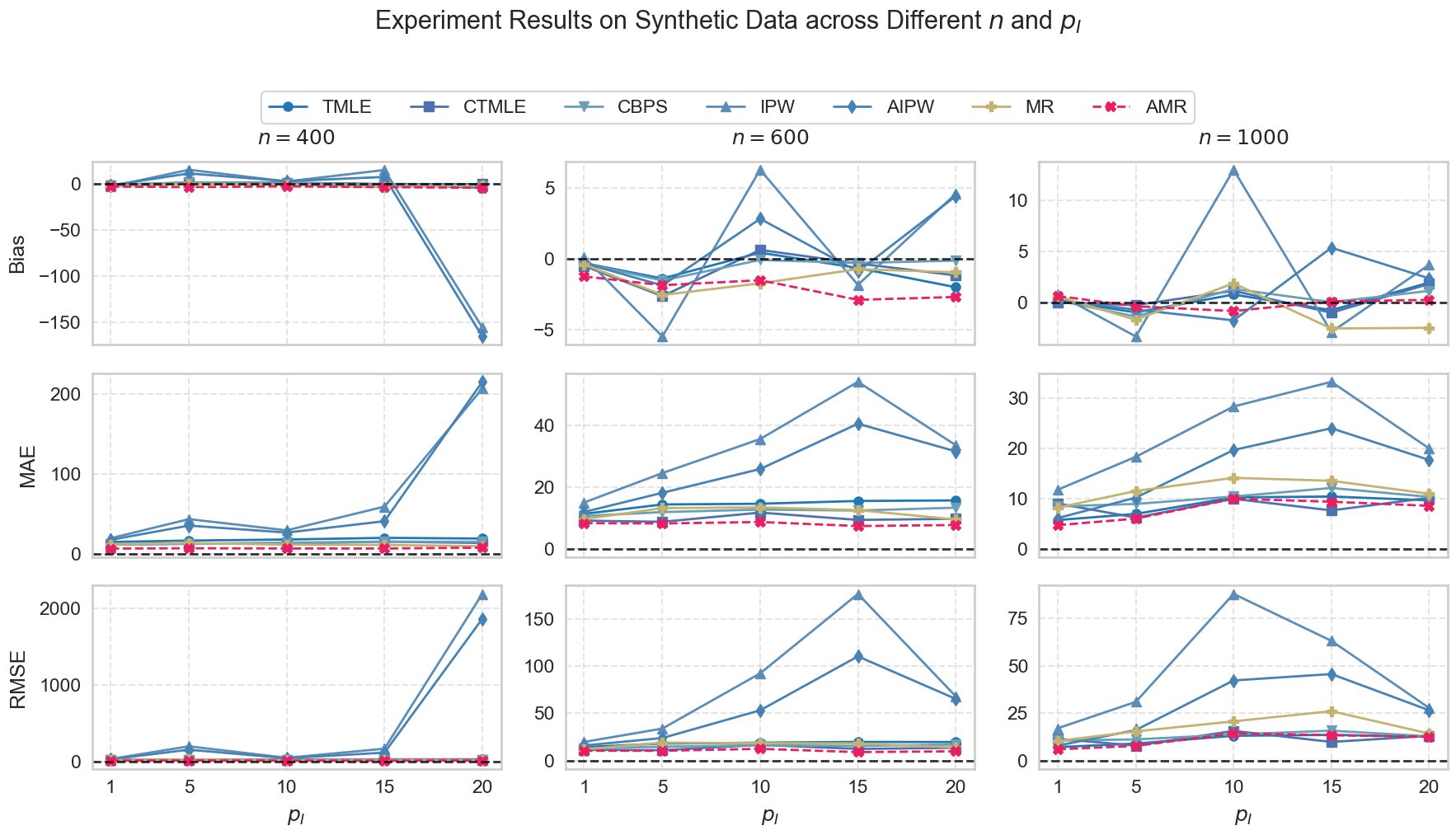}
    \caption{Bias, MAE and RMSE results on synthetic data, varying across $p_I$. }
    \label{fig:synthetic_bias}
\end{figure}
\begin{figure}
    \centering
    \includegraphics[width=1\linewidth]{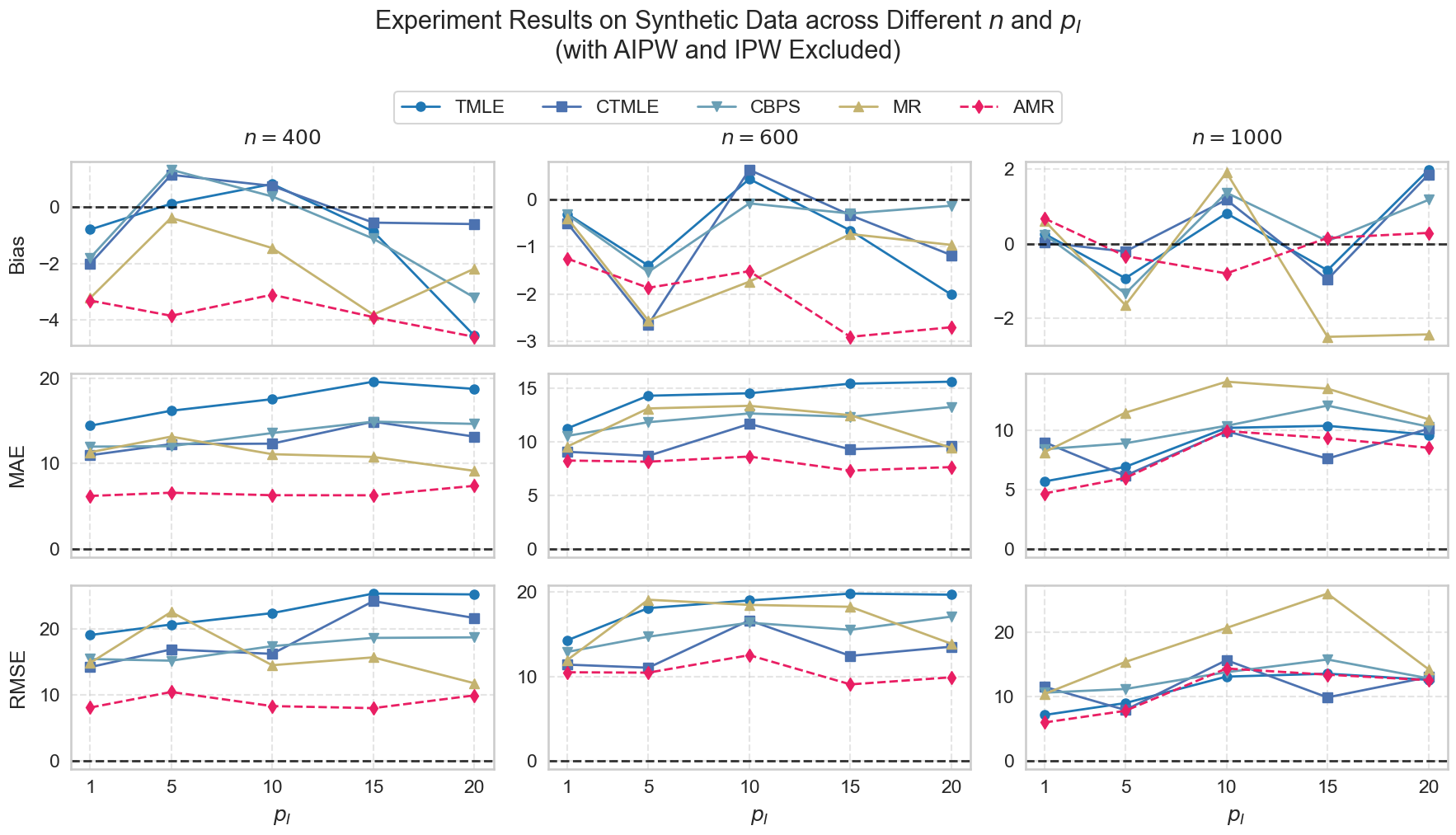}
    \caption{Bias, MAE and RMSE results on synthetic data, varying across $p_I$, with IPW and AIPW excluded. }
    \label{fig:synthetic_bias_new}
\end{figure}
\begin{table}[ht]
\begin{center}
\begin{tabular}{rrrrrrrrrr}
\toprule
 &  & TMLE & CTMLE & CBPS & IPW & AIPW & DOPE & MR & AMR \\
$n$ & $p_I $&  &  &  &  &  &  &  &  \\
\midrule
\multirow[t]{5}{*}{400} & 1 & $ -0.795 $ & $ -2.01 $ & $ -1.80 $ & $ -3.25 $ & $ -1.99 $ & $ 103 $ & $ -3.23 $ & $ -3.32 $ \\
 & 5 & $ 0.121 $ & $ 1.15 $ & $ 1.33 $ & $ 14.9 $ & $ 10.9 $ & $ 103 $ & $ -0.380 $ & $ -3.86 $ \\
 & 10 & $ 0.823 $ & $ 0.754 $ & $ 0.381 $ & $ 2.23 $ & $ 2.52 $ & $ 103 $ & $ -1.45 $ & $ -3.11 $ \\
 & 15 & $ -0.862 $ & $ -0.55 $ & $ -1.12 $ & $ 14.6 $ & $ 7.14 $ & $ 103 $ & $ -3.82 $ & $ -3.91 $ \\
 & 20 & $ -4.56 $ & $ -0.601 $ & $ -3.21 $ & $ -156 $ & $ -165 $ & $ 104 $ & $ -2.19 $ & $ -4.61 $ \\
\cline{1-10}
\multirow[t]{5}{*}{600} & 1 & $ -0.305 $ & $ -0.501 $ & $ -0.322 $ & $ 0.118 $ & $ -0.309 $ & $ 103 $ & $ -0.396 $ & $ -1.25 $ \\
 & 5 & $ -1.40 $ & $ -2.65 $ & $ -1.53 $ & $ -5.50 $ & $ -1.88 $ & $ 102 $ & $ -2.56 $ & $ -1.87 $ \\
 & 10 & $ 0.435 $ & $ 0.630 $ & $ -0.0821 $ & $ 6.30 $ & $ 2.83 $ & $ 104 $ & $ -1.74 $ & $ -1.52 $ \\
 & 15 & $ -0.654 $ & $ -0.331 $ & $ -0.295 $ & $ -1.84 $ & $ -0.904 $ & $ 103 $ & $ -0.732 $ & $ -2.91 $ \\
 & 20 & $ -2.01 $ & $ -1.18 $ & $ -0.131 $ & $ 4.60 $ & $ 4.44 $ & $ 102 $ & $ -0.960 $ & $ -2.70 $ \\
\cline{1-10}
\multirow[t]{5}{*}{1000} & 1 & $ 0.263 $ & $ 0.0283 $ & $ 0.242 $ & $ 0.878 $ & $ 0.282 $ & $ 102 $ & $ 0.614 $ & $ 0.675 $ \\
 & 5 & $ -0.924 $ & $ -0.209 $ & $ -1.33 $ & $ -3.27 $ & $ -0.653 $ & $ 103 $ & $ -1.64 $ & $ -0.331 $ \\
 & 10 & $ 0.813 $ & $ 1.17 $ & $ 1.36 $ & $ 13.0 $ & $ -1.70 $ & $ 103 $ & $ 1.90 $ & $ -0.795 $ \\
 & 15 & $ -0.711 $ & $ -0.958 $ & $ 0.0696 $ & $ -2.84 $ & $ 5.37 $ & $ 103 $ & $ -2.49 $ & $ 0.153 $ \\
 & 20 & $ 1.98 $ & $ 1.85 $ & $ 1.17 $ & $ 3.72 $ & $ 2.39 $ & $ 103 $ & $ -2.43 $ & $ 0.289 $ \\
\bottomrule
\caption{Bias on synthetic data.}
\label{tab:synthetic_bias}
\end{tabular}

\end{center}
\end{table}
\begin{table}[ht]
\begin{center}
\begin{tabular}{rrrrrrrrrr}
\toprule
 &  & TMLE & CTMLE & CBPS & IPW & AIPW & DOPE & MR & AMR \\
$n$ & $p_I$ &  &  &  &  &  &  &  &  \\
\midrule
\multirow[t]{5}{*}{400} & 1 & $ 14.5 $ & $ 11.0 $ & $ 12.0 $ & $ 19.2 $ & $ 17.1 $ & $ 103 $ & $ 11.3 $ & $ 6.20 $ \\
 & 5 & $ 16.2 $ & $ 12.3 $ & $ 12.0 $ & $ 43.1 $ & $ 35.0 $ & $ 103 $ & $ 13.2 $ & $ 6.58 $ \\
 & 10 & $ 17.6 $ & $ 12.3 $ & $ 13.6 $ & $ 29.1 $ & $ 26.3 $ & $ 103 $ & $ 11.1 $ & $ 6.29 $ \\
 & 15 & $ 19.6 $ & $ 14.9 $ & $ 14.9 $ & $ 58.8 $ & $ 40.5 $ & $ 103 $ & $ 10.8 $ & $ 6.28 $ \\
 & 20 & $ 18.8 $ & $ 13.2 $ & $ 14.7 $ & $ 207 $ & $ 215 $ & $ 104 $ & $ 9.15 $ & $ 7.39 $ \\
\cline{1-10}
\multirow[t]{5}{*}{600} & 1 & $ 11.3 $ & $ 9.06 $ & $ 10.6 $ & $ 14.9 $ & $ 11.9 $ & $ 103 $ & $ 9.56 $ & $ 8.25 $ \\
 & 5 & $ 14.3 $ & $ 8.68 $ & $ 11.8 $ & $ 24.5 $ & $ 18.1 $ & $ 102 $ & $ 13.1 $ & $ 8.14 $ \\
 & 10 & $ 14.5 $ & $ 11.7 $ & $ 12.6 $ & $ 35.5 $ & $ 25.8 $ & $ 104 $ & $ 13.4 $ & $ 8.62 $ \\
 & 15 & $ 15.4 $ & $ 9.30 $ & $ 12.3 $ & $ 54.1 $ & $ 40.5 $ & $ 103 $ & $ 12.5 $ & $ 7.31 $ \\
 & 20 & $ 15.6 $ & $ 9.64 $ & $ 13.3 $ & $ 33.5 $ & $ 31.5 $ & $ 102 $ & $ 9.43 $ & $ 7.63 $ \\
\cline{1-10}
\multirow[t]{5}{*}{1000} & 1 & $ 5.70 $ & $ 9.01 $ & $ 8.42 $ & $ 11.7 $ & $ 6.08 $ & $ 102 $ & $ 8.17 $ & $ 4.68 $ \\
 & 5 & $ 6.93 $ & $ 6.19 $ & $ 8.93 $ & $ 18.3 $ & $ 10.2 $ & $ 103 $ & $ 11.5 $ & $ 5.99 $ \\
 & 10 & $ 10.2 $ & $ 9.95 $ & $ 10.4 $ & $ 28.4 $ & $ 19.7 $ & $ 103 $ & $ 14.1 $ & $ 9.96 $ \\
 & 15 & $ 10.4 $ & $ 7.63 $ & $ 12.1 $ & $ 33.3 $ & $ 24.1 $ & $ 103 $ & $ 13.5 $ & $ 9.36 $ \\
 & 20 & $ 9.64 $ & $ 10.1 $ & $ 10.3 $ & $ 20.0 $ & $ 17.7 $ & $ 103 $ & $ 11.0 $ & $ 8.54 $ \\
\bottomrule
\end{tabular}
\caption{MAE on synthetic data.}
\label{tab:synthetic_mae}
\end{center}
\end{table}

\begin{table}[ht]
\begin{center}
\begin{tabular}{rrrrrrrrrr}
\toprule
 &  & TMLE & CTMLE & CBPS & IPW & AIPW & DOPE & MR & AMR \\
$n$ & $p_I$ &  &  &  &  &  &  &  &  \\
\midrule
\multirow[t]{5}{*}{400} & 1 & $ 19.1 $ & $ 14.2 $ & $ 15.4 $ & $ 29.8 $ & $ 26.0 $ & $ 103 $ & $ 14.8 $ & $ 8.06 $ \\
 & 5 & $ 20.7 $ & $ 16.9 $ & $ 15.2 $ & $ 195 $ & $ 151 $ & $ 103 $ & $ 22.6 $ & $ 10.4 $ \\
 & 10 & $ 22.4 $ & $ 16.2 $ & $ 17.4 $ & $ 44.9 $ & $ 40.3 $ & $ 103 $ & $ 14.5 $ & $ 8.28 $ \\
 & 15 & $ 25.4 $ & $ 24.2 $ & $ 18.6 $ & $ 164 $ & $ 113 $ & $ 104 $ & $ 15.7 $ & $ 7.97 $ \\
 & 20 & $ 25.2 $ & $ 21.7 $ & $ 18.7 $ & $ 2190 $ & $ 1860 $ & $ 104 $ & $ 11.8 $ & $ 9.91 $ \\
\cline{1-10}
\multirow[t]{5}{*}{600} & 1 & $ 14.3 $ & $ 11.4 $ & $ 12.9 $ & $ 19.8 $ & $ 16.0 $ & $ 103 $ & $ 12.0 $ & $ 10.5 $ \\
 & 5 & $ 18.1 $ & $ 11.0 $ & $ 14.7 $ & $ 33.9 $ & $ 23.9 $ & $ 103 $ & $ 19.1 $ & $ 10.4 $ \\
 & 10 & $ 19.0 $ & $ 16.7 $ & $ 16.4 $ & $ 92.3 $ & $ 53.2 $ & $ 104 $ & $ 18.5 $ & $ 12.5 $ \\
 & 15 & $ 19.8 $ & $ 12.4 $ & $ 15.5 $ & $ 176 $ & $ 110 $ & $ 103 $ & $ 18.3 $ & $ 9.05 $ \\
 & 20 & $ 19.7 $ & $ 13.5 $ & $ 17.1 $ & $ 68.1 $ & $ 65.3 $ & $ 103 $ & $ 13.9 $ & $ 9.89 $ \\
\cline{1-10}
\multirow[t]{5}{*}{1000} & 1 & $ 7.12 $ & $ 11.5 $ & $ 10.6 $ & $ 17.0 $ & $ 7.90 $ & $ 102 $ & $ 10.4 $ & $ 5.95 $ \\
 & 5 & $ 8.98 $ & $ 7.90 $ & $ 11.2 $ & $ 31.0 $ & $ 16.1 $ & $ 103 $ & $ 15.4 $ & $ 7.78 $ \\
 & 10 & $ 13.1 $ & $ 15.7 $ & $ 13.8 $ & $ 87.8 $ & $ 42.2 $ & $ 103 $ & $ 20.7 $ & $ 14.4 $ \\
 & 15 & $ 13.6 $ & $ 9.87 $ & $ 15.8 $ & $ 63.1 $ & $ 45.6 $ & $ 103 $ & $ 26.1 $ & $ 13.4 $ \\
 & 20 & $ 12.6 $ & $ 13.0 $ & $ 12.8 $ & $ 27.7 $ & $ 26.5 $ & $ 103 $ & $ 14.3 $ & $ 12.6 $ \\
\bottomrule
\end{tabular}
\caption{RMSE on synthetic data.}
\label{tab:synthetic_rmse}
\end{center}
\end{table}

\Cref{fig:synthetic_CI} presents the coverage and lengths of the 95\% confidence intervals. The interval for AMR is conservative and comparable in length to that for AIPW, which aligns with our discussion in \Cref{sec:asymptotic}. In this setting, TMLE and CTMLE yield intervals with under-coverage. 

CBPS also shows competitive performance regarding precision as it relies on estimated propensity score which is correctly specified in this setting. However, the confidence intervals of CBPS are much wider. This can be partially due to the fact that at the balancing stage, CBPS balances all covariates, not just the confounders, which brings more uncertainty. Given the poor performance of DOPE, it is excluded from the plot.

As mentioned in \Cref{sec:asymptotic}, although the confidence interval of AMR constructed using (\ref{eqn:ci_conservative}) is conservative and of similar width to that of $\hat{\theta}_{AIPW}$, AMR is superior to AIPW because it is much less biased in finite samples.  If we reduce the nominal coverage (thus shortening the interval), the biased AIPW ends up with under-coverage, as present in \Cref{fig:synthetic_CI_60} where we construct confidence intervals for 60\% nominal coverage.
\begin{figure}
    \centering
    \includegraphics[width=1\linewidth]{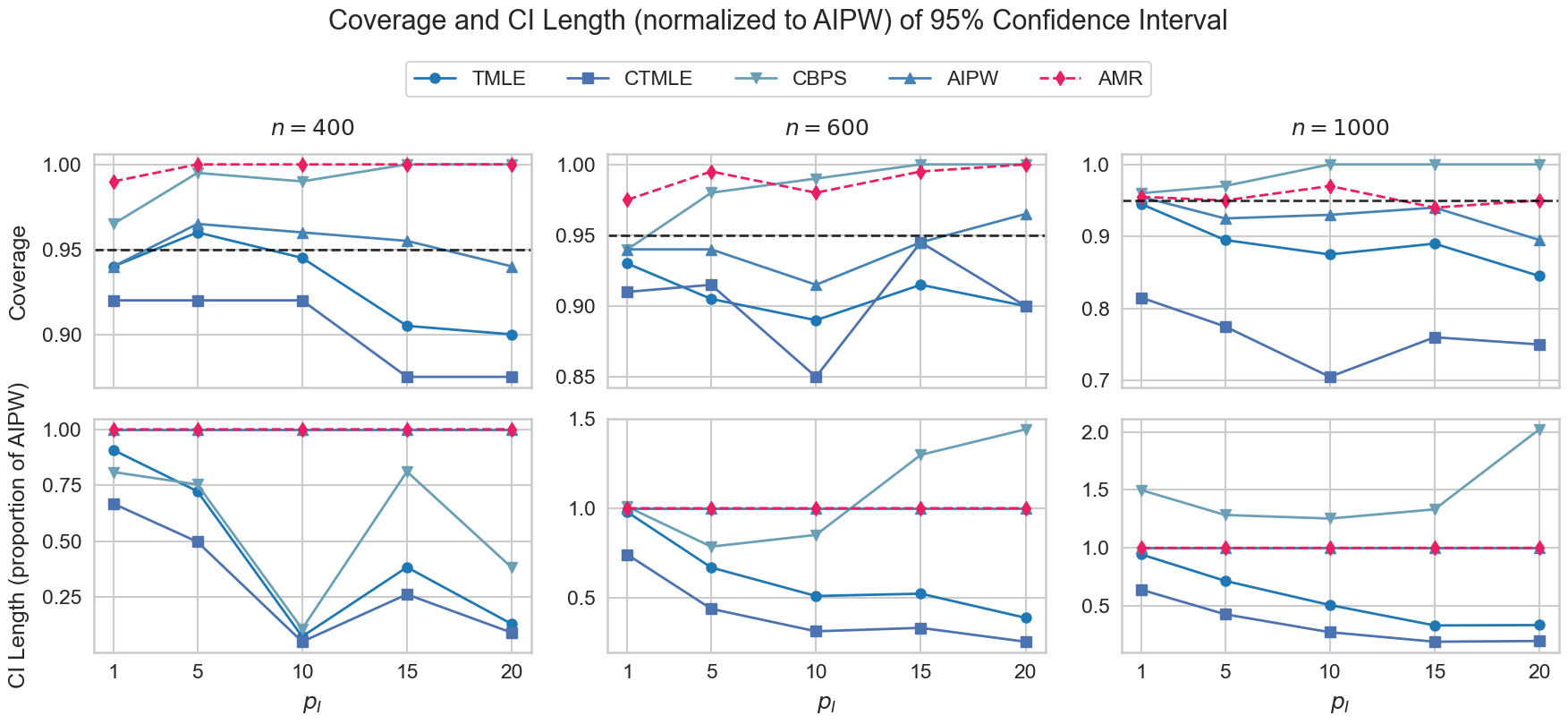}
    \caption{Coverage and length of confidence interval at nominal level of 95\% on synthetic data.}
    \label{fig:synthetic_CI}
\end{figure}

\begin{figure}
    \centering
    \includegraphics[width=1\linewidth]{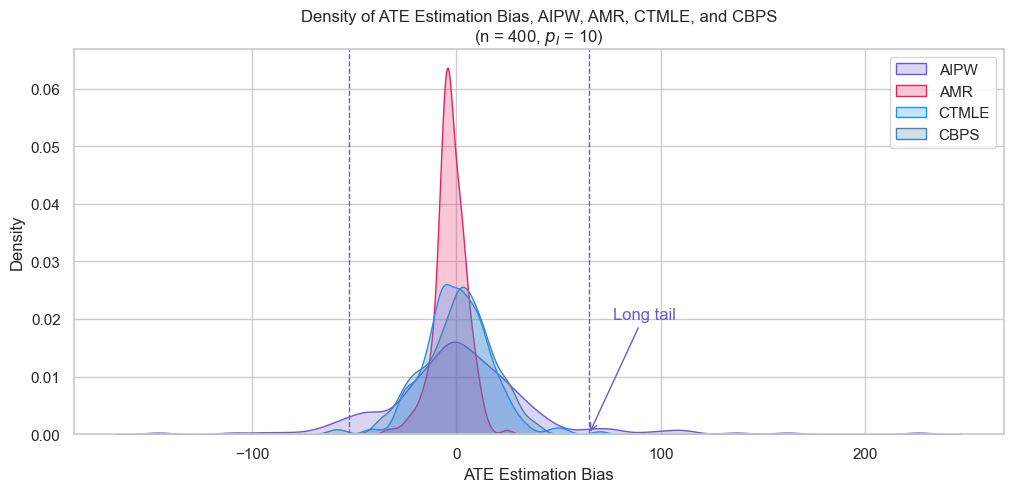}
    \caption{Density of bias, AIPW, AMR, CTMLE and CBPS estimators, across $K=200$ simulations, synthetic data, $n=400$, $p_I=10$.}
    \label{fig:synthetic_density_compare}
\end{figure}

\begin{figure}
    \centering
    \includegraphics[width=1\linewidth]{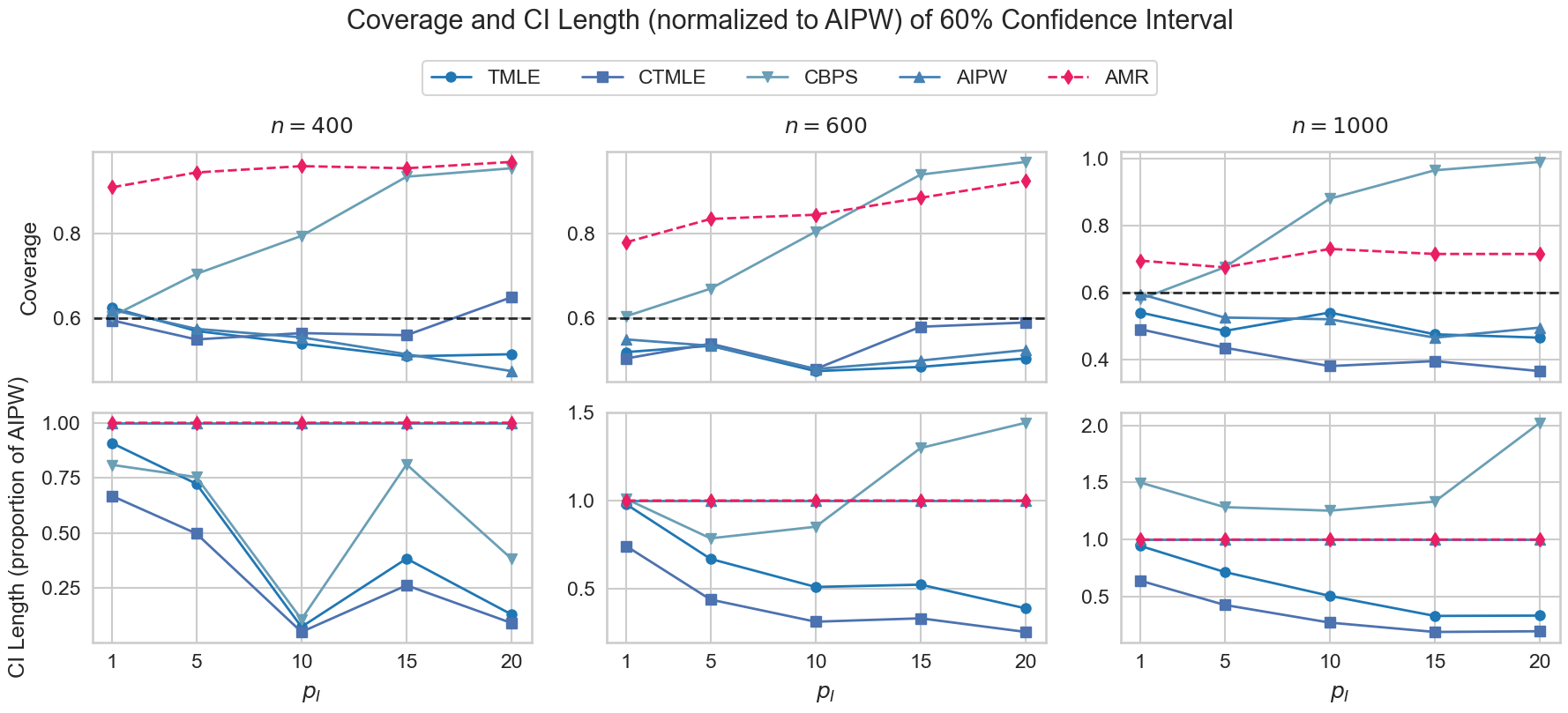}
    \caption{Coverage and length of confidence interval at nominal level of 60\% on synthetic data.}
    \label{fig:synthetic_CI_60}
\end{figure}

\subsection{Application on NHANES data}

\label{subsec:experiment_real}
The National Health and Nutrition Examination Survey (NHANES) is an American nationwide program designed to gather extensive health and nutritional information from a representative sample of the U.S.~population. In our experiment, we use the mortality dataset originally collected in \citet{lundberg2020local}.  In our study, we focus on participants aged 40 to 65 (middle adulthood) and examine the impact of obesity---defined as having a body mass index (BMI) greater than 30---on total cholesterol levels. By dichotomizing BMI at 30, we create a binary intervention indicator ($A=\mathbbm{1}\{\text{BMI}\geq 30\}$), while total cholesterol serves as a continuous outcome measure $Y$, reflecting a key biomarker of cardiovascular health and metabolic risk. Previous research has consistently linked higher BMI with adverse lipid profiles, including elevated total cholesterol \citep{freedman2001relationship,wormser2011emerging, flegal2016trends}, reinforcing the causal plausibility of this exposure-outcome pairing for investigating obesity-related interventions.

We follow the preprocessing steps in \citet{lundberg2020local} and \citet{christgau2024efficient}, dropping covariates with more than 50\% missing values. After selecting the group of people aging between 40 to 65, $n=5985$ observations in total remain for training. To address potential confounding in this observational analysis, we adjust for an extensive set of covariates ($p=61$ in total) that capture a broad spectrum of health indicators. These include demographic and lifestyle factors such as gender, age and physical activity, and laboratory biomarkers including serum albumin, hemoglobin, etc. 

The propensity score is estimated using logistic regression. We present the density of the propensity score estimates in  \Cref{fig:estimated_propen_nhanes}, which indicates lack of overlap as we see a non-trivial proportion of individuals who have a probability of being treated under $0.1$. For comparison, we estimate the potential outcomes using two different models: a 3-layer feed forward neural network (denoted NN) and linear regression (denoted LR) to see how sensitive each method is to the specific potential outcome model. The same set of estimators as in  \Cref{subsec:synthetic_experiment} is compared here. We provide estimates and confidence intervals in \Cref{tab:nhanes_ci}. 

\begin{figure}
    \centering
    \includegraphics[width=0.8\linewidth]{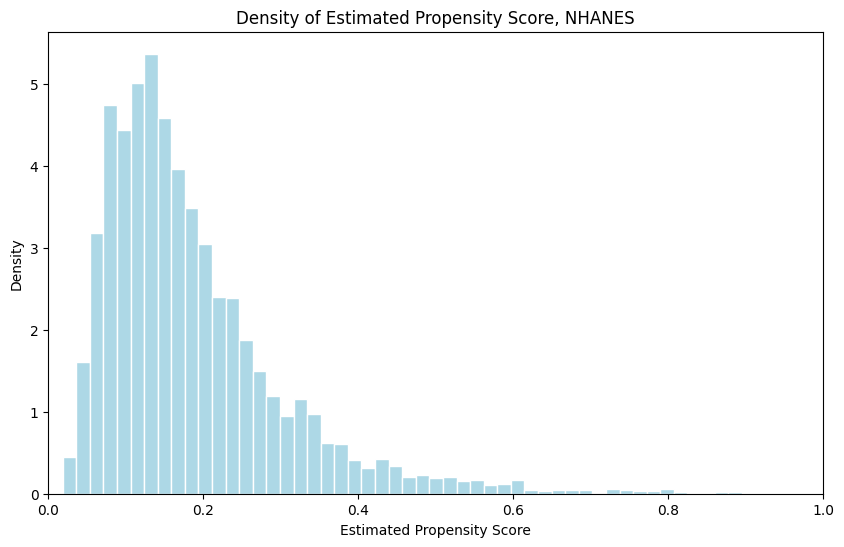}
    \caption{Density of estimated propensity score of exposure $A=\mathbbm{1}\{\text{BMI}\geq 30\}$ on NHANES data.}
    \label{fig:estimated_propen_nhanes}
\end{figure}

\begin{table}[ht]
\centering
\begin{tabular}{lrr}
\toprule
Method & Estimation & 95\% CI \\
\midrule
CBPS   & 4.70    & $[0.426, 8.98]$ \\
DOPE  & 230  & $[-201, 662]$  \\
CTMLE (NN)  & 2.59    & $[-0.511, 5.68]$ \\
CTMLE (LR) & 3.90 & $[0.815, 6.98]$ \\
TMLE (NN)   & 3.41    & $[-0.0175, 6.84]$ \\
TMLE (LR)   &  3.15   & $[-0.315, 6.62]$ \\
AIPW (NN)   & 3.42    & $[-0.00461, 6.85]$ \\
AIPW (LR) &  3.14 & $[-0.326, 6.62]$  \\
AMR (NN)    & 3.44    & $[0.0114, 6.86]$ \\
AMR (LR)    & 3.56    & $[0.0906, 7.03]$ \\
\bottomrule
\end{tabular}
\caption{ATE estimates and 95\% confidence intervals, with 3-layer feedforward neural network as the potential outcome model. The IPW and MR estimates are 3.90, 4.70 respectively.}
\label{tab:nhanes_ci}
\end{table}

Although all methods produce positive estimation as expected, only CBPS, CTMLE with potential outcome estimated by linear regression, and AMR under both potential outcome models have the 95\% confidence intervals bounded above 0. Meanwhile, AMR has the smallest difference in the estimates produced using the linear regression and neural network approaches; this aligns with the fact that we condition on $Y^*$, which is a combination of the observed outcome $Y$ and the bias of $Y - \hat{\mu}^a(X)$ as discussed in \Cref{sec:method}, and hence this is less sensitive to the exact method used to estimate the outcome regressions.

\subsection{Application on text data: News}
\label{subsec:experiment_text}
So far we have demonstrated the good performance of MR and AMR in treatment effect estimation, especially in high-dimensional settings. Another area that will benefit greatly from these properties is treatment effect estimation on text data, where the words themselves are used as covariates. Various  applications of this kind have been explored in recent studies, including \citet{yao2019estimation}, \citet{gui2022causal}, \citet{lin2024isolated}, and \citet{imai2024causal}. Some of these papers specifically discuss mitigation of poor overlap in such datasets. For example, \citet{yao2019estimation} propose a representation learning and matching framework designed to filter out information from nearly-instrumental covariates; \citet{gui2022causal}, as mentioned in \Cref{subsec:related_work}, contribute to this area by proposing a DragonNet-based transformer for potential outcome estimation, and estimating propensity score conditional on estimated potential outcomes. A summary of related papers and datasets is available in \citet{keith2020text}.

We present experimental results on the semi-synthetic News dataset introduced by \citet{johansson2016learning}, which is commonly used in recent causal inference machine learning papers. The News dataset consists of 5000 documents extracted from the NYT Corpus, where each unit represents a news article. The article is consumed either on a model device or on desktop. In the dataset, the opinions of a media consumer exposed to multiple news items are simulated. Each article is characterized by 3477 covariates corresponding to word counts, and the outcome reflects the reader's experience of the article. The intervention $A$ represents the viewing device: whether the article is viewed on a desktop ($A=0$) or a mobile device ($A=1$). The outcome $Y$ represents the reader's experience on an article $X$. The treatments and outcomes were generated using the data generating process described in \citet{johansson2016learning}.

The covariates in this area are high-dimensional and highly sparse, leading to a lack of overlap, which we show with estimated propensity score estimation in \Cref{fig:estimated_propen_news}. In our experiments,  $\hat{\mu}^a(x)$ and $\hat{\pi}(x)$ are estimated using 3-layer feedforward neural network models which might be more suitable in the complex covariates setting. We generate 50 samples from the \textit{News} data-generating process and applied the estimators. Since TMLE, CTMLE and CBPS are extremely time-consuming in this complex context, we do not include them in the performance comparison. 

From the corresponding results shown in \Cref{fig:bias_news} and \Cref{tab:performance_news}, MR and AMR demonstrate significantly better performance than that of IPW and AIPW, with much lower (absolute) bias, MAE and RMSE. In this experiment, DOPE demonstrates relatively good performance compared to IPW and AIPW, but is still worse than MR and AMR, presumably due to the bias in learning the representation of $V_{\hat{\beta}}$. 
\begin{figure}[ht]
    \centering
    \begin{subfigure}[b]{0.45\textwidth}
        \centering
\includegraphics[width=0.82\linewidth]{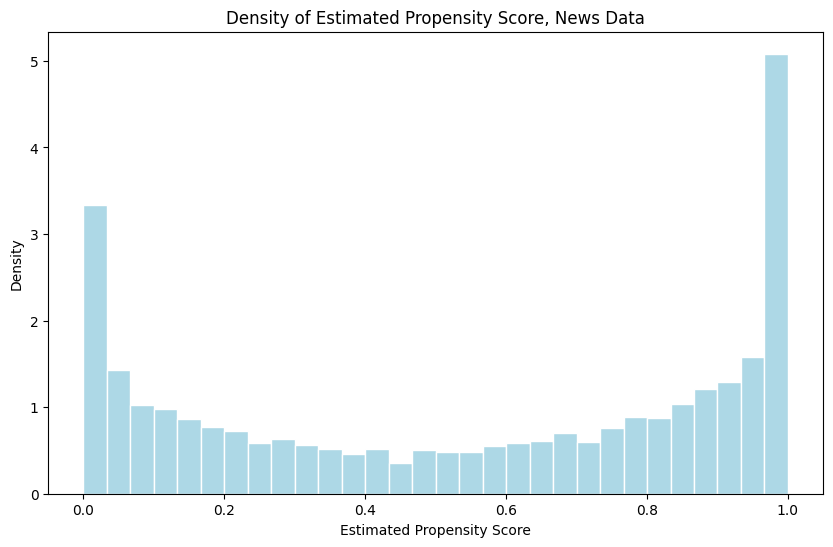}
    \caption{Density of estimated propensity score in \textit{News} dataset, indicating lack of overlap.}
    \label{fig:estimated_propen_news}
    \end{subfigure}
    \hfill
    \begin{subfigure}[b]{0.45\textwidth}
        \centering
\includegraphics[width=.9\linewidth]{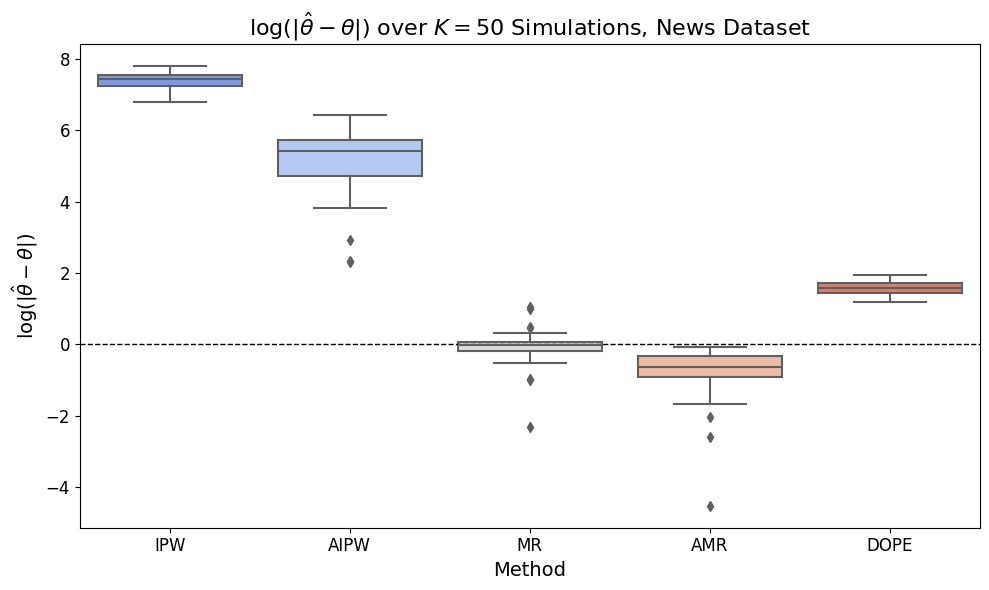}
    \caption{Estimator performance on $K=50$ draws of data generating process in the \textit{News} dataset.}
    \label{fig:bias_news}
    \end{subfigure}
    \caption{Experiment results on \textit{News} data.}
    \label{fig:news_results}
\end{figure}

\begin{table}
\begin{center}
\begin{tabular}{rrrrrr}
\toprule
     & \textbf{IPW} & \textbf{AIPW} & \textbf{MR} & \textbf{AMR} &  \textbf{DOPE}\\
\midrule
Bias & $1660$ & $-225$ & $-0.953$ & $-\textbf{0.532}$ & $4.74$ \\
MAE & $1660$ & $232$ & $1.01$ & $\textbf{0.535}$ & $4.74$ \\
RMSE & $1710$ & $273$ & $1.11$ & $\textbf{0.582}$ & $4.80$ \\
\bottomrule
\end{tabular}
\end{center}
\caption{Estimator performance on $K=50$ draws of data generating process in the \textit{News} dataset.}
\label{tab:performance_news}
\end{table}

\

\
Our estimators are highly flexible and computationally efficient for text applications with various featurization methods. In \Cref{sec:amazon}, we present experiments where text is transformed into embeddings via Transformers \citep{wolf2019huggingface}. We compare these results with the TI estimator \citep{gui2022causal}, a Transformer-based ATE estimator specifically designed for text applications. The findings further demonstrate the robustness of our proposed MR and AMR estimators, particularly highlighting their post-hoc calibration capabilities.

\section{Discussion}\label{sec:discussion}
In this paper, we propose two estimators that exhibit robust performance in the presence of poor overlap, either in finite samples or because of complex, high-dimensional covariates. Our approach leverages outcome-informative weights, integrating outcome data without imposing extra assumptions beyond the estimates of potential outcome means and propensity scores. The post-hoc calibration strategy not only reduces variance but also minimizes risks associated with pre-processing steps, such as covariate selection or representation learning based on predetermined structural assumptions. While our method achieves asymptotic efficiency comparable to conventional approaches like AIPW and IPW, the AMR and MR estimators offer significant advantages in finite samples.

Moreover, the concept of outcome-informative, post-hoc calibrated weights is highly flexible and extendable. For example, we also develop corresponding estimators for the Average Treatment Effect on the Treated (ATT) and the Average Treatment Effect on the Control (ATC):
\begin{definition}[Extension to ATT and ATC]
    Following a similar idea, the AMR estimators for ATT and ATC are:
\begin{itemize}
    \item $\hat{\theta}_{ATT} = \P_n\Bigl\{\hat{\E}\left[\frac{A}{\hat{\pi}(X)}\middle|Y-\left\{1-\hat{\pi}(X)\right\}\hat{\mu}^1(X)\right]\Bigl[Y-\left\{1-\hat{\pi}(X)\right\}\hat{\mu}^1(X)\Bigr]\Bigr\}$,
    \item $\hat{\theta}_{ATC} = \P_n\Bigl\{\hat{\E}\left[\frac{1-A}{1-\hat{\pi}(X)}\middle|Y-\hat{\pi}(X)\hat{\mu}^0(X)\right]\Bigl[Y-\hat{\pi}(X)\hat{\mu}^0(X)\Bigr]\Bigr\}$.
\end{itemize}
    
\end{definition}
We can also complement the policy evaluation in \citet{taufiq2024marginal} with a doubly robust, outcome-informed weighted estimator:
\begin{definition}[Extension to policy evaluation]
    As originated from the policy evaluation context in \cite{taufiq2024marginal}, AMR can be modified to adapt to policy evaluation by writing
$$
\hat{\E}Y^t = \P_n\left\{\hat{\E}\left[\frac{\pi^t(X)}{\hat{\pi}^{obs}(X)} \,\middle|\, Y^{obs}-\hat{\mu}(X,A)\right]\left(Y^{obs}-\hat{\mu}(X,A)\right)\right\}, 
$$
where the superscript $t$ stands for target domain (the domain in which we want to estimate the outcome of implementing a policy), $X$ represents the states and $A$ denotes the action, $\pi$ stands for the policy.
\end{definition}
In \Cref{subsec:connect_efficient_set}, we show how MR relates to existing efficient adjustment set selection methods. However, establishing a comparable link for AMR is less straightforward because the propensity score model, $\pi(X)$, is included in $Y^{*}$, which is something we condition on. Investigating how AMR might connect to the efficient adjustment set search thus remains an intriguing direction for future work. Meanwhile, the confidence intervals built using \Cref{prop:conservative_ci} are conservative, as shown in \Cref{subsec:synthetic_experiment}. We tried  estimating the variance and building a confidence interval for $\hat{\theta}_{AMR}$ using the standard nonparametric bootstrap, but it does not perform well in our case: the resulting confidence intervals are unusually wide in our experiments. This may be attributed to the significant lack of overlap in our data. Given the significant violation of the positivity assumption, any further reduction in overlap only exacerbates the uncertainty. The resampling process inherent to the bootstrap likely intensifies this issue by altering the degree of overlap in the underlying datasets, worsening the positivity violation. Constructing a shorter confidence interval with the desired coverage would be a very useful extension.

\section*{Acknowledgments}
We would like to thank Muhammad Faaiz Taufiq and Jake Fawkes for helpful discussions.  Linying Yang is supported by the EPSRC Centre for Doctoral Training in Modern Statistics and Statistical Machine Learning (EP/S023151/1) and Novartis.  Robin Evans is supported by GSK and the SMARTbiomed Pioneer Centre.

\bibliography{paper-ref}

\newpage

\appendix
\section{Main proofs}

\subsection{Proof of \Cref{prop:variance_comparison_oracle}}
\label{sec:oracle_var_proof}
\begin{proof}[Proof of \Cref{prop:variance_comparison_oracle}]
    This proof is taken from \citet{taufiq2024marginal}.  Assume we have the true weights $w$ and $w^{*}$, and $\hat{\pi} = \pi$, $\hat{\mu}^a = \mu^a, a\in\{0,1\}$. We have $\E\hat{\theta}^0_{MR} = \E\hat{\theta}^0_{AMR} = \E\hat{\theta}^0_{IPW} = \E\hat{\theta}^0_{AIPW}.$ Thus,
\begin{align*}
    \Var \hat{\theta}^0_{IPW} - \Var \hat{\theta}^0_{MR} &= \left(\E\left(\hat{\theta}^0_{IPW}\right)^2 - \left[\E\hat{\theta}^0_{IPW}\right]^2\right) - \left(\E\left(\hat{\theta}^0_{MR}\right)^2 - \left[\E\hat{\theta}^0_{MR}\right]^2\right)\\
    &= \E\left(\hat{\theta}^0_{IPW}\right)^2  - \E\left(\hat{\theta}^0_{MR}\right)^2 \\
    &= \frac{1}{n}\left(\E\left[h(A,X)^2Y^2\right] - \E\left[w(Y)^2Y^2\right] \right) \\
    &= \frac{1}{n}\left(\E\left\{\E\left[h(A,X)^2 \middle| Y\right]Y^2\right\} - \E\left\{\Bigl(\E\left[h(A,X) \middle| Y\right]\Bigr)^2Y^2\right\} \right) \\
    & = \frac{1}{n} \E\left\{\Var\left[h(A,X) \middle| Y\right]Y^2\right\} \geq 0.
\end{align*}
By replacing $Y$ with ${Y}^{*}$, we obtain $\Var \hat{\theta}^0_{AIPW} - \Var \hat{\theta}^0_{AMR} = \frac{1}{n} \E\left\{\Var\left[h(A,X)\middle|{Y}^{*}\right]{Y^{*}}^2\right\} \geq 0$ similarly.
\end{proof}

\subsection{Proof of \Cref{prop:variance_comparison_proposed}}
\label{sec:var_proof}
\begin{proof}[Proof of \Cref{prop:variance_comparison_proposed}]
    We first investigate $\Var \hat{\theta}_{IPW}-\Var \hat{\theta}_{AIPW}$:
\begin{align*}
    \Var \hat{\theta}_{IPW}-\Var \hat{\theta}_{AIPW} &= \left(\E\hat{\theta}^2_{IPW} - \left[\E\hat{\theta}_{IPW}\right]^2\right)  - \left(\E\hat{\theta}_{AIPW}^2 - \left[\E\hat{\theta}_{AIPW}\right]^2\right) \\
    &= \E\hat{\theta}^2_{IPW} - \E\hat{\theta}^2_{AIPW}\\
    & = \frac{1}{n}\left(\E\left[ h(A,X)^2 Y^2\right] - \E \left[h(A,X)^2 {Y^{*}}^2\right]\right) \\
    & = \frac{1}{n}\E\big[h(A,X)^2 (Y^2-{Y^{*}}^2)\big].
\end{align*}
Then we have
\begin{align*}
    \lefteqn{\Var \hat{\theta}^0_{MR} - \Var \hat{\theta}^0_{AMR}}\\
    &= \left(\Var\hat{\theta}^0_{MR} - \Var \hat{\theta}^0_{IPW}\right) + \left(\Var \hat{\theta}^0_{IPW} - \Var \hat{\theta}^0_{AIPW}\right) + \left(\Var \hat{\theta}^0_{AIPW} - \Var \hat{\theta}^0_{AMR}\right) \\
    &= \frac{1}{n}\E\left\{ - \Var\left[h(A,X)\middle|Y\right]Y^2 + h(A,X)^2 (Y^2-{Y^{*}}^2) + \Var\left[h(A,X)\middle|{Y}^{*}\right]{Y^{*}}^2\right\}\\
    &= \frac{1}{n}\E\left\{\Var\left[h(A,X)\middle|{Y}^{*}\right]{Y^{*}}^2 - \Var\left[h(A,X)\middle|Y\right]Y^2 + h(A,X)^2 (Y^2-{Y^{*}}^2)\right\}\\
    & \geq \frac{1}{n}\E\left\{ \left[h(A,X)^2 - \max\left(\Var\left[h(A,X)\middle|{Y}^{*}\right], \Var\left[h(A,X)|Y\right]\right) \right](Y^2 - {Y^{*}}^2)\right\}.
\end{align*}
\end{proof}

\subsection{Proof of \Cref{thm:DR-AMR}}\label{sec:double_robustness}
\begin{proof}[Proof of Theorem~\ref{thm:DR-AMR}]
\label{proof:DR-AMR}
We would like to prove 
$$
\hat{\theta}_{AMR} \;\overset{P}{\longrightarrow}\; \theta \;
$$
when we have either $\lVert \hat{\pi} - \pi \rVert \;\overset{P}{\longrightarrow}\;0\;$ or $\lVert \hat{\mu}^a - \mu^a \rVert \;\overset{P}{\longrightarrow}\;0,\; a\in\{0,1\}$.

We decompose $\hat{\theta}_{AMR}-\theta$ as
\begin{align*}
    \hat{\theta}_{AMR} - \theta &= \P_n\left[\hat{w}^{*}(\hat{Y}^{*})\hat{Y}^{*}\right] - \theta\\
    &= \underbrace{\P_n\left[\hat{w}^{*}(\hat{Y}^{*})\hat{Y}^{*} - w^{*0}(\hat{Y}^{*})\hat{Y}^{*}\right]}_{\text{(A)}} + \underbrace{\P_n\left[w^{*0}(\hat{Y}^{*})\hat{Y}^{*} - \hat{h}(A,X)\hat{Y}^{*}\right]}_{\text{(B)}}  \\
    & \quad + \underbrace{\P_n\left[\hat{h}(A,X)\hat{Y}^{*}\right]-\theta}_{\text{(C)}}.
\end{align*}

For (A), using the Cauchy-Schwarz inequality, we get
$$
\Big\vert \;\P_n\Big[\hat{w}^{*}(\hat{Y}^{*})\hat{Y}^{*} - w^{*0}(\hat{Y}^{*})\hat{Y}^{*}\Big]\;\Big\vert \;\leq\;\big\lVert\hat{w}^{*}(\hat{Y}^{*}) - w^{*0}(\hat{Y}^{*})\big\rVert_{n,2} \;\big\lVert\hat{Y}^{*}\big\rVert_{n,2}.
$$
where $\lVert f \rVert_{n,2} = (\int f(z)^2d\P_n(z))^{1/2}$ denotes the empirical norm.
With the consistency assumption of $\hat{w}^*$ to $w^{*0}$ and boundedness assumption  of $\hat{Y}^{*}$ in \Cref{thm:DR-AMR}, the RHS tends to zero in probability.
Thus (A) $\overset{P}{\longrightarrow} \;0$ as $n\;\rightarrow\;\infty$.

For (B), note that we have 
\begin{align} 
    \E\big[w^{*0}(\hat{Y}^{*})\hat{Y}^{*}\big] \;&= \;\E\big[\E\big[\hat{h}(A,X) \big| \hat{Y}^{*}\big]\hat{Y}^{*}\big] \nonumber\\
    &=\;\E\big[\E\big[\hat{h}(A,X)\hat{Y}^{*}\big | \hat{Y}^{*}\big]\big] \nonumber\\
    &=\;\E\big[\hat{h}(A,X)\hat{Y}^{*}\big]. \label{eqn:AMR0_exp}
\end{align}
Thus $\E\big[w^{*0}(\hat{Y}^{*})\hat{Y}^{*} - \hat{h}(A,X)\hat{Y}^{*}\big]=0$. 
Using the boundedness assumption in \Cref{thm:DR-AMR}, there exists a constant $M>0$, so that
$$\E\big[w^{*0}(\hat{Y}^{*})\hat{Y}^{*} - \hat{h}(A,X)\hat{Y}^{*}\big]^2 < M,$$
and by the weak law of large numbers, we obtain
$$
\text{(B)} \coloneqq \P_n\big[w^{*0}(\hat{Y}^{*})\hat{Y}^{*} - \hat{h}(A,X)\hat{Y}^{*}\big] \;\overset{P}{\longrightarrow}\; 0
$$
as $n \to \infty$.

(C) can be expressed as $\hat{\theta}_{AIPW} - \theta$, which we know converges to $0$ when either $\hat{\pi}$ or $\hat{\mu}^a$ is consistent, because AIPW is well-known to be doubly robust. 

Combining the results of (A), (B) and (C), we get the consistency of $\hat{\theta}_{AMR}$ to $\theta$, with only one of the propensity score model or outcome regression models being consistent. 
\end{proof}

\subsection{Proof of \Cref{thm:asymp_normal}}
\label{sec:proof_asymp_normal}
\begin{proof}[Proof of \Cref{thm:asymp_normal}]
We provide two separate proofs. 

The first one follows Theorem 5.31 in \citet{van2000asymptotic}:
\begin{equation}
\label{eqn:drift}
    \hat{\theta}^0_{AMR} - \theta =\E\varphi_{AMR}\big(Z;\theta,\hat{\xi}\big) + (\P_n-\E)\varphi_{AMR}(Z;\theta,\xi) + \operatorname{o}_\P\!\big\{n^{-1/2}+\lvert \E\varphi_{AMR}\big(Z;\theta,\hat{\xi}\big) \rvert \big\}.
\end{equation}
As in (\ref{eqn:AMR0_exp}), we have 
\begin{align*}
    \E\varphi_{AMR}\big(Z;\theta,\hat{\xi}\big) =\E\varphi_{AIPW}\big(Z;\theta,\hat{\xi}\big) .
\end{align*}
Thus we can leverage the proof of asymptotic normality of $\hat{\theta}_{AIPW}$. When $\sqrt{n}(\hat{\theta}_{AIPW}-\theta)\;\rightsquigarrow\;\mathcal{N}(0,\Var\{\psi_{AIPW}(Z;\theta,\xi)\})$ \citep{chernozhukov2018double,kennedy2024semiparametric}, we have
$$
\E\varphi_{AMR}\big(Z;\theta,\hat{\xi}\big) = \E\varphi_{AIPW}\big(Z;\theta,\hat{\xi}\big) = \operatorname{o}_\P(n^{-1/2}).
$$
Plugging this back into (\ref{eqn:drift}), we get
$$
\hat{\theta}^0_{AMR} - \theta  = (\P_n-\E)\varphi_{AMR}(Z;\theta,\xi) + \operatorname{o}_\P(n^{-1/2}).
$$
Thus we have $\sqrt{n}\big(\hat{\theta}^0_{AMR} - \theta\big) \;\rightsquigarrow\; \mathcal{N}\big(0, \sigma^2_{AMR,0}\big)$ where $\sigma^2_{AMR,0} = \Var\left\{\varphi_{AMR}(Z;\theta,\xi)\right\}$.

The other approach is to use $\E\hat{\theta}^0_{AMR} = \E\big[w^{*0}(\hat{Y}^{*})\hat{Y}^{*}\big]=\E\big[\hat{h}(A,X)\hat{Y}^{*}\big]=\E\hat{\theta}_{AIPW}$ that is proved in (\ref{eqn:AMR0_exp}) as the bridge to connect $\hat{\theta}^0_{AMR}$ with $\hat{\theta}_{AIPW}$. 
We write 
\begin{align*}
    \hat{\theta}^0_{AMR} - \theta
    &= \hat{\theta}^0_{AMR} - \E\hat{\theta}^0_{AMR}+\E\hat{\theta}^0_{AMR} - \hat{\theta}_{AIPW} + \hat{\theta}_{AIPW}- \theta\\
    &= \underbrace{\hat{\theta}^0_{AMR} - \E\hat{\theta}^0_{AMR}}_{\text{(A)}}+\underbrace{\E\hat{\theta}_{AIPW} - \hat{\theta}_{AIPW}}_{\text{(B)}} + \underbrace{\hat{\theta}_{AIPW}- \theta}_{\text{(C)}}\;
\end{align*}

Using the central limit theorem, we have 
$$
\sqrt{n}\big(\hat{\theta}^0_{AMR} - \E\hat{\theta}^0_{AMR}\big)\;\rightsquigarrow\; \mathcal{N}\big(0,\Var\hat{\theta}^0_{AMR}\big)
$$
and 
$$
\sqrt{n}\big(\hat{\theta}_{AIPW} - \E\hat{\theta}_{AIPW}\big)\;\rightsquigarrow\;\mathcal{N}\big(0,\Var\hat{\theta}_{AIPW}\big).
$$ We also have 
$$ 
\sqrt{n}\big(\hat{\theta}_{AIPW} - \theta\big) \;\rightsquigarrow\;
 \mathcal{N}(0, \Var\{\varphi_{AIPW}(Z;\theta,\xi)\})
 $$ when we have conditions for asymptotic normality of $\hat{\theta}_{AIPW}$ satisfied.

Combining (A), (B) and (C),
we get $\sqrt{n}\big(\hat{\theta}^0_{AMR}-\theta\big)\;\rightsquigarrow\; \mathcal
{N}\big(0,\sigma^2_{AMR,0}\big)$, but we still need to show that $\sigma^2_{AMR,0} = \Var\{\varphi_{AMR}(Z;\theta,\xi)\}$, which is not obvious using this method.
\end{proof}
\subsection{Proof of \Cref{prop:eb_comparison}}
\begin{proof}[Proof of \Cref{prop:eb_comparison}]
The proof is similar to the proof of \Cref{prop:variance_comparison_oracle}, but we give a different derivation here. 

As we know that $\E\left[h(A,X)Y^*\right] = \E\left[w^*(Y^*)Y^*\right]= \theta$, we have
\begin{align*}
    \Var\{\varphi_{AIPW}(Z;\theta,\xi)\} -     \Var\{\varphi_{AMR}(Z;\theta,\xi)\} &=     \E\left[h(A,X)Y^*-\theta\right]^2-\E\left[w^*(Y^*)Y^*-\theta\right]^2 \\
    &=\E\left[h(A,X)Y^*\right]^2-\E\left[w^*(Y^*)Y^*\right]^2\\
    &= \Var\{h(A,X)Y^*\}-\Var\{w^*(Y^*)Y^*\}.
\end{align*}

Remember that $w^*(Y^*)=\E[h(A,X)|Y^*]$ and $\E\left\{\left[h(A,X)-w^*(Y^*)\right]f(Y^*)\right\}=0$ for any $f(Y^*)$ that is a function of $Y^*$ only. We look at the term $\Var\{h(A,X)Y^*\}$:
\begin{align*}
    \Var\{h(A,X)Y^*\}&=\Var\{h(A,X)Y^*-w^*(Y^*)Y^*+w^*(Y^*)Y^*\}\\
    &=\Var\left\{w^*(Y^*)Y^*\right\} + \Var\left\{h(A,X)Y^*-w^*(Y^*)Y^*\right\} \\
    &\qquad\qquad + 2\Cov[h(A,X)Y^*-w^*(Y^*)Y^*, w^*(Y^*)Y^*].
\end{align*}
We know that $\E[h(A,X)Y^*-w^*(Y^*)Y^*]=0$. Thus, 
\begin{align*}
    \Cov[h(A,X)Y^*-w^*(Y^*)Y^*, w^*(Y^*)Y^*] &= \E\big\{\!\left[h(A,X)Y^*-w^*(Y^*)Y^* \right] w^*(Y^*)Y^*\big\}\\
    &= \E\big\{\!\left[h(A,X)-w^*(Y^*) \right] w^*(Y^*){Y^*}^2\big\}\\
    &= \E\big\{\!\left[h(A,X)-w^*(Y^*) \right] f(Y^*)\big\}\\
    &=0.
\end{align*}
Thus
\begin{align*}
    \lefteqn{\Var\{\varphi_{AIPW}(Z;\theta,\xi)\} -     \Var\{\varphi_{AMR}(Z;\theta,\xi)\}}\\ 
    &= 
 \Var\{h(A,X)Y^*\}-\Var\{w^*(Y^*)Y^*\}\\
 &=\Var\left\{w^*(Y^*)Y^*\right\} + \Var\left\{h(A,X)Y^*-w^*(Y^*)Y^*\right\}-\Var\left\{w^*(Y^*)Y^*\right\}\\
&=\Var\left\{h(A,X)Y^*-w^*(Y^*)Y^*\right\}\geq0.
\end{align*}
The difference, $\Var\left\{h(A,X)Y^*-w^*(Y^*)Y^*\right\}$ is zero only when either $\P\big\{h(A,X)=\E\left[h(A,X)\middle| Y^*\right]\!\big\}=1$ or $\P\left(Y^*=0\right)=1$.
\end{proof}

\subsection{Proof of \Cref{thm:asymp_normal_strict}}
\begin{proof}[Proof of \Cref{thm:asymp_normal_strict}]
Based on the proof above, we further denote $\epsilon^{*}(\hat{Y}^{*}) \coloneqq \hat{w}^{*}(\hat{Y}^{*})-w^{*0}(\hat{Y}^{*})$. We make this decomposition:
\begin{align}
\label{eqn:amr_decomp}
    \hat{\theta}_{AMR} - \theta &= \P_n\big[\hat{w}^{*}(\hat{Y}^{*})\hat{Y}^{*}\big] - \theta \nonumber \\
    &= \P_n\big[(w^{*0}(\hat{Y}^{*})+\epsilon^{*}(\hat{Y}^{*}) )\hat{Y}^{*}\big] - \theta \nonumber \\
    &=\underbrace{\P_n\big[\epsilon^{*}(\hat{Y}^{*})\hat{Y}^{*}\big]}_{\text{(A)}} + \underbrace{\P_n\big[w^{*0}(\hat{Y}^{*})\hat{Y}^{*}\big] - \theta}_{\text{(B)}}.
\end{align}
We have assumed $\E \hat{Y}^{*2} < M$ and $\big\lVert\hat{w}^{*}(\hat{Y}^{*})-w^{*0}(\hat{Y}^{*})\big\rVert_2 = \operatorname{o}_\P(n^{-1/2})$. Thus, using the Cauchy-Schwarz inequality, we get
\begin{align*}
  \big\vert \P_n\big[\hat{w}^{*}(\hat{Y}^{*})\hat{Y}^{*} - w^{*0}(\hat{Y}^{*})\hat{Y}^{*}\big]\big\vert \;&\leq\;\big\lVert\hat{w}^{*}(\hat{Y}^{*}) - w^{*0}(\hat{Y}^{*})\big\rVert_2 \;\big\lVert\hat{Y}^{*}\big\rVert_2\;\\
    & = \operatorname{o}_\P(n^{-1/2}). 
\end{align*}
Given we have the asymptotic normality of (B) in \Cref{thm:asymp_normal}, $\sqrt{n}\big(\hat{\theta}_{AMR}-\theta\big) \;\rightsquigarrow\; \mathcal{N}\big(0, \sigma^2_{AMR,0}\big)$ follows.
\end{proof}

\subsection{Proof of \Cref{prop:conservative_ci}}
\begin{proof}[Proof of \Cref{prop:conservative_ci}]
With the consistency of $\hat{\theta}_{AIPW}$, $\hat{\theta}^0_{AMR}$ and $\hat{\theta}_{AMR}$ to $\theta$, we obtain that $(\hat{\sigma}'_{AMR})^2 = \P_n\big[\hat{h}(A,X)\hat{Y}^* - \hat{\theta}_{AMR}\big]^2$ is a consistent estimator of $ \Var\varphi_{AIPW}(Z;\theta,\xi)$ and $\hat{\sigma}^2_{AMR}=\P_n\big[w^*(\hat{Y}^*)\hat{Y}^* - \hat{\theta}_{AMR}\big]^2$ a consistent estimator of $\sigma^2_{AMR,0}=\Var\psi_{AMR}(Z;\theta,\xi)$.
 
Recall that we have
$\sqrt{n}(\hat{\theta}^0_{AMR}-\theta)/\sigma_{AMR,0} \;\rightsquigarrow\; \mathcal{N}\left(0,1\right)$, so using Slutsky's theorem, we have $\sqrt{n}(\hat{\theta}^0_{AMR}-\theta)/\hat{\sigma}_{AMR}\;\rightsquigarrow\; \mathcal{N}\left(0,1\right)$. The Wald-type confidence interval is thus constructed as 
\begin{equation}
        \left[\hat{\theta}^0_{AMR}-z_{1-\alpha/2}\frac{\hat{\sigma}
    _{AMR}}{\sqrt{n}}, \; \hat{\theta}^0_{AMR}+z_{1-\alpha/2}\frac{\hat{\sigma}
    _{AMR}}{\sqrt{n}}\right].
\label{eqn:ci_oracle}
\end{equation}

We do the following transformation of (\ref{eqn:ci_oracle}):
\begin{align*}
    \hat{\theta}_{AMR} - z_{1-\alpha/2}\frac{\hat{\sigma}'_{AMR}}{\sqrt{n}} & = \hat{\theta}_{AMR} - \hat{\theta}^0_{AMR}+\hat{\theta}^0_{AMR} - z_{1-\alpha/2}\frac{\hat{\sigma}'_{AMR}-\hat{\sigma}_{AMR} + \hat{\sigma}_{AMR}}{\sqrt{n}} \\
    & = \underbrace{\hat{\theta}^0_{AMR} - z_{1-\alpha/2}\frac{\hat{\sigma}_{AMR}}{\sqrt{n}}}_{\text{(A$_1$)}} +  \underbrace{\hat{\theta}_{AMR} - \hat{\theta}^0_{AMR}}_{\text{(B)}} -  \underbrace{z_{1-\alpha/2}\frac{\hat{\sigma}'_{AMR}-\hat{\sigma}_{AMR}}{\sqrt{n}}}_{\text{(C)}}\;;\\
    \hat{\theta}_{AMR} + z_{1-\alpha/2}\frac{\hat{\sigma}'_{AMR}}{\sqrt{n}} & = \hat{\theta}_{AMR} - \hat{\theta}^0_{AMR}+\hat{\theta}^0_{AMR} + z_{1-\alpha/2}\frac{\hat{\sigma}'_{AMR}-\hat{\sigma}_{AMR} + \hat{\sigma}_{AMR}}{\sqrt{n}} \\
    & = \underbrace{\hat{\theta}^0_{AMR} + z_{1-\alpha/2}\frac{\hat{\sigma}_{AMR}}{\sqrt{n}}}_{\text{(A$_2$)}} +  \underbrace{\hat{\theta}_{AMR} - \hat{\theta}^0_{AMR}}_{\text{(B)}} +  \underbrace{z_{1-\alpha/2}\frac{\hat{\sigma}'_{AMR}-\hat{\sigma}_{AMR}}{\sqrt{n}}}_{\text{(C)}}\;.
\end{align*}
Note (A$_1$) and (A$_2$) are the lower- and upper-bounds of (\ref{eqn:ci_oracle}). If the adjusted lower bound is no smaller than the oracle lower bound and the adjusted upper bound is no larger than the oracle upper bound, then the overall confidence interval has at least the nominal coverage. In order to ensure that the adjusted interval is (asymptotically) wider than the oracle interval, it is sufficient to have that
$\lvert$(B)$\rvert \leq \lvert$(C)$\rvert$ 
with high probability.
$$\frac{\sqrt{n}}{z_{1-\alpha/2}}\lvert(\text{C})\rvert =  \bigg\vert\sqrt{\P_n\big\{\hat{h}(A,X)\hat{Y}^* - \P_n\big[\hat{w}^*(\hat{Y}^*)\hat{Y}^*\big]\big\}^2}-\sqrt{\P_n\big\{\hat{w}^*(\hat{Y}^*)\hat{Y}^*\ - \P_n\big[\hat{w}^*(\hat{Y}^*)\hat{Y}^*\big]\big\}^2} \bigg \vert$$ 
 converges to $\Delta = \Var\{\varphi_{AIPW}(Z;\theta,\xi)\}-\Var\{\varphi_{AMR}(Z;\theta,\xi)\}$, since 
\begin{align*}
    \P_n\left\{\hat{h}(A,X)\hat{Y}^* - \P_n\left[\hat{w}^*(\hat{Y}^*)\hat{Y}^*\right]\right\}^2 &\overset{P}{\longrightarrow }\Var\{\varphi_{AIPW}(Z;\theta,\xi)\}\\
    \P_n\left\{\hat{w}^*(\hat{Y}^*)\hat{Y}^* - \P_n\left[\hat{w}^*(\hat{Y}^*)\hat{Y}^*\right]\right\}^2 &\overset{P}{\longrightarrow } \Var\{\varphi_{AMR}(Z;\theta,\xi)\}
\end{align*}
and $\Var\{\varphi_{AIPW}(Z;\theta,\xi)\} \geq \Var\{\varphi_{AMR}(Z;\theta,\xi)\}$.

On the other hand, if we have $\lvert$(B)$\rvert = \big\vert\P_n[(\hat{w}^*-w^{*0})(\hat{Y}^*)\hat{Y}^*]\big\vert = \operatorname{O}_\P(n^{-1/2})$, then under the conditions in \Cref{prop:conservative_ci} we obtain $\lvert$(B)$\rvert \leq \lvert$(C)$\rvert$. The adjusted confidence interval is thus conservative (its coverage is at least $1-\alpha$).

This condition is plausible in our setting, especially when there is severe lack of overlap (which increases $\Delta$) thereby ensuring that the additional error from replacing $\hat{\theta}^0_{AMR}$ by $\hat{\theta}_{AMR}$ is dominated by the difference in the variance estimators.
\end{proof}
\section{Weights estimation}
\label{sec:weights_estimation_models_overview}
The design of our estimator relies on the estimation of the weight function, $\hat{w}$ and $\hat{w}^{*}$. In our experiments, we tried several univariate regression methods to estimate the weights function via minimizing the $L_2$ loss defined in (\ref{eqn:loss_weights_mr}) and (\ref{eqn:loss_weights_amr}).  We  present some observations from the empirical results with different univariate regression models here:

\paragraph{Nadaraya-Watson regression \citep{nadaraya1964estimating, watson1964smooth}} Although the estimates $\hat{\theta}_{AMR}$ and $\hat{\theta}_{MR}$ generally show small bias comparable to doubly robust estimators, the variance is relatively high. This characteristic sensitivity to the choice of bandwidth regarding bias-variance trade-off, and the high variance especially near the boundary when data is sparse is a known property of the Nadaraya-Watson estimator \citep{wasserman2006all}. In the experiments we also observed that increasing the order of kernel leads to bias reduction, but variance inflation.
\paragraph{Kernel ridge regression} This method performs best in terms of MSE, providing a significant reduction compared to doubly robust estimators, particularly when the sample size is small. However, the ridge regularization term introduces bias, which can be substantial with small sample sizes. In our experiments, kernel ridge regression can achieve asymptotic unbiasedness if the regularization parameter decreases appropriately as the sample size increases. The regularization term manages the trade-off between bias and variance; as it diminishes, the estimator can converge to the true underlying function, provided it lies within the RKHS induced by the kernel.
\paragraph{Nonparametric Kernal-based Local Polynomial regression \citep{calonico2018effect}}  We use the package \texttt{nprobust} \citep{calonico2019nprobust} for implementation of this model. This method offers lower bias compared to other approaches, but it exhibits relatively higher variance, thus higher MSE of $\hat{\theta}_{AMR}$ than estimation produced by kernel ridge regression and $\hat{\theta}_{AIPW}$. The higher variance is due to fitting a separate model for each prediction point, making the method sensitive to noise and outliers. Additionally, it is computationally intensive and unstable, as it involves solving a weighted least squares problem at each prediction point.

\section{Derivation of weights in \Cref{ex:explict_weight}}
\label{sec:true_weights_derivation}
In this section we provide the derivation of weights $w(y)$, $w^0(y)$, $w^{*}(y^{*})$ and $w^{*0}(y^{*})$ in \Cref{ex:explict_weight}.  As demonstrated in \Cref{ex:explict_weight}, we assume $Y^a|X \sim \mathcal{N}(\mu^a(X),\sigma^2)$, $a\in \{0,1\}$, and $A|X \sim \textit{Bernoulli}(\pi(X))$. Further, let $\mathcal{N}(x; a, b)$ denote the density of $x$ under Gaussian distribution with mean $a$ and variance $b$. We have:
\begin{align*}
    Y& = AY^1 + (1-A)Y^0, \\
    Y^a&\mid X\sim N\left(\mu^a(X), \sigma^2\right).
\end{align*}
We also have:
\begin{align*}
    p\left(A=a,X=x\middle| Y=y\right) &= \frac{p_{AXY}(a,x,y)}{p_Y(y)}\\
    &= \frac{p_{Y\mid AX}\left(y\middle| a,x\right)\,p_{A\mid X}\left(a\middle| x\right)\,p_X(x)}{\int_\mathcal{X}\int_\mathcal{A}p_{Y\mid AX}\left(y\middle| a,x\right)\,p_{A\mid X}\left(a\middle| x\right)\,p_X(x)\,\mathrm{d}a\,\mathrm{d}x}; \\
    p\left(A=1,X=x \middle| Y=y\right) &= \pi(x)\,\mathcal{N}(y;\; \mu^1(x),\;\sigma^2)\,p_X(x)/p_Y(y)\\
    p\left(A=0,X=x \middle| Y=y\right) &= \left\{1-\pi(X)\right\}\,\mathcal{N}(y; \;\mu^0(x),\;\sigma^2)\,p_X(x)/p_Y(y)\\
    p_Y(y) &=  \int_\mathcal{X}\int_\mathcal{A}p_{Y\mid A,X}\left(y\middle| 
 a,x\right)\,p_{A\mid X}\left(a\middle| x\right)p_X(x)\,\mathrm{d}a\,\mathrm{d}x\\
    &=\int_\mathcal{X}\left[\pi(x)\mathcal{N}(y;\;\mu^1(x),\sigma^2) + \left\{1-\pi(x)\right\}\mathcal{N}(y;\;\mu^0(x),\;\sigma^2)\right]p_X(x)\,\mathrm{d}x \\
    &= \E_X\left[\pi(X)\mathcal{N}(y;\;\mu^1(X),\;\sigma^2) + \left\{1-\pi(X)\right\}\mathcal{N}(y;\;\mu^0(X),\;\sigma^2)\right].
\end{align*}
%
%
We then find:
\begin{align*}
    w^0(y)&=\E\left[\frac{A-\hat{\pi}(X)}{\hat{\pi}(X)\left\{1-\hat{\pi}(X)\right\}}\middle| Y=y\right]\\
    &= \int_\mathcal{X}\int_\mathcal{A} \frac{a-\hat{\pi}(x)}{\hat{\pi}(x)\left\{1-\hat{\pi}(X)\right\}}p\left(a,x\middle| Y=y\right)\mathrm{d}x\mathrm{d}a\\
    &= \frac{1}{p_Y(y)}\int_\mathcal{X} \left\{\frac{1}{\hat{\pi}(x)}p\left(A=1,X=x\middle| Y=y\right) -\frac{1}{1-\hat{\pi}(x)} p\left(A=0,X=x\middle| Y=y\right) \right\} \mathrm{d}x\\
    &= \frac{1}{p_Y(y)}\int_\mathcal{X}\left\{\frac{\pi(x)}{\hat{\pi}(x)}\mathcal{N}\big(y;\; \mu^1(x),\;\sigma^2\big)-\frac{1-\pi(x)}{1-\hat{\pi}(x)}\mathcal{N}\big(y;\; \mu^0(x),\;\sigma^2\big)\right\}p_X(x) \, \mathrm{d}x\\
    &= \frac{\E_X\left[\frac{\pi(x)}{\hat{\pi}(x)}\mathcal{N}\big(y;\; \mu^1(X),\;\sigma^2\big) - \frac{1-\pi(x)}{1-\hat{\pi}(x)}\mathcal{N}\big(y;\; \mu^0(X),\;\sigma^2\big)\right]}{ \E_X\left[\pi(X)\mathcal{N}\big(y;\;\mu^1(X),\;\sigma^2\big) + \left\{1-\pi(X)\right\}\mathcal{N}\big(y;\;\mu^0(X),\;\sigma^2\big)\right]}.
\end{align*}
In the case where we have an oracle estimator $\hat\pi = \pi$, we obtain
\begin{align*}
    w(y)
    &= \frac{\E_X\left[\mathcal{N}\big(y;\; \hat\mu^1(X),\;\sigma^2\big) - \mathcal{N}\big(y;\; \hat\mu^0(X),\;\sigma^2\big)\right]}{ \E_X\left[\pi(X)\mathcal{N}\big(y;\;\hat\mu^1(X),\;\sigma^2\big) + \left\{1-\pi(X)\right\}\mathcal{N}\big(y;\;\hat\mu^0(X),\;\sigma^2\big)\right]},
\end{align*}
which is clearly different from $w^0(y)$.

Similarly, we provide the derivation of AMR weights. With $\hat{Y}^{*} = Y-\hat{\mu}^{*}$, we have:
\begin{align*}
    \hat{Y}^{*} &= AY^1 + (1-A)Y^0 -\left[\hat{\pi}(X)\hat{\mu}^0(X) + \left\{1-\hat{\pi}(X)\right\}\hat{\mu}^1(X)\right]\\
    \hat{Y}^{*}&\mid A=1, X \sim \mathcal{N}\big(\mu^1(X) -\hat{\mu}^{*}, \sigma^2\big)\\
    \hat{Y}^{*}&\mid A=0, X \sim \mathcal{N}\big(\mu^0(X) -\hat{\mu}^{*}, \sigma^2\big).
\end{align*}
We also have:
\begin{align*}
    p(A=a,X=x|\hat{Y}^{*}=y^*) &= \frac{p_{AX\hat{Y}^*}(a,x,y^{*})}{p_{\hat{Y}^{*}}(y^{*})}\\
    &= \frac{p_{\hat{Y}^{*} \mid AX}(y^* \mid a,x)\,p_{A \mid X}(a\mid x)\,p_X(x)}{\int_\mathcal{X}\int_\mathcal{A}p_{\hat{Y}^{*} \mid AX}(y^*\mid a,x)\,p_{A\mid X}(a \mid x)\,p_X(x)\mathrm{d}a\,\mathrm{d}x} \\
    p(A=1,X=x \mid \hat{Y}^{*}=y^{*}) &= \pi(x)\mathcal{N}\big(y^{*}; \mu^1(x) - \hat{\mu}^{*},\sigma^2\big)\,p_X(x)/p_{\hat{Y}^{*}}(y^{*})\\
    p(A=0,X=x \mid \hat{Y}^{*}=y^{*}) &= \left\{1-\pi(x)\right\}\mathcal{N}\big(y^{*}; \mu^0(x) -\hat{\mu}^{*},\sigma^2\big)\,p_X(x)/p_{\hat{Y}^{*}}(y^{*}),
\end{align*}
where 
\begin{align*}
p_{\hat{Y}^{*}}(y^{*}) &=  \int_\mathcal{X}\int_\mathcal{A}p_{\hat{Y}^{*} \mid AX}(y^*\mid a,x)\,p_{A\mid X}(a \mid x)\,p_X(x)\mathrm{d}a\,\mathrm{d}x\\
    &=\int_\mathcal{X}\left[\pi(x)\mathcal{N}\big(y^{*};\mu^1(x) -\hat{\mu}^{*},\sigma^2\big) + \left\{1-\pi(x)\right\}\mathcal{N}\big(y^{*};\mu^0(x) -\hat{\mu}^{*},\sigma^2\big)\right]\,p_X(x)\mathrm{d}x\\
    &= \E_X\Bigl[\pi(X)\mathcal{N}\big(y^{*};\mu^1(X) -\hat{\mu}^{*},\sigma^2\big) + \left\{1-\pi(X)\right\}\mathcal{N}\big(y^{*};\mu^0(X) -\hat{\mu}^{*},\sigma^2\big)
    \Bigr].
\end{align*}
We thus obtain:
\small
\begin{align*}
    w^{*0}(y^{*}) &= \E\left[\frac{A-\hat{\pi}(X)}{\hat{\pi}(X)\left\{1-\hat{\pi}(X)\right\}}\middle|\hat{Y}^{*}=y^{*}\right] \\
    &= \int_\mathcal{X}\int_\mathcal{A} \frac{a-\hat{\pi}(x)}{\hat{\pi}(x)\left\{1-\hat{\pi}(X)\right\}}p_{AX\mid \hat{Y}^*}\left(a,x \middle|y^{*}\right)\mathrm{d}x\mathrm{d}a\\
    &= \frac{1}{p_{\hat{Y}^{*}}(y^{*})}\int_\mathcal{X}\left[\frac{1}{\hat{\pi}(x)}p\left(A=1,X=x\middle|\hat{Y}^{*}=y^{*}\right) - \frac{1}{1-\hat{\pi}(x)}p\left(A=0,X=x\middle|\hat{Y}^{*}=y^{*}\right)\right]\mathrm{d}x\\
    &=  \frac{\E_X\left[\frac{\pi(X)}{\hat{\pi}(X)}\mathcal{N}\big(y^{*};\mu^1(X) - \hat{\mu}^{*},\sigma^2\big) - \frac{1-\pi(X)}{1-\hat{\pi}(X)}\mathcal{N}\big(y^{*};\mu^0(X) - \hat{\mu}^{*},\sigma^2\big)\right]}{ \E_X\Bigl[\pi(X)\mathcal{N}\big(y^{*};\mu^1(X) - \hat{\mu}^{*},\sigma^2\big)+ \left\{1-\pi(X)\right\}\mathcal{N}\big(y^{*};\mu^0(X) - \hat{\mu}^{*},\sigma^2\big)\Bigr]}.
\end{align*}
\normalsize

When we have oracle estimators $\hat{\pi}=\pi, \hat{\mu}^a=\mu^a$ for $a\in\{0,1\}$, we find
\begin{align*}
    w^{*}(y^{*})  =
    \frac{\E_X\Bigl[\mathcal{N}\big(y^{*};\pi(X)\tau(X),\sigma^2\big) - \mathcal{N}\big(y^{*}; -\left\{1-\pi(X)\right\}\tau(X),\sigma^2\big)\Bigr]}{ \E_X\Bigl[\pi(X)\mathcal{N}\big(y^{*};\pi(X)\tau(X),\sigma^2\big)+ \left\{1-\pi(X)\right\}\mathcal{N}\big(y^{*};-\left\{1-\pi(X)\right\}\tau(X),\sigma^2\big)\Bigr]},
\end{align*} where $\tau(x) = \mu^1(x)-\mu^0(x)$. Clearly, $w^*$ and $w^{*}$ are different, since they take the conditional expectation on different variable spaces---$w^*$ is on $\sigma(Y^*)$, while $w^{*0}$ is on $\sigma(\hat{Y}^{*})$. 

\section{Additional experiment: Amazon review}\label{sec:amazon}
In this section, we compare the performance of our method with the Treatment Ignorant (TI) estimator proposed by \citet{gui2022causal} on the publicly available Amazon review dataset. We mainly use reviews for mp3, CD and vinyl, setting the treatment $A$ as whether the review is five stars ($A=1$) or one/two stars ($A=0$). We use the same semi-synthetic data generation setting as created in \citet{gui2022causal} to simulate outcome $Y$, that is, we additionally set $C=1$ if the product is a CD and $C=0$ otherwise, $Y=\beta_a A+\beta_c(\pi(C) - \beta_o)+\gamma \epsilon$, $\epsilon \sim \mathcal{N}(0,1)$. Here $\beta_a, \beta_c, \beta_o$ stand for the average treatment effect, the strength of confounding and the offset term for $\E\pi(C)$, where $\pi(C) = P(A=1|C)$ is the propensity score. The dataset has 10,658 entries in total. 

The TI estimator addresses the challenges of insufficient overlap in causal effect estimation on text data, by estimating propensity scores conditional on the potential outcomes, i.e.~$\hat{\pi}=\E\left[ A \middle| \hat{\mu}^0,\hat{\mu}^1\right]$. Specifically, the authors propose a transformer-based neural network architecture similar to the DragonNet, that first learns the potential outcomes leveraging the propensity score estimation information by loss function design, and then finalize the propensity scores estimation using the estimated potential outcomes as features. This approach closely resembles \citet{benkeser2020nonparametric} but is tailored for text-based applications in that TI estimator is a transformer, attention-based architecture.

To ensure a fair comparison with the TI estimator, we used BERT to produce embeddings as covariates for each Amazon review. Specifically, we pooled token embeddings for each review by using the \texttt{[CLS]} embedding. These embeddings were then fed into a 3-layer feedforward neural network model to estimate $\hat{\pi}$. For $\hat{\mu}^a, a\in \{0,1\}$, we keep them the same as those used in the TI estimator.

We first provide results in comparison with TI estimator under its recommended configurations in model building and data simulation ($\beta_a=2,\;\beta_c=50,\;\pi(0)=0.8,\;\pi(1)=0.6$) in \Cref{fig:default_music} and \Cref{tab:performance_default_music}. We randomly sampled 1000 data points from the dataset. Although we did not perform any hyperparameter tuning for our nuisance parameter models, the performance of AMR and MR remains comparable.

\begin{figure}
    \centering
    \includegraphics[width=0.8\linewidth]{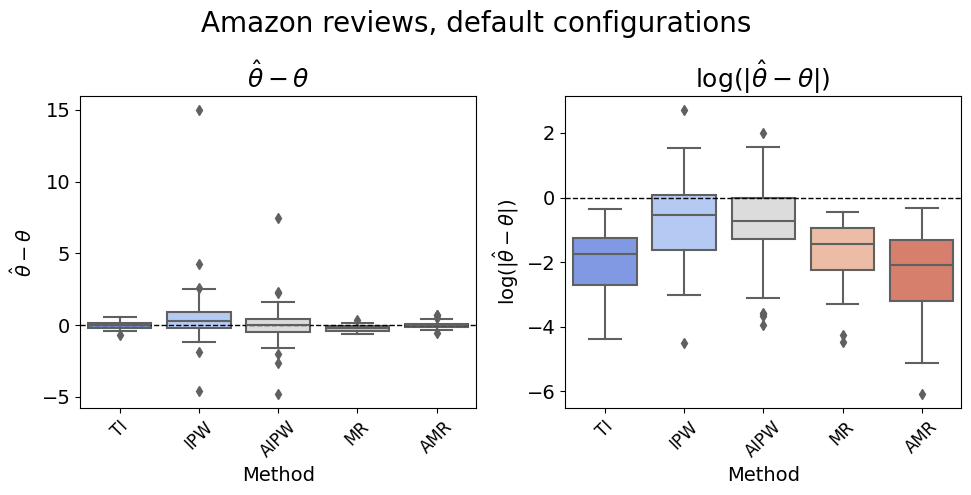}
    \caption{Estimator performances on $K=50$ draws of  \textit{Amazon Reviews} dataset, $n=1000$, default configuration.}
    \label{fig:default_music}
\end{figure}

\begin{table}
\begin{center}
\begin{tabular}{rrrrrr}
\toprule
     & \textbf{TI}& \textbf{IPW} & \textbf{AIPW} & \textbf{MR} & \textbf{AMR}\\
\midrule
     Bias & $-0.0300$ & $0.622$ & $-0.0634$ & $-0.231$ &$\textbf{0.0201}$ \\

    MAE & $0.192$ & $1.12$ & $0.901$ & $0.252$ & $\textbf{0.182}$ \\

   RMSE & $0.252$ & $2.45$ & $1.55$ & $0.312$ & $\textbf{0.251}$\\
\bottomrule
\end{tabular}
\end{center}
\caption{Estimator performances on $K=50$ draws of  \textit{Amazon Reviews} dataset, $n=1000$, default configuration, $\theta=2$.}
\label{tab:performance_default_music}
\end{table} 

However, the main limitation of the TI estimator is apparent---it hinges on accurately estimating potential outcomes. Under the default configuration recommended by the authors, these estimates are very close to the true underlying values, explaining the strong performance of the TI. We present results with the data simulation changed by introducing extra interaction term in the true potential outcome (\Cref{tab:performance_new_df_music}) and replacing the transformer-based architecture in TI with linear regression for potential outcome estimation (\Cref{tab:performance_replaced_music}). TI's performance degrades immediately in both scenarios, aligning with our conjecture about its strong dependence on accurate outcome estimation. On the contrary, the performances of MR and AMR remain relatively stable. This result further underscores the double robustness property of AMR. Even under model misspecification of $\hat{\mu}^a$, AMR produces relatively accurate estimations, saving a lot of computation time each run compared to TI without sacrificing the performance.

\begin{table}
\begin{center}
\begin{tabular}{rrrrrr}
\toprule
     & \textbf{TI}& \textbf{IPW} & \textbf{AIPW} & \textbf{MR} & \textbf{AMR}\\
\midrule
    Bias & $0.360$ & $-13.5$ & $-0.150$ & $1.82$ & $\textbf{0.000112}$\\

    MAE & $1.05$ & $16.7$ & $16.3$ & $1.87$ & $\textbf{0.723}$ \\

    RMSE & $1.92$  & $30.5$ & $27.3$ & $1.92$ & $\textbf{1.06}$\\
\bottomrule
\end{tabular}
\end{center}
\caption{Estimator performances on $K=50$ draws of  \textit{Amazon Reviews} dataset, $n=1000$, data simulation changed, $\theta=0.2$.}
\label{tab:performance_new_df_music}
\end{table}

\begin{table}
\begin{center}
\begin{tabular}{rrrrrr}
\toprule
     & \textbf{TI}& \textbf{IPW} & \textbf{AIPW} & \textbf{MR} & \textbf{AMR}\\
\midrule
    Bias & $-2.58\times 10^6$ & $0.372$ & $5.23\times 10^7$ &$\textbf{-0.221}$ &$0.803$\\

    MAE & $~2.58\times 10^6$ & $-0.791$ & $6.31 \times 10^7$ & $\textbf{~0.242}$ & $2.53$ \\

    RMSE &  $~1.13\times 10^7$  & $1.37$ & $3.75\times 10^7$ & $\textbf{~0.283}$ & $9.43$\\
\bottomrule
\end{tabular}
\end{center}
\caption{Estimator performances on $K=50$ draws of  \textit{Amazon Reviews} dataset, $n=1000$, linear model for potential outcome, $\theta=3$.}
\label{tab:performance_replaced_music}
\end{table} 

\section{Analogue to e-score}
\label{sec:e-score}
We propose two estimators following similar design idea of AMR and e-score, which solely considers re-weighting the bias $Y-\mu^1$. Same as in \citet{diaz2018doubly}, we derive this population estimator for ATT.
\begin{align*}
    \theta^{AMR-e'}_{ATT} &= \E\left[\frac{A}{\pi(X)}(Y-\mu^1(X)) + \mu^1(X)\right]\\
    &=  \E\left[\E\left[\frac{A}{\pi(X)}\middle | (Y-\mu^1(X))\right] (Y-\mu^1(X)) + \mu^1(X)\right]\;\\
    &\coloneqq \E\left[\frac{1}{e'}(Y-\mu^1(X)) + \mu^1(X)\right],
\end{align*}
where $e'=1/\E\left[\frac{A}{\pi(X)}\middle| (Y-\mu^1(X))\right] $. 

Or, we can leverages e-score's first stage regression, $r=\E\left[Y-\mu^1(X) \middle| A=1,\pi(X) \right]$:
\begin{align*}
    \theta^{AMR-e}_{ATT} &= \E\left[\frac{A}{\pi(X)}(Y-\mu^1(X)) + \mu^1(X)\right]\\
     &= \E\left[\E\left[\frac{A}{\pi(X)}(Y-\mu^1(X)) + \mu^1(X) \middle| A=1, \pi(X)\right]\right]\\
     & = \E\left[\frac{A}{\pi(X)}\E\left[(Y-\mu^1(X))\middle| A=1, \pi(X)\right]+ \E\left[\mu^1 \middle| A=1, \pi(X)\right]\right]\\
     & =  \E\left[\frac{A}{\pi(X)}r  \right] + \mu^1(X)\\
     & =  \E\left[\E\left[\frac{A}{\pi(X)}\middle| r\right]r \right] + \mu^1(X)\\
     & \coloneqq \E\left[\frac{1}{e}r  + \mu^1(X)\right].
\end{align*}
where $e = 1/\E\left[\frac{A}{\pi(X)}\middle|r\right]$. 
Apparently, the second version would have the same problem as the original e-score where there are more regression steps (and decision points), which might result in higher risk of propagating errors.


\end{document}